\newcommand{\Z}{\mathbb{Z}}
\newcommand{\R}{\mathbb{R}}
\newcommand{\F}{\mathbb{F}}
\newcommand{\Ss}{\mathbb{S}}
\newcommand{\cR}{{\cal R}}
\newcommand{\cS}{{\cal S}}
\newcommand{\cA}{{\cal A}}
\DeclareMathOperator{\E}{\mathbb{E}}
\newcommand{\n}[1]{{\|\mathbf{#1}\|}}
\NewDocumentCommand\sqn{mg}{%
    \|\mathbf{#1}_{\IfNoValueTF{#2}{}{#2}}\|^2%
}
\newcommand{\FMOR}[1]{\textbf{\textcolor{cyan}{*** FIXME ORI: #1 ***}}}
\def\b{\mathbf}
\DeclareMathOperator*{\argmin}{arg\,min}
\DeclareMathOperator*{\argmax}{arg\,max}
\DeclarePairedDelimiter{\ceil}{\lceil}{\rceil}
\DeclarePairedDelimiter\floor{\lfloor}{\rfloor}
\newcommand\Algphase[1]{%
\vspace*{-.7\baselineskip}\Statex\hspace*{\dimexpr-\algorithmicindent-2pt\relax}\rule{0.45\textwidth}{0.4pt}%
\Statex\hspace*{-\algorithmicindent}\textbf{#1}%
\vspace*{-.7\baselineskip}\Statex\hspace*{\dimexpr-\algorithmicindent-2pt\relax}\rule{0.45\textwidth}{0.4pt}%
}
\DeclareMathOperator{\EX}{\mathbb{E}}% expected value
\newtheorem{theorem}{Theorem}
\newtheorem{lemma}{Lemma}
\newtheorem{corollary}{Corollary}
\newtheorem{definition}{Definition}
\newtheorem{remark}{Remark}
\begin{document}

\title{Compute-and-Forward in Large Relaying Systems: Limitations and Asymptotically Optimal Scheduling}
%\title{The Necessity of Scheduling in Compute-and-Forward}

% author names and IEEE memberships note positions of commas and nonbreaking spaces ( ~ ) LaTeX will not break a structure at a ~ so this keeps an author's name from being broken across two lines. use \thanks{} to gain access to the first footnote area a separate \thanks must be used for each paragraph as LaTeX2e's \thanks was not built to handle multiple paragraphs

%\author{Ori~Shmuel,~\IEEEmembership{Member,~IEEE,}
%        Asaf~Cohen,~\IEEEmembership{Fellow,~IEEE,}
%        and~Omer~Gurewitz,~\IEEEmembership{Fellow,~IEEE}% <-this % stops a space
%\thanks{The outers are with the Department
%of Communication Systems Engineering, Ben-Gurion University , Israel,
% e-mail: \{shmuelor, coasaf, gurewitz\}@bgu.ac.il.}% <-this % stops a space
%\thanks{Manuscript received April 19, 2005; revised August 26, 2015.}}

\author{\IEEEauthorblockN{Ori Shmuel, Asaf Cohen, Omer Gurewitz \thanks{This research was partially supported by European Unions Horizon 2020
Research and Innovation Program SUPERFLUIDITY, Grant Agreement
671566 and by MAFAT. Parts of this work appeared at the 2017 IEEE Information Theory Workshop (ITW) and at the 2018 IEEE Information Theory Workshop (ITW). The authors are from Ben-Gurion University of the Negev, Israel.
Email: \{shmuelor,coasaf,gurewitz\}@bgu.ac.il }}
%\IEEEauthorblockA{\\Ben-Gurion University of the Negev, \{shmuelor, coasaf, gurewitz\}@bgu.ac.il}
}

% make the title area
\maketitle

\begin{abstract}
%\boldmath

Compute and Forward (CF) is a coding scheme which enables receivers to decode linear combinations of simultaneously transmitted messages while exploiting the linear properties of lattice codes and the additive nature of a shared medium. The scheme was originally designed for relay networks, yet, it was found useful in other communication problems, such as MIMO communication. Works in the current literature assume a fixed number of transmitters and receivers in the system. However, following the increase in communication networks density, it is interesting to investigate the performance of CF when the number of transmitters is large.

In this work, we show that as the number of transmitters grows, CF becomes degenerated, in the sense that a relay prefers to decode only one (strongest) user instead of any other linear combination of the transmitted codewords, treating the other users as noise. Moreover, the system's sum-rate tends to zero as well. This makes scheduling necessary in order to maintain the superior abilities CF provides. We thus examine the problem of scheduling for CF. We start with insights on why good scheduling opportunities can be found. Then, we provide an asymptotically optimal, polynomial-time scheduling algorithm and analyze its performance. We conclude that with proper scheduling, CF is not merely non-degenerated, but, in fact, provides a gain for the system sum-rate, up to the optimal scaling law of $O(\log{\log{L}})$.

\end{abstract}

% Note that keywords are not normally used for peerreview papers.
\begin{IEEEkeywords}
Relay Networks, User-Scheduling, Compute and Forward, lattice codes.
\end{IEEEkeywords}

% For peer review papers, you can put extra information on the cover page as needed:
% \ifCLASSOPTIONpeerreview
% \begin{center} \bfseries EDICS Category: 3-BBND \end{center}
% \fi
% For peerreview papers, this IEEEtran command inserts a page break and creates the second title. It will be ignored for other modes.
%\IEEEpeerreviewmaketitle
\section{Introduction}
% The very first letter is a 2 line initial drop letter followed
% by the rest of the first word in caps.
% 
% form to use if the first word consists of a single letter:
% \IEEEPARstart{A}{demo} file is ....
% 
% form to use if you need the single drop letter followed by
% normal text (unknown if ever used by the IEEE):
% \IEEEPARstart{A}{}demo file is ....
% 
% Some journals put the first two words in caps:
% \IEEEPARstart{T}{his demo} file is ....
% 
% Here we have the typical use of a "T" for an initial drop letter
% and "HIS" in caps to complete the first word.

Compute and Forward (CF) \cite{nazer2011compute} is a coding scheme which enables receivers to decode linear combinations of transmitted messages, exploiting the broadcast nature of wireless networks. CF utilizes the shared medium and the fact that a receiver, which received multiple transmissions simultaneously, can treat them as a superposition of signals, and decode linear combinations of the transmitted messages. Specifically, with the use of lattice coding, the obtained signal after decoding can be considered as a linear combination of the transmitted messages. This is since for lattice codes, as a special case of linear codes, every linear combination of codewords is a codeword itself. However, since the wireless channel is subject to fading, the received signals are attenuated by real (and not integer) attenuations, hence the received linear combination is "noisy". In CF, the receiver (e.g., a relay) then seeks a set of integer coefficients, denoted by a vector $\b{a}$, to be as close as possible\footnote{One can define different criteria for the goodness of the approximation.} to the true channel coefficients and to serve as the coefficients for the linear combination it wishes to decode. The choice of coefficients, as well as the original channel gains, affect the resulting achievable rate.

Although CF was first introduced as a practical solution to mitigate users' interference in relays networks, the coding scheme and its properties were found to be an extremely useful tool in other types of communication channels and models. For example, linear MIMO receivers \cite{zhan2014integer}, decoding for the symmetric Gaussian K-user Interference Channel \cite{ordentlich2014approximate}, cloud radio access networks \cite{park2014fronthaul} and even in physical-layer security \cite{ling2014semantically}. In parallel, the CF scheme was extended to various settings which helped gain a deeper understanding of its abilities, advantages, and disadvantages. For example, MIMO CF  \cite{zhan2009mimo}, integration with interference alignment \cite{niesen2012degrees}, scheduling in cellular networks \cite{he2015collision}, MAC (Multiple Access Channel) \cite{zhu2017gaussian} and more \cite{wei2012compute,hong2013compute,lim2018joint}. %\FMOR{FOR Asaf's question: the first group used the nested lattice coding and the decoding similar to the CF. The second extended the scheme itself with other types assumptions.}

In this work, we continue to shed light on the CF scheme and bring new results for a regime which, to the best of our knowledge, was not thoroughly discussed yet. That is, in contrast to the mentioned works, which assume that the number of simultaneously transmitting users is a fixed parameter of the system, in this work, we examine the effect of this number on the performance of the CF scheme. Specifically, we consider a general system of $L$ users and $M$ relays, and investigate the system sum-rate as $L$ increases. This asymptotic analysis, unlike the more common power asymptotic, takes a very realistic approach, which better fits current and future communication systems \cite{osseiran2014scenarios},\cite{dai2015non}. Such systems tend to have a dense topology, with low power and complexity devices, connected to a single or several access points. The ability of the CF scheme to handle simultaneous transmissions and use the users' interference constructively makes it a good candidate for such networks \cite{park2014fronthaul}. However, as we will show in the sequel, CF fails to perform well in such a regime, resulting in an asymptotically negligible rate. Yet, with proper scheduling and choices of coefficients, CF can retain its original benefits, and, in fact, \emph{distributively achieve the asymptotic upper bounds}.   

%Similar analyses which examines systems' performances as a function of the number of users (active or not) produced insightful results and new points-of-view for communication systems. For example, results for the capacity scaling laws of wireless systems under different assumptions can be found in \cite{gupta2000capacity,gupta2003towards,gastpar2005capacity}. These works show the increase in the capacity as a function of the number of users. On another level, sophisticated user scheduling algorithms, which are based on multi-user diversity, were also found to be an extremely important milestone in understanding the wireless communication system. \cite{knopp1995information,qin2003exploiting,yoo2006optimality,shmuel2018performance}. Thus, in nature, having a large number of users can increase performance and bring a positive outcome in general.

Numerous works discussed the capacity increase with the number of users, e.g., \cite{gupta2000capacity, gupta2003towards, gastpar2005capacity}, and several have also showed how to gain multi-user diversity with scheduling \cite{knopp1995information, qin2003exploiting, yoo2006optimality, shmuel2018performance}. However, we show that this is not necessarily the case with CF. In fact, a large number of simultaneous transmitters is not always a blessing when the number of relays is fixed, as when the former grows, the receiver will prefer to decode only the strongest user over all possible linear combinations. This will make the CF scheme degenerated, in the sense that a relay will choose a vector $\b{a}$ which is actually a unit vector (a line in the identity matrix), thus treating all other signals as noise. In other words, the linear combination chosen will be trivial. Furthermore, we show that as the number of transmitters grows, the scheme's \emph{sum-rate} goes to zero as well. Thus, one is forced to restrict the number of transmitting users, i.e., use scheduling, in order to maintain the superior abilities CF provides. We note that the idea of scheduling for CF networks was also considered in \cite{ramirez2015scheduling}. Therein, the authors showed by simulation that even a simple scheduling scheme can be useful. However, no performance guarantees or analysis for the optimal schedule were carried out.

%Setting a restriction on the number of simultaneously transmitting users essentially means that the linear combinations decoded by the relays contain a small number of messages. This, in fact, does not necessarily means a degradation in performance due to the regime in question. That is, usually under the assumption of large number of transmitters and a fixed number of receivers ($M<<L$), unless the decoder is using joint decoding, in order to retrieve all messages several transmission slots are needed regardless of the transmission scheme. The scheme, however, can help reduce the number of transmission slots as much as possible. 

Accordingly, we present a scheduling scheme for a large-scale relaying system that employs CF. The scheme is based on finding a group of users whose channel coefficients form a good coding opportunity for CF. Specifically, we present insight and analysis on what is considered as a good schedule for CF in terms of the channel's original coefficients and the coefficients of the approximated linear combination which is chosen to be decoded. We start our analysis from the point of view of a single relay, where we provide a lower bound on the expected system's sum-rate that can be achieved with our suggested scheduling scheme. Furthermore, we show that this bound is asymptotically optimal as the number of users grows. Then, we extend this scheduling policy for the case of a general system with multiple relays, for which we provide lower bounds on the expected system's sum-rate, prove the existence of good schedules for all relays simultaneously, and give heuristics for fast completion time.

\subsection{Paper Outline}
The paper is organized as follows. In Section \ref{Sec-Model and Problem Statement}, the system model is described. In Section \ref{sec-Main_results}, we present the main results, including both the results on the necessity of scheduling and results on the rates that can be achieved if scheduling is applied. Section \ref{Sec-CF_with_large_users} brings the analytical derivation for the probability of choosing a unit vector by a relay, as the number of users grows. This is the essence of the proof for the necessity of scheduling. Then, in section \ref{Sec-Scheduling_in_CF}, we present our user scheduling algorithm and its analysis for a system of a single relay. Section \ref{sec-Multiple Relays} extends the results to the case of multiple relays.

\section{System Model and Problem Statement}\label{Sec-Model and Problem Statement}

\subsection{Notational Conventions}
Throughout the paper, we will use boldface lowercase to refer to vectors, e.g., $\b{h} \in \R^L$, and boldface uppercase to refer to matrices, e.g., $\b{H} \in \R^{M \times L}$.  For a vector $\b{h}$, we write $\n{h}$ for its Euclidean norm, i.e. $\n{h} \triangleq \sqrt{\sum_{i}h_i^2}$. We denote by $\b{e}_i$ the unit vector with $1$ at the $i$-th entry and zero elsewhere.

\begin{figure}[t]
\centering
    \includegraphics[width=0.3\textwidth]{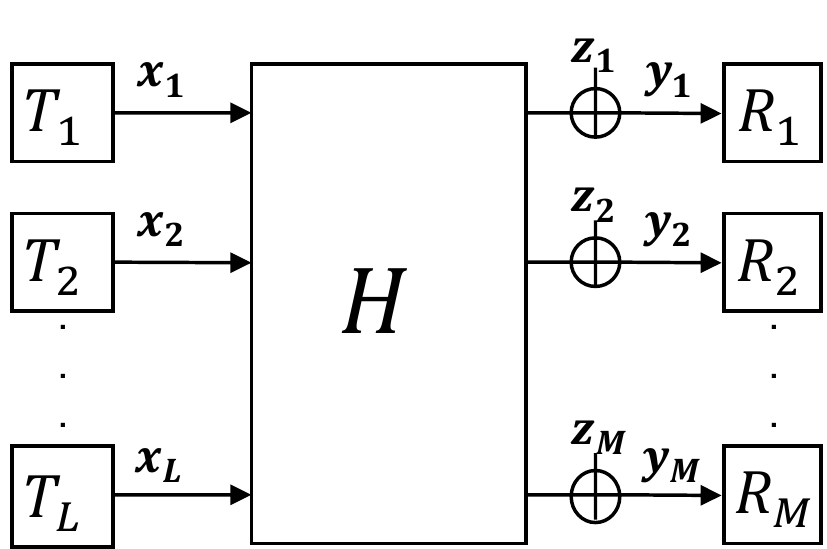}
\caption{Compute and Forward system model. $L$ transmitters communicate through a shared medium to $M$ relays.}
\label{fig-CF_System_model}
\end{figure}

\subsection{System Model}
Consider a multi-user multi-relay network, where $L$ transmitters (users) are communicating to a single destination $D$ via $M$ relays. All relays form a layer between the transmitters and the destination such that each transmitter can communicate with all the relays. Each transmitter has a length-$b$ message, assumed to be drawn $i.i.d.$ and with uniform probability over a prime size finite field, that is, $\b{w}_l \in \mathbb{F}_p^b, \ l=1,2,...,L$, where $\mathbb{F}_p$ denotes the finite field with a set of $p$ elements.

This message is then encoded by an encoder $\mathcal{E}_l:\mathbb{F}_p^b \rightarrow \R^n$, which maps the messages to length-$n$ real-valued codewords, $\b{x}_l=\mathcal{E}_l(\b{w}_l)$. Each codeword is subject to a power constraint, $\|\b{x}_l\|^2 \leq n\text{P}$. The message rate of each transmitter is defined by $R=\frac{b}{n}\log{p}$, measured in bits per channel use, and is equal for each transmitter\footnote{As our main contributions focus on the necessity of user scheduling for CF networks with a large number of users and the asymptotical optimal scheduling policies for such networks, we consider only equal transmissions rates. One can extend this formulation and the results for messages with different lengths and thus different rates as was done in \cite{nazer2011compute}.}. 
 
Accordingly, each relay $m\in\{1,...,M\}$ observes a noisy linear combination of the transmitted signals through the channel,
\begin{equation}\label{equ-the channel}
\b{y}_m=\sum_{l=1}^{L}h_{ml}\b{x}_l+\mathbf{z}_m \ \ \ \ m=1,2,...,M,
\end{equation}
where $h_{ml} \sim \mathcal{N}(0,1)$ are the real\footnote{We assume real channels to ease the analysis. The extension for complex channels is not critical for the results in this paper. As will be mentioned later, in the CF framework, complex channels change the achievable rates by a pre-log factor of 2.} channel coefficients and $\b{z}$ is an i.i.d., Gaussian noise, $\b{z} \sim \mathcal{N}(0,\b{I}^{n\times n})$. Let $\b{h}_m= (h_{m1},h_{m2},...,h_{mL})^T$ denote the vector of channel coefficients at relay $m$ and let $\b{H}=(\b{h}_1,\b{h}_2,...,\b{h}_M)^T$ be the channel matrix, i.e., the m-th' row of $\b{H}$ is $\b{h}_m^T$. We assume a memoryless block-fading channel model, i.e., the channel remains constant over each slot (block) period, and at the beginning of each slot independent realizations of $\b{H}$ are drawn; we assume that in each slot each relay knows its channel vector and that the channel vectors are independent of each other. This general model is illustrated in Figure \ref{fig-CF_System_model}.

\subsection{Compute and forward}
Nazer and Gastpar \cite{nazer2011compute} provided an achievable scheme which uses nested lattice codes for the computation of the linear equation of the transmitted signals over the channel \eqref{equ-the channel}. After receiving the noisy linear combination, each relay selects a scale coefficient $\alpha_m \in \R$, an integer coefficient vector $\b{a}_m=(a_{m1},a_{m2},...,a_{mL})^T \in \Z^L$, and attempts to decode the lattice point $\sum_{l=1}^La_{ml}\b{x}_l$ from $\alpha_m\b{y}_m$. Formally, the decoder has
\begin{equation}\label{equ-channel output at the decoder}
\begin{aligned}
	\alpha_m\b{y}_m&=\sum_{l=1}^{L}\alpha_mh_{ml}\b{x}_l+\alpha_m\b{z}_m =\sum_{l=1}^{L}a_{ml}\b{x}_l +\sum_{l=1}^{L}(\alpha_mh_{ml}-a_{ml})\b{x}_l+\alpha_m\b{z}_m.
\end{aligned}
\end{equation}

Due to the lattice algebraic structure, the relay decodes $\sum_{l=1}^La_{ml}\b{x}_l$ as a codeword while enduring the noise $\sum_{l=1}^{L}(\alpha_mh_{ml}-a_{ml})\b{x}_l+\alpha_m\b{z}_m$, namely, an \emph{effective noise} resulting from the true noise and the quantization error. %The decoded codewords are then forwarded to the next layer of relays or directly to the destination. As this process repeats itself with every layer, in this work (and similar to \cite{nazer2011compute}), we concentrate on the first layer only and we further assume that the destination receives eventually linear combinations of the the original codewords.
The rate of the decoded codeword, i.e., the \emph{achievable rate}, defines a rate all transmitters must comply with to correctly decode the specific linear combination. The achievable rate and the optimal scale coefficient are given in the following two theorems.

\begin{theorem}[{\cite[Theorem 1]{nazer2011compute}}]\label{the-Computation rate}
For real-valued AWGN networks with channel coefficient vectors $\b{h}_m \in \mathbb{R}^L$ and coefficient vector $\b{a}_m \in \mathbb{Z}^L$, the following computation rate region is achievable:
\begin{equation}
\cR(\b{h}_m,\b{a}_m)= \max \limits_{\alpha_m \in \R} \frac{1}{2} \log^+ \left( \frac{\text{P}}{\alpha_m^2+\text{P}\|\alpha_m \b{h}_m-\b{a}_m\|^2} \right),
\end{equation}
\end{theorem}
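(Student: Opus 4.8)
The plan is to establish achievability by exhibiting an explicit dithered nested-lattice scheme and reducing the relay's task to a point-to-point lattice decoding problem whose effective noise is exactly the term appearing in \eqref{equ-channel output at the decoder}. First I would fix a pair of nested lattices $\Lambda_c \subseteq \Lambda_f$ in $\R^n$: the coarse lattice $\Lambda_c$ is chosen good for covering (Rogers-good), with per-dimension second moment of its Voronoi region $\mathcal{V}(\Lambda_c)$ equal to $\text{P}$ so that it enforces the power constraint, while the fine lattice $\Lambda_f$ is chosen good for AWGN (Poltyrev-good). The codebook common to all users is $\mathcal{C}=\Lambda_f \cap \mathcal{V}(\Lambda_c)$, whose rate can be tuned to any target $R$ below the claimed quantity by controlling the nesting ratio. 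Each user $l$ maps $\b{w}_l$ to a fine-lattice point $\b{t}_l \in \mathcal{C}$ and transmits $\b{x}_l = (\b{t}_l - \b{d}_l)\bmod \Lambda_c$, where $\b{d}_l$ is a dither drawn uniformly over $\mathcal{V}(\Lambda_c)$ and shared with the relay. The dither makes $\b{x}_l$ uniform over $\mathcal{V}(\Lambda_c)$, independent of $\b{t}_l$, and compliant with $\|\b{x}_l\|^2 \le n\text{P}$.

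The crux is an algebraic identity. I would have relay $m$ form
\begin{equation}
\b{s}_m = \left[\alpha_m \b{y}_m + \sum_{l=1}^{L} a_{ml}\b{d}_l\right] \bmod \Lambda_c.
\end{equation}
Substituting \eqref{equ-channel output at the decoder} and using that the modulo-$\Lambda_c$ operation is distributive over integer combinations of $\Lambda_c$-points, the dithers cancel and this collapses to
\begin{equation}
\b{s}_m = \left[\sum_{l=1}^{L} a_{ml}\b{t}_l + \b{z}_{\text{eff}}\right] \bmod \Lambda_c, \qquad \b{z}_{\text{eff}} = \sum_{l=1}^{L}(\alpha_m h_{ml}-a_{ml})\b{x}_l + \alpha_m \b{z}_m.
\end{equation}
Since every integer combination of fine-lattice points is again a point of $\Lambda_f$, the vector $\sum_l a_{ml}\b{t}_l$ is a $\Lambda_f$-codeword, so the relay faces a standard problem: recover a fine-lattice point observed through the additive effective noise $\b{z}_{\text{eff}}$ and reduced modulo $\Lambda_c$; decoding by nearest-$\Lambda_f$ quantization followed by reduction modulo $\Lambda_c$ then yields the desired linear combination of messages over $\F_p$.

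Next I would bound the per-dimension second moment of the effective noise. Because the dithers render the $\b{x}_l$ mutually independent, zero-mean, of per-dimension power $\text{P}$, and independent of $\b{z}_m$, all cross terms vanish and
\begin{equation}
\frac{1}{n}\E\|\b{z}_{\text{eff}}\|^2 = \alpha_m^2 + \text{P}\sum_{l=1}^{L}(\alpha_m h_{ml}-a_{ml})^2 = \alpha_m^2 + \text{P}\,\|\alpha_m \b{h}_m - \b{a}_m\|^2 \triangleq N_{\text{eff}}.
\end{equation}
The main obstacle is that $\b{z}_{\text{eff}}$ is \emph{not} Gaussian: its quantization part is a sum of lattice-uniform vectors, so the classical AWGN lattice results do not apply verbatim. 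I would resolve this either by an inflated-lattice argument, dominating $\b{z}_{\text{eff}}$ by a Gaussian of the same per-dimension variance and showing the error probability can only grow, or by invoking the property that a Poltyrev-good lattice drives the error probability to zero under any \emph{semi norm-ergodic} effective noise of per-dimension second moment $N_{\text{eff}}$, which $\b{z}_{\text{eff}}$ can be shown to be once the dithers are conditioned out. Either route gives vanishing error probability whenever $R < \frac{1}{2}\log\!\left(\text{P}/N_{\text{eff}}\right)$.

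Finally, since $\alpha_m$ is a free design parameter at the relay, I would maximize the resulting bound over $\alpha_m \in \R$, and take the positive part $\log^+$ to reflect that no positive rate is claimed when the bound is nonpositive, obtaining
\begin{equation}
\cR(\b{h}_m,\b{a}_m)= \max_{\alpha_m \in \R} \frac{1}{2}\log^+\!\left(\frac{\text{P}}{\alpha_m^2 + \text{P}\,\|\alpha_m \b{h}_m - \b{a}_m\|^2}\right).
\end{equation}
A standard averaging argument over the random dithers then shows the existence of a deterministic dither assignment achieving the same rate, completing the proof of the stated computation-rate region.
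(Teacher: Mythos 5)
Your proposal is correct, but note that the paper itself gives no proof of this statement: it is imported verbatim as \cite[Theorem~1]{nazer2011compute}, so the ``paper's proof'' is the original Nazer--Gastpar achievability argument. Your sketch --- dithered nested lattice codes ($\Lambda_c \subseteq \Lambda_f$, Rogers/Poltyrev goodness), the modulo-$\Lambda_c$ identity cancelling the dithers, the per-dimension effective-noise power $\alpha_m^2+\text{P}\|\alpha_m\b{h}_m-\b{a}_m\|^2$, and the handling of non-Gaussianity via an inflated-lattice or semi-norm-ergodic argument --- is essentially that same proof, so there is nothing to bridge.
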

where $\log^+(x)\triangleq \max \{\log(x),0\}$.
\begin{theorem}[{\cite[Theorem 2]{nazer2011compute}}]\label{the-Computation rate with MMSE}
The computation rate given in Theorem \ref{the-Computation rate} is uniquely maximized by choosing $\alpha_m$ to be the MMSE coefficient
\begin{equation}
\alpha_{MMSE}=\frac{\text{P}\b{h}_m^T\b{a}_m}{1+\text{P}\|\b{h}_m\|^2},
\end{equation}
which results in a computation rate region of
\begin{equation}\label{equ-Computation rate with MMSE}
\cR(\b{h}_m,\b{a}_m)= \frac{1}{2} \log^+ \left( \|\b{a}_m\|^2- \frac{\text{P}(\b{h}_m^T\b{a}_m)^2}{1+\text{P}\|\b{h}_m\|^2} \right)^{-1}.
\end{equation}
\end{theorem}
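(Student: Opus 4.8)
The plan is to reduce this to a one-dimensional optimization over the scalar $\alpha_m$. Since $\log^+$ is monotone non-decreasing and $\text{P}>0$ is fixed, maximizing $\cR(\b{h}_m,\b{a}_m)$ over $\alpha_m$ amounts to maximizing the inner argument $\text{P}/(\alpha_m^2+\text{P}\|\alpha_m\b{h}_m-\b{a}_m\|^2)$, which in turn is equivalent to \emph{minimizing} its denominator
\[
f(\alpha_m) \triangleq \alpha_m^2 + \text{P}\|\alpha_m\b{h}_m-\b{a}_m\|^2 .
\]
So the whole problem collapses to minimizing $f$ over $\alpha_m\in\R$.

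First I would expand the squared norm, writing $\|\alpha_m\b{h}_m-\b{a}_m\|^2 = \alpha_m^2\|\b{h}_m\|^2 - 2\alpha_m\b{h}_m^T\b{a}_m + \|\b{a}_m\|^2$, so that
\[
f(\alpha_m) = (1+\text{P}\|\b{h}_m\|^2)\,\alpha_m^2 - 2\text{P}(\b{h}_m^T\b{a}_m)\,\alpha_m + \text{P}\|\b{a}_m\|^2 .
\]
This is a scalar quadratic in $\alpha_m$ with strictly positive leading coefficient $1+\text{P}\|\b{h}_m\|^2>0$, hence strictly convex with a unique stationary point. Setting $f'(\alpha_m)=0$ gives
\[
2(1+\text{P}\|\b{h}_m\|^2)\,\alpha_m - 2\text{P}(\b{h}_m^T\b{a}_m) = 0
\quad\Longrightarrow\quad
\alpha_m = \frac{\text{P}\,\b{h}_m^T\b{a}_m}{1+\text{P}\|\b{h}_m\|^2} = \alpha_{MMSE},
\]
and strict convexity makes this the unique minimizer of $f$ (equivalently, the unique maximizer of the inner rational argument), establishing the first claim.

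To finish, I would substitute $\alpha_{MMSE}$ back into $f$. Using the standard identity that $A\alpha^2 - 2B\alpha + C$ attains its minimum $C-B^2/A$ at $\alpha=B/A$, with $A=1+\text{P}\|\b{h}_m\|^2$, $B=\text{P}\b{h}_m^T\b{a}_m$, and $C=\text{P}\|\b{a}_m\|^2$, the minimized denominator is
\[
f(\alpha_{MMSE}) = \text{P}\|\b{a}_m\|^2 - \frac{\text{P}^2(\b{h}_m^T\b{a}_m)^2}{1+\text{P}\|\b{h}_m\|^2} .
\]
Plugging into $\tfrac12\log^+\!\big(\text{P}/f(\alpha_{MMSE})\big)$ and cancelling a common factor of $\text{P}$ from numerator and denominator yields exactly $\tfrac12\log^+\!\big(\|\b{a}_m\|^2-\text{P}(\b{h}_m^T\b{a}_m)^2/(1+\text{P}\|\b{h}_m\|^2)\big)^{-1}$, which is \eqref{equ-Computation rate with MMSE}.

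There is essentially no hard part here: the argument is a routine minimization of a strictly convex scalar quadratic. The only point requiring a moment's care is the interaction with $\log^+$: since $\log^+$ is merely non-decreasing (and flat at $0$), one must argue that maximizing the \emph{inner} rational argument suffices for maximizing the rate, and that the claimed \emph{uniqueness} of the optimizing $\alpha_m$ stems from the strict convexity of $f$ rather than from the clipped logarithm itself (in the degenerate regime where the maximal argument falls below $1$, the rate is zero and the clipping alone would not single out a unique $\alpha_m$).
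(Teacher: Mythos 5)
Your proposal is correct: reducing the maximization to minimizing the strictly convex scalar quadratic $\alpha_m^2+\text{P}\|\alpha_m\b{h}_m-\b{a}_m\|^2$, solving the stationarity condition to get $\alpha_{MMSE}$, and substituting back via the $C-B^2/A$ identity is exactly the standard argument. Note that the paper itself gives no proof of this statement --- it is imported verbatim as Theorem 2 of \cite{nazer2011compute} --- and your derivation coincides with the MMSE effective-noise minimization carried out in that reference, with the added (and correct) care that uniqueness of the optimal $\alpha_m$ comes from strict convexity of the effective-noise term rather than from the clipped $\log^+$.
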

Note that the above theorems are for real channels and the rate expressions for the complex channel are twice the above \cite[Theorems 3 and 4]{nazer2011compute}. In addition, one should note that the coefficient vector $\b{a}_m$ must satisfy, 
\begin{equation}\label{equ-Search domain for the vector coefficients}
\|\b{a}_m\|^2 \leq 1+\text{P}\|\b{h}_m\|^2,
\end{equation}
so that computation rate in \eqref{equ-Computation rate with MMSE} would not be zero {\cite[Lemma 1]{nazer2011compute}}.

As mentioned, in order for relay $m$ to be able to decode a linear combination with coefficient vector $\b{a}_m$, the rate of the messages which have a non zero entry in $\b{a}_m$ must comply with the computation rate region \cite{nazer2011compute}. From transmitter's $l$ viewpoint, this means,
\begin{equation}\label{equ-Rates restrictions}
R_l<\min_{m:a_{ml} \neq 0} \cR(\b{h}_m,\b{a}_m).
\end{equation}
However, since we assume all transmission rates are equal, we have, $R=\min_{m} \cR(\b{h}_m,\b{a}_m)$ for all transmitters.

Recall that this work deals with the regime of large $L$ and a fixed $M$. Under this regime, the vectors (the channel and the coefficient vectors) in the rate expression in \eqref{equ-Computation rate with MMSE} have $L$ entries for any relay $m$. In the first part of this work, we analyze this rate and show that it tends to zero as $L$ grows, for all choices of $\b{a}_m$. This will lead to the conclusion that a restriction on the number of actually simultaneously transmitting users is required. That is, one must schedule users for transmission to achieve a non-zero rate. On the other hand, we will show that intelligent scheduling is indeed possible, resulting in strictly positive rates, and, in fact, scale-optimal performance.

\subsection{Compute and forward - Complete Decoding}\label{sec-Compute and forward - Complete Decoding}

The previous subsection provided the restriction on the achievable rate such that a single relay will be able to decode a single linear combination. For the complete decoding of all messages, the destination must acquire at least $L$ independent linear combinations on the $L$ messages and solve the linear system. The collecting process of the linear combinations can be done by multiple relays, multiple transmission slots or by a combination of the two.
%However, these restrictions on the codewords' rates does not guarantee a successful decoding of all the original messages at the destination. The decoding process depends also on the ability of the destination to solve the linear system formed by the linear combinations relayed to it. 
Thus, the ratio between $L$ and $M$ has a crucial effect on the time the destination will be able to decode all messages, i.e., the completion time. That is, it may happen that a single transmission of the messages would not suffice to decode all messages. Clearly, for $L>>M$, we expect to have several transmission slots until complete decoding is possible. In each such slot, the users repeat their transmissions at the same rates and the destination collects $M$ coefficient vectors and their corresponding decoded codewords. We define by $\b{A}_n=(\b{a}_1(1),...,\b{a}_{M}(1),...,\b{a}_1(n),...,\b{a}_{M}(n))^T$ the decoding matrix which is formed by the coefficient vectors collected in $n$ transmissions as it's rows. %Namely, the coefficient vector $\b{a}_{m}(i)$ of relay $m$ at slot $i$ is located at row $M(i-1)+m$ of $\b{A}_n$. We note that such matrix representation of the coefficient vectors at the destination is motivated by \cite[Section 6]{nazer2011compute} which conditioning the full recovery of the messages on achieving full matrix rank, that is, $rank(\b{A}_N)=L$.

\begin{remark}[Number of transmission slots]
We note here that the number of transmission slots may be reduced if a relay decodes several equations from the same transmission \cite{nazer2012successive, ordentlich2014approximate}. However, this ability restricts the achievable rate to the lowest equation decoded. Moreover, the results in the first part of this work show that without proper scheduling the achievable rate goes to zero for any coefficient vector $\b{a}$, in particular, the one which maximizes $\cR(\b{h},\b{a})$; thus, any other linear equation will result in an even lower rate.      
\end{remark}
We point out the following observations. First, note that since the channel changes between the slots, the transmission rates should also be constrained by the weakest time slot. Moreover, the throughput should be normalized by the number of slots used since a transmitter transmits the same message in each slot (until successful decoding). Second, since there is no coordination between the relays, the $M$ coefficient vectors decoded in each slot by the relays may be linearly dependent, either within a slot or across slots; thus, the number of Degrees of Freedom (DoF) may be less than $M$, increasing the number of slots needed for complete decoding.
Accordingly, in order to have a successful decoding of all messages at transmission slot $N$, we require that $rank(\b{A}_N)=L$ and that 
\begin{equation}\label{equ-rate restriction for successful decoding}
R<\min_{n=1,...,N}\min_{m} \cR(\b{h}_m(n),\b{a}_m(n)).
\end{equation}
This assures that the linear system of equations can be solved and that each equation has the same rate and was decoded correctly by the relays. Note that $N$ is a r.v. since it depends implicitly on the channel vectors. Note also that the all linear operation are taken modulo $p$ \cite[Theorem 7 and Remark 9]{nazer2011compute}

We may now express the system sum-rate when successful decoding is possible, which is the sum of rates of the $L$ original messages (in our case, each of rate $R$) divided by the number of slots until complete decoding. 
\begin{definition}[The system sum-rate]
The system sum-rate is,
\begin{equation}\label{equ- system sum-rate definition}
C_{SR} \triangleq \frac{R  L}{N}, 
\end{equation}
were $R$ is subject to \eqref{equ-rate restriction for successful decoding} and $N$ is the first slot for which $rank(\b{A}_N)=L$.
%where $P_{out}(R) = P_r(R>\cR(\b{h}_m(n),\b{a}_m(n))$ for some $m$ and $n$, i.e., an outage scenario in one of $L$ equations out of all $MN$ which were used in the decoding process. While $N$ is the slot for which the rank of $\b{A}$ becomes $L$.
\end{definition}

\subsection{Problem statement}\label{sec-Problem statement}

The performance of such a system depends on the proper choice of the coefficient vectors by the relays in each transmission slot. This choice rules the transmission rate of the messages and the completion time for successful decoding. Essentially, one would want to maximize the rate in each transmission slot, on one hand, i.e., find coefficient vectors that are the best fit for the real channel vectors. On the other hand, one would want to minimize the time by making sure these vectors contribute additional DoF in each slot. As mentioned, there is no coordination between the relays when it comes to the selection of these $\b{a}$ vectors; therefore, for most of this work, we focus on the considerations for this selection from the perspective of a single relay, i.e., we consider a single relay model for which $M=1$.  By this choice, one can remove, in general, the minimum constraint on the transmission rates as in \eqref{equ-Rates restrictions}, since there is only a single achievable rate, and analyze this system more easily. Hence, from this point and on, we omit the index $m$ and consider a single relay model. %We emphasize that the system model is presented in a general way since some of the results derived for the case of a single relay have a direct implication on the case of multiple relays. Moreover, 
In Section \ref{sec-Multiple Relays} we return to the case of multiple relays and elaborate on its implications.

Our goal is to maximize the system's sum-rate given in \eqref{equ- system sum-rate definition}. Thus, since the rate and the completion time are both a function of the coefficient vectors selections made by the relay in each slot, we may write this maximization as follows,

 %\begin{equation}\label{equ-Problem statment}
%\max_{\substack{\b{a}(1),\b{a}(2),...,\b{a}(N)\\ \b{a}(i) \in \Z^L \backslash \{\b{0}\} \ \forall i}} \frac{R(\b{a}(1),\b{a}(2),...,\b{a}(N))  L}{N(\b{a}(1),\b{a}(2),...,\b{a}(N))}.
%\end{equation}

 \begin{equation}\label{equ-Problem statment}
\max_{\substack{\b{A}_N \in \Z^{N\times L}} } \frac{R(\b{A}_N)  L}{N(\b{A}_N)},
\end{equation}
where now $\b{A}_N$ includes the coefficients vectors of the single relay. We note that one may suggest a sub-optimal solution for this problem by letting the relay choose, in each slot, the coefficient vector $\b{a}$ that maximizes the achievable rate regardless the completion time $N$ \footnote{Since the slots are independent and there is no consideration of the rank of $\b{A}$, we may write the maximization on an arbitrary slot and omit the time index.}. That is,
\begin{equation}\label{equ-Optimal a vector definition}
\textbf{a}^{max}=\argmax_{\textbf{a}\in\mathbb{Z}^L \backslash \{\textbf{0}\}}\frac{1}{2}\log^+\left(\|\textbf{a}\|^2-\frac{\text{P}(\mathbf{h}^T\mathbf{a})^2}{1+\text{P}\|\mathbf{h}\|^2} \right)^{-1}.
\end{equation}
Accordingly, the destination would have to wait until collecting enough independent linear combinations to attain a full rank.

The problem of finding the maximizing $\b{a}$ can be done by exhaustive search for small values of $L$. However, as $L$ grows, the problem becomes prohibitively complex quickly. In fact, it becomes a special case of the lattice reduction problem, which has been proved to be NP-complete. This can be seen if we write the maximization problem of \eqref{equ-Optimal a vector definition} as an equivalent minimization problem \cite{sahraei2014compute}:
\begin{equation}\label{equ-Optimal a vector in quadratic form}
\textbf{a}^{max}=\argmin_{\textbf{a}\in\mathbb{Z}^L \backslash \{\textbf{0}\}} f(\textbf{a})=\textbf{a}^T\textbf{G}\textbf{a},
\end{equation}
where $\textbf{G}=(1+\text{P}\|\b{h}\|^2)\b{I}-\text{P}\b{h}\b{h}^T$. $\b{G}$ can be regarded as the Gram matrix of a certain lattice and $\b{a}$ will be the shortest basis vector and the one which minimizes $f$. This problem is also known as the \emph{shortest lattice vector} problem (SLV), which has known approximation algorithms due to its hardness \cite{dadush2011enumerative, alekhnovich2005hardness}. The most notable of them is the LLL algorithm \cite{lenstra1982factoring, gama2008finding} which has an exponential approximation factor that grows with the size of the dimension. However, for special lattices, efficient algorithms exist \cite{conway2013sphere}. In \cite{sahraei2014compute}, a polynomial complexity algorithm was introduced for the special case of finding the maximizing coefficient vector in CF.

In what follows, we show that regardless of complexity considerations, when $L$ grows the optimal solution for \eqref{equ-Problem statment} tends to zero, and in order to promise a positive system sum-rate one must require that not all users transmit simultaneously. This extends the maximization problem in another dimension - where scheduling is also allowed. While the problem is even more complicated with scheduling, in this paper we show that simple scheduling algorithms exist, they result in non-negligible rates, hence overcome the limitations of a large $L$, and, moreover, we present a low-complexity, asymptotically optimal algorithm for joint scheduling for CF which achieves the best possible scaling law.

\section{Main Results}\label{sec-Main_results}
The main results of this work can be divided into two main threads. The first thread shows that the optimal solution for \eqref{equ-Problem statment} results in negligible rate if indeed all users transmit together. Hence, one must use \emph{user scheduling} to avoid this pitfall. The second thread shows that under scheduling, not only strictly positive rates are possible, but actually such schedules can be easy to implement, distributed, and asymptotically achieve the optimal scaling laws for the expected system sum-rate. We summarize these in the following subsections. The proofs are given in Sections \ref{Sec-CF_with_large_users}, \ref{Sec-Scheduling_in_CF} and \ref{sec-Multiple Relays}.

\subsection{Necessity of Scheduling}
When the number of relays is fixed and the number of transmitters is large, scheduling a small number of users is necessary for the CF scheme not to degenerate. Specifically, we first show that letting all users transmit will cause a relay, with a probability that goes to one with the number of transmitters, to choose a vector $\b{e}_i$ as the coefficient vector which maximizes the achievable rate. This will result in decoding only the strongest user and not a non-trivial linear combination.  

\begin{theorem}\label{the-Probability for having a unit vector as the maximaizer}
Under the CF scheme, the probability that a non-trivial vector $\b{a}$ will be the coefficient vector which maximize the achievable rate $\mathcal{R}(\b{h},\b{a})$, i.e., minimize $f(\b{a})$ in \eqref{equ-Optimal a vector in quadratic form}, compared to the best unit vector $\b{e}_i$, i.e., $i=\argmin_i f(\b{e}_i)$, is upper bounded by
\begin{equation}\label{equ-Probability of unit vector as minimizer of f}
P_r( f(\b{a}) \leq \min_i{f(\b{e}_i)}) \leq 1- I_{\Phi(\b{a})}\left(\frac{1}{2},\frac{L-1}{2}\right),
\end{equation}
where $I_x(a,b)$ is the CDF of the Beta distribution with parameters $a$ and $b$, and $\Phi(\b{a})=1-\frac{1}{\|\b{a}\|^2}$. 
\end{theorem}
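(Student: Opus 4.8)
The plan is to reduce this probabilistic statement to a clean geometric event about the angle between $\b{h}$ and $\b{a}$, and then exploit the rotational invariance of the Gaussian channel vector. First I would rewrite both sides of the comparison in a normalized form. For a unit vector $\b{e}_i$ one has $f(\b{e}_i) = 1 + \text{P}\|\b{h}\|^2 - \text{P}h_i^2$, so that
\begin{equation}
\min_i f(\b{e}_i) = 1 + \text{P}\|\b{h}\|^2\left(1 - \rho_{\max}\right), \qquad \rho_{\max} \triangleq \max_i \frac{h_i^2}{\|\b{h}\|^2}.
\end{equation}
For a general nonzero $\b{a}$, expanding $f(\b{a}) = (1+\text{P}\|\b{h}\|^2)\|\b{a}\|^2 - \text{P}(\b{h}^T\b{a})^2$ and introducing the squared cosine $\rho \triangleq \frac{(\b{h}^T\b{a})^2}{\|\b{h}\|^2\|\b{a}\|^2}$ of the angle between $\b{h}$ and $\b{a}$ yields
\begin{equation}
f(\b{a}) = \|\b{a}\|^2\left[1 + \text{P}\|\b{h}\|^2(1-\rho)\right].
\end{equation}

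The second step is to turn the event $\{f(\b{a}) \leq \min_i f(\b{e}_i)\}$ into a necessary condition on $\rho$ alone. Substituting the two expressions and rearranging, the event implies $\text{P}\|\b{h}\|^2\left[\|\b{a}\|^2(1-\rho) - (1-\rho_{\max})\right] \leq 1 - \|\b{a}\|^2$. Since $\b{a}$ is a nonzero integer vector, $\|\b{a}\|^2 \geq 1$, so the right-hand side is nonpositive; dividing by $\text{P}\|\b{h}\|^2 > 0$ and using $\rho_{\max} \geq 0$ then gives $\|\b{a}\|^2(1-\rho) \leq 1 - \rho_{\max} \leq 1$, i.e.
\begin{equation}
\rho \geq 1 - \frac{1}{\|\b{a}\|^2} = \Phi(\b{a}).
\end{equation}
Hence $\{f(\b{a}) \leq \min_i f(\b{e}_i)\} \subseteq \{\rho \geq \Phi(\b{a})\}$, and it suffices to bound the probability of the larger event. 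The point of this relaxation is that the resulting threshold depends only on $\|\b{a}\|$ and is free of both $\text{P}$ and the realization-dependent quantity $\rho_{\max}$.

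The final step is distributional. For a fixed direction $\b{a}$, the quantity $\rho$ is the squared projection of the standard Gaussian vector $\b{h} \sim \mathcal{N}(0,\b{I})$ onto the fixed unit vector $\b{a}/\|\b{a}\|$. By the rotational invariance of the isotropic Gaussian, $\rho$ has the same law as $h_1^2/\|\b{h}\|^2$, which is $\mathrm{Beta}\!\left(\tfrac12, \tfrac{L-1}{2}\right)$-distributed; consequently $P_r(\rho \geq \Phi(\b{a})) = 1 - I_{\Phi(\b{a})}\!\left(\tfrac12, \tfrac{L-1}{2}\right)$, and combining with the inclusion above delivers the claimed bound. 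I expect the main obstacle to be the second step: one must check that dropping $\rho_{\max}$ and invoking $\|\b{a}\|^2 \geq 1$ genuinely only \emph{enlarges} the event, so that the set inclusion points the right way, rather than inadvertently tightening the condition. Once $\rho$ is recognized as a projection of an isotropic Gaussian, the identification of the Beta law and the conclusion are routine.
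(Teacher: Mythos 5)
Your proposal is correct and follows essentially the same route as the paper: both reduce the event $\{f(\b{a}) \leq \min_i f(\b{e}_i)\}$ to the angle condition $\frac{(\b{a}^T\b{h})^2}{\|\b{a}\|^2\|\b{h}\|^2} \geq \Phi(\b{a})$ by discarding the same two nonpositive contributions (your use of $1-\|\b{a}\|^2 \leq 0$ and $\rho_{\max} \geq 0$ is exactly the paper's "removing the negative terms" step), and then invoke the $\mathrm{Beta}\left(\frac{1}{2},\frac{L-1}{2}\right)$ law of the squared cosine, which the paper isolates as Lemma~\ref{lem-distribution of the squared cosine of the phase} and proves by the same rotational-invariance argument you sketch. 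The only difference is cosmetic: you normalize by $\|\b{h}\|^2$ at the outset and phrase the relaxation as a set inclusion rather than as term-dropping inside a probability.
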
 
Note that $\frac{1}{2} \leq \Phi(\b{a}) \leq 1$ for any $\b{a}$ which is not a unit vector.

The main consequence of Theorem \ref{the-Probability for having a unit vector as the maximaizer} is the following. 

\begin{corollary}\label{cor-Probability for having a unit vector as the maximaizer goes to one}
 As the number of simultaneously transmitting users grows, the probability that a non-trivial $\b{a}$ will be the maximizer for the achievable rate goes to zero. Specifically, 
 \begin{equation}\label{equ-Probability of choosing a unit vector goes to one}
P_r( f(\b{a}) \leq \min_i{f(\b{e}_i)}) \leq e^{-LE_1(L)},
\end{equation}
where $\b{a}$ is any integer vector that is \emph{\textbf{not}} a unit vector, $\b{e}_i$ is a unit vector and  $E_1(L)=(1-\frac{3}{L})\log{\|\b{a}\|}$.
\end{corollary}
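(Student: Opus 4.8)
The plan is to feed the Theorem~\ref{the-Probability for having a unit vector as the maximaizer} bound \eqref{equ-Probability of unit vector as minimizer of f} into an explicit tail estimate for the Beta distribution and show that the resulting quantity decays like $\n{a}^{-(L-3)}$, which is precisely $e^{-LE_1(L)}$ since $LE_1(L)=L\left(1-\tfrac3L\right)\log\n{a}=(L-3)\log\n{a}$. The first step is to convert the \emph{upper} tail appearing in \eqref{equ-Probability of unit vector as minimizer of f} into a \emph{lower} tail via the symmetry identity of the regularized incomplete Beta function, $1-I_x(a,b)=I_{1-x}(b,a)$. Since $\Phi(\b{a})=1-\n{a}^{-2}$, this gives
\begin{equation}
1-I_{\Phi(\b{a})}\!\left(\tfrac12,\tfrac{L-1}{2}\right)=I_{\n{a}^{-2}}\!\left(\tfrac{L-1}{2},\tfrac12\right),
\end{equation}
i.e. the CDF of a $\mathrm{Beta}\!\left(\tfrac{L-1}{2},\tfrac12\right)$ variable evaluated at the small point $z\triangleq\n{a}^{-2}$. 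Here I would use that $\b{a}$ is a non-unit integer vector, so $\n{a}^2\ge 2$ and hence $z\le\tfrac12$.

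Next I would bound this left tail through the integral representation $I_z\!\left(\tfrac{L-1}{2},\tfrac12\right)=B\!\left(\tfrac{L-1}{2},\tfrac12\right)^{-1}\int_0^{z}t^{(L-3)/2}(1-t)^{-1/2}\,dt$. On $[0,z]$ the factor $(1-t)^{-1/2}$ is increasing and therefore at most $(1-z)^{-1/2}\le\sqrt2$, while $\int_0^z t^{(L-3)/2}\,dt=\tfrac{2}{L-1}\,z^{(L-1)/2}=\tfrac{2}{L-1}\,\n{a}^{-(L-1)}$. Pulling these together bounds the numerator by $\tfrac{2\sqrt2}{L-1}\n{a}^{-(L-1)}$, so the whole decay in $\n{a}$ already appears at this stage; what remains is to control the normalizing constant and the polynomial-in-$L$ prefactor.

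The one genuinely non-elementary step is a dimension-uniform lower bound on the normalizer $B\!\left(\tfrac{L-1}{2},\tfrac12\right)=\sqrt\pi\,\Gamma\!\left(\tfrac{L-1}{2}\right)/\Gamma\!\left(\tfrac L2\right)$. I would invoke a Gamma-ratio inequality of Wendel/Gautschi type, namely $\Gamma\!\left(x+\tfrac12\right)/\Gamma(x)\le x^{1/2}$, with $x=\tfrac{L-1}{2}$, which yields $\Gamma\!\left(\tfrac{L-1}{2}\right)/\Gamma\!\left(\tfrac L2\right)\ge\sqrt{2/(L-1)}$ and hence $B\!\left(\tfrac{L-1}{2},\tfrac12\right)\ge\sqrt{2\pi/(L-1)}$. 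Combining the three estimates gives
\begin{equation}
P_r\!\left(f(\b{a})\le\min_i f(\b{e}_i)\right)\le\frac{2}{\sqrt{\pi(L-1)}}\,\n{a}^{-(L-1)}.
\end{equation}

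Finally, since the target equals $e^{-LE_1(L)}=\n{a}^{-(L-3)}=\n{a}^{2}\cdot\n{a}^{-(L-1)}$ and $\n{a}^2\ge 2\ge\tfrac{2}{\sqrt{\pi(L-1)}}$ for every $L\ge 2$, the polynomial prefactor is absorbed into the $\n{a}^2$ gap between the exponents $L-1$ and $L-3$, which establishes \eqref{equ-Probability of choosing a unit vector goes to one}. The vanishing claim then follows immediately, as $LE_1(L)=(L-3)\log\n{a}\to\infty$ whenever $\n{a}^2\ge 2$. I expect the Beta-normalizer lower bound to be the only delicate point, since it must hold uniformly in the growing dimension; the remaining estimates are just monotonicity of $(1-t)^{-1/2}$ and a single power-law integral.
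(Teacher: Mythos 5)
Your proposal is correct, and it takes a genuinely different route from the paper. The paper never touches the incomplete Beta function analytically: its proof of the corollary goes through Lemma \ref{lem-lower bound for the distribution of the squared cosine of the phase}, a purely probabilistic comparison in which the numerator $h_1^2$ is enlarged to $h_1^2+h_2^2$ so that the squared cosine is stochastically dominated (in the relevant direction) by the minimum of $\left(\floor*{\frac{L}{2}}-1\right)$ i.i.d.\ uniforms; the tail then has the closed form $(1-\alpha)^{\floor*{L/2}-1}$, and substituting $\alpha=1-\|\b{a}\|^{-2}$ gives exactly $\|\b{a}\|^{-(L-3)}=e^{-LE_1(L)}$ after the floor is bounded by $\frac{L-1}{2}-1$. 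You instead stay with the exact Beta law from Theorem \ref{the-Probability for having a unit vector as the maximaizer}: you flip the upper tail to a lower tail via $1-I_x(a,b)=I_{1-x}(b,a)$, bound the incomplete Beta integral by monotonicity of $(1-t)^{-1/2}$ plus a power-law integral, and lower-bound the normalizer $B\left(\tfrac{L-1}{2},\tfrac12\right)\geq\sqrt{2\pi/(L-1)}$ with the Wendel/Gautschi inequality $\Gamma\left(x+\tfrac12\right)/\Gamma(x)\leq\sqrt{x}$ — all steps check out, including the final absorption of the prefactor using $\|\b{a}\|^2\geq 2$. The trade-off: the paper's route is more elementary (no special-function inequalities, only the exponential/uniform order-statistics observation) but pays with parity bookkeeping (the floor and the separate odd-$L$ case) and lands exactly on $\|\b{a}\|^{-(L-3)}$; your route needs one nontrivial Gamma-ratio inequality holding uniformly in the growing dimension, but avoids the even/odd split and actually delivers a strictly sharper bound, $\frac{2}{\sqrt{\pi(L-1)}}\|\b{a}\|^{-(L-1)}$, which beats the stated $e^{-LE_1(L)}$ by an extra factor of order $\|\b{a}\|^{-2}L^{-1/2}$.
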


Corollary \ref{cor-Probability for having a unit vector as the maximaizer goes to one} clarifies that for every power $\text{P}$, as the number of users grows, the probability of having a non-trivial vector $\b{a}$ as the maximizer of the achievable rate tends to 0. Note that the assumption of $L>3$, which arises naturally from this paper's regime, along with the fact that $\|\b{a}\|\geq2$, guarantees that $E_1(L)$ is positive. Figure \ref{fig-Probability_for_having_unit_vector} depicts the probability in \eqref{equ-Probability of unit vector as minimizer of f}, it's upper bound given in \eqref{equ-Probability of choosing a unit vector goes to one} and simulation results. From the analytic results as well as the simulations on the rate of decay, one can deduce that even for relatively small values of simultaneously transmitting users ($L>20$), a relay will prefer to choose a unit vector. Also, one can observe from the results and from the analytic bound that as the norm of $\b{a}$ grows, the rate of decay increases. This faster decay reflects the increased penalty of approximating a real vector using an integer-valued vector.

%\FMOR{Talk with Asaf again about the following paragraph and Theorem which I think are not needed... The reason is that the first theorem talk about the optimal $\b{a}$ and thus all other vectors won't be chosen in the first place, i.e., they have probability zero. This is a consequence of an optimization problem. On the other hand the theorem is very important for the proof of the decaying sum-rate....}

Note that Corollary \ref{cor-Probability for having a unit vector as the maximaizer goes to one} refers to the probability that a non-trivial \emph{fixed} $\b{a}$ will be the maximizer of $f$. Next, we wish to explore this probability for any possible $\b{a}$ \emph{that is not a unit vector}, yet satisfies $\|\b{a}\|^2< 1+\text{P}\|\b{h}\|^2$. Define by $P_r(\b{e})$ the probability that a relay picked any unit vector as the coefficient vector, and by $P_r(\overline{\b{e}})$  the probability that a non-trivial vector was chosen.

\begin{theorem}\label{the-Probability for having a unit vector as the maximaizer over all other vectors}
Under the CF scheme, the probability that any other non-trivial coefficient vector $\b{a}$ will be chosen to maximize the achievable rate $\mathcal{R}(\b{h},\b{a})$ compared with any unit vector $\b{e}_i$, as the number of simultaneously transmitting users grows, is zero. That is, 
\begin{equation}
\lim_{L \rightarrow \infty} P_r(\overline{\b{e}})= 0.
\end{equation}
\end{theorem}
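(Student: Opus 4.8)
The plan is to derive Theorem~\ref{the-Probability for having a unit vector as the maximaizer over all other vectors} from the per-vector estimate of Corollary~\ref{cor-Probability for having a unit vector as the maximaizer goes to one} by a union bound over all admissible integer vectors, and then to show that the resulting sum over the lattice vanishes as $L\to\infty$. The first step is to describe $\overline{\b{e}}$ explicitly: a non-trivial vector is selected precisely when some non-unit $\b{a}$ in the search domain $2\le\n{a}^2<1+\text{P}\n{h}^2$ achieves $f(\b{a})\le\min_i f(\b{e}_i)$, so that
\begin{equation*}
\overline{\b{e}}\subseteq\bigcup_{\b{a}:\,2\le\n{a}^2<1+\text{P}\n{h}^2}\left\{f(\b{a})\le\min_i f(\b{e}_i)\right\}.
\end{equation*}
The obstacle in applying the union bound directly is that the index set is itself random, since it depends on $\n{h}^2$; moreover, an \emph{unrestricted} union over all of $\Z^L$ diverges, so the feasibility constraint $\n{a}^2<1+\text{P}\n{h}^2$ is essential. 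I would handle this by conditioning on the high-probability concentration event $A=\{\n{h}^2\le 2L\}$. Because $\n{h}^2$ is $\chi^2$ with $L$ degrees of freedom, standard tail bounds give $P_r(A^c)\to 0$, and on $A$ every admissible $\b{a}$ obeys $\n{a}^2<1+2\text{P}L=:B_0$, a deterministic bound of order $L$.

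Writing $P_r(\overline{\b{e}})\le P_r(A^c)+P_r(\overline{\b{e}}\cap A)$, the second term can be bounded by a union bound over the now \emph{deterministic} finite index set $\{2\le\n{a}^2\le B_0\}$, using the unconditional per-vector estimate of Corollary~\ref{cor-Probability for having a unit vector as the maximaizer goes to one}. The problem thus reduces to showing that
\begin{equation*}
S_L:=\sum_{\b{a}:\,2\le\n{a}^2\le B_0}P_r\!\left(f(\b{a})\le\min_i f(\b{e}_i)\right)\le\sum_{\b{a}:\,2\le\n{a}^2\le B_0}\n{a}^{-(L-3)}
\end{equation*}
tends to zero, where the inequality is exactly the bound $e^{-LE_1(L)}=\n{a}^{-(L-3)}$ of the corollary.

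The heart of the argument, and the step I expect to be the main obstacle, is estimating $S_L$: one must pair the per-vector decay $\n{a}^{-(L-3)}$ with an accurate count of the admissible lattice points, since the crude bound that replaces every summand by the largest, $2^{-(L-3)/2}$, is too lossy for large $\text{P}$ (most feasible vectors have large norm, hence tiny probability). I would instead group vectors by norm and compare the sum to the volume integral, using that the number of integer points in an $L$-ball is of the order of its volume. Passing to polar coordinates,
\begin{equation*}
S_L\lesssim\frac{2\pi^{L/2}}{\Gamma(L/2)}\int_{\sqrt{2}}^{\sqrt{B_0}}r^{-(L-3)}r^{L-1}\,dr=\frac{2\pi^{L/2}}{\Gamma(L/2)}\cdot\frac{B_0^{3/2}-2\sqrt{2}}{3},
\end{equation*}
and since $B_0=\Theta(L)$ while Stirling's formula gives $\pi^{L/2}/\Gamma(L/2)=(2\pi e/L)^{L/2}\,\mathrm{poly}(L)$, the super-exponential factor $(2\pi e/L)^{L/2}$ overwhelms the polynomial growth of $B_0^{3/2}$, forcing $S_L\to 0$. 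Combined with $P_r(A^c)\to 0$ this yields $\lim_{L\to\infty}P_r(\overline{\b{e}})=0$. The delicate points to make rigorous are the lattice-to-integral comparison, where one must control the cube-covering shift of order $\sqrt{L}$ (e.g.\ via an Abel-summation bound on the cumulative counts $\#\{\n{a}\le r\}\le\mathrm{Vol}(B^L_{r+\sqrt{L}/2})$), and verifying that no polynomial corrections from Stirling or from the radial integral ever overtake the decisive $(2\pi e/L)^{L/2}$ decay.
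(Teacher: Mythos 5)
Your reduction is sound up to a point: conditioning on the concentration event $A=\{\n{h}^2\le 2L\}$, union-bounding over the deterministic set $\{2\le\n{a}^2\le B_0\}$, and invoking the per-vector bound $\n{a}^{-(L-3)}$ is a legitimate route, and the target claim $S_L\to 0$ is in fact true. However, the step where you estimate $S_L$ by the volume integral is genuinely wrong, and it is not a technicality that can be patched by the cube-covering correction you mention. In dimension $L$, at radii of order $1$ up to $o(\sqrt{L})$, lattice-point counts exceed ball volumes by unbounded factors: the shell $\n{a}^2=2$ alone contains $4\binom{L}{2}=2L(L-1)$ integer vectors, each contributing $2^{-(L-3)/2}$, so $S_L\ge 2L(L-1)\,2^{-(L-3)/2}$, which is only exponentially small, whereas your integral $\frac{2\pi^{L/2}}{\Gamma(L/2)}\cdot\frac{B_0^{3/2}}{3}$ is of order $e^{-\frac{L}{2}\ln\frac{L}{2\pi e}}$, super-exponentially small — so the claimed inequality $S_L\lesssim[\text{volume integral}]$ fails. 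Worse, the acknowledged fix via $\#\{\n{a}\le r\}\le\mathrm{Vol}(B^L_{r+\sqrt{L}/2})$ overshoots fatally: at $r=\sqrt{2}$ that volume is of order $(\pi e/2)^{L/2}$, and multiplying by the probability $2^{-(L-3)/2}$ leaves $(\pi e/4)^{L/2}\to\infty$ since $\pi e/4>1$. So the Abel-summation bound you propose diverges rather than vanishes; the volume heuristic simply does not see the correct answer in this regime. A correct version of your route must count lattice points combinatorially by shells, e.g.\ $\#\{\b{a}\in\Z^L:\n{a}^2=k\}\le\binom{L+k}{k}2^k$, giving a shell-$k$ contribution of at most $\exp\bigl(k\ln\tfrac{2e(L+k)}{k}-\tfrac{L-3}{2}\ln k\bigr)$; one then checks the sum is dominated by $k=2$ and is $\Theta\bigl(L^2 2^{-L/2}\bigr)\to 0$.

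The paper sidesteps this entire difficulty with a structural fact you did not use: by the result of \cite{sahraei2014compute}, the set of vectors that can possibly maximize the rate (the candidates $\mathcal{A}_{\b{h}}$) has cardinality at most $2L\bigl(\ceil*{\sqrt{1+\text{P}\n{h}^2}}+1\bigr)$ — \emph{polynomial} in $L$, not the exponentially large set of all lattice points in the ball. The paper's union bound therefore has only $O\!\left(L\sqrt{\text{P}}\,\n{h}\right)$ terms, each at most $2^{-\frac{L-1}{2}+1}$ by Corollary \ref{cor-Probability for having a unit vector as the maximaizer goes to one}, and the product vanishes immediately (the randomness of $\n{h}^2$ is absorbed by the strong law of large numbers rather than a concentration event). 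If you want to keep your more elementary, self-contained approach, replace the volume comparison by the explicit shell count above; otherwise the candidate-set cardinality bound is the missing ingredient that makes the union bound painless.
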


\begin{figure}[t]
\centering
    \includegraphics[width=0.5\textwidth]{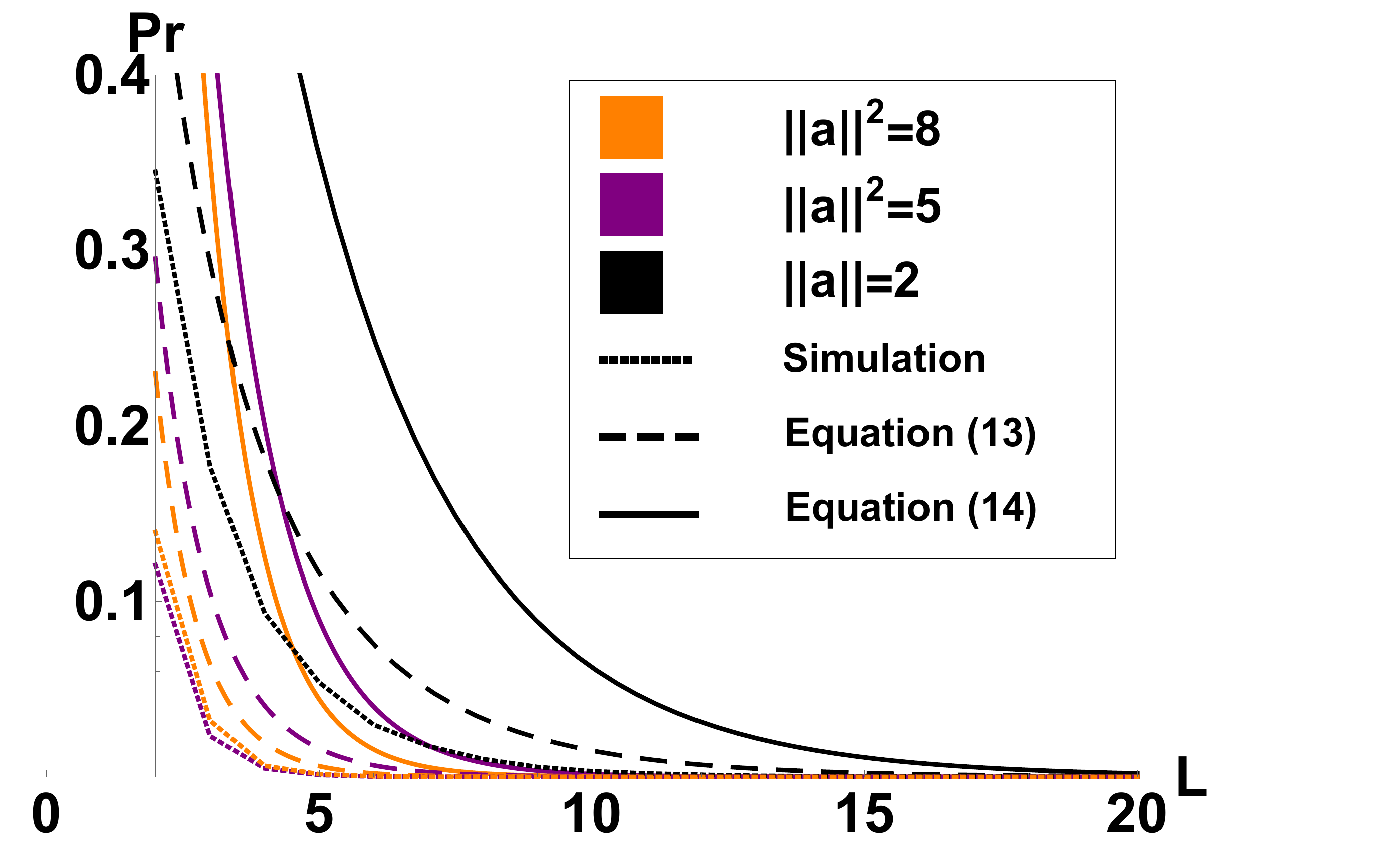}
\caption{The upper bounds on the probability of not having a unit vector as the minimizer of $f$ ,given in \eqref{equ-Probability of choosing a unit vector goes to one} (solid lines) and \eqref{equ-Probability of unit vector as minimizer of f} (dashed lines)  compared to simulation results (dotted lines) for various values of $\|\b{a}\|^2$. The $x$ axis is the number of simultaneously transmitting users.}
\label{fig-Probability_for_having_unit_vector}
\end{figure}

The main consequences of Theorem \ref{the-Probability for having a unit vector as the maximaizer over all other vectors} are the following results, that apply for the general model with $M$ relays. The results describe the behavior of the achievable rate of any relay and the system's sum-rate at the limit of a large number of simultaneously transmitting users.
\begin{theorem}\label{the-Achievable is going to zero}
As $L$ grows, the achievable rate of a relay converges to zero in probability, that is, 
\begin{equation}
\lim_{L \rightarrow \infty}  P_r(\cR(\b{h},\b{a})>\epsilon) = 0,
\end{equation}
for all $\epsilon>0$.
\end{theorem}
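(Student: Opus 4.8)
The plan is to combine Theorem~\ref{the-Probability for having a unit vector as the maximaizer over all other vectors} with a concentration estimate on the ratio $\max_i h_i^2/\|\b{h}\|^2$. Here $\cR(\b{h},\b{a})$ denotes the best rate the relay can achieve, i.e. the rate attained by the maximizing coefficient vector $\b{a}^{max}$. I would first condition on whether $\b{a}^{max}$ is a unit vector, splitting
\begin{equation*}
P_r\big(\cR(\b{h},\b{a})>\epsilon\big)=P_r\big(\cR>\epsilon,\,\b{e}\big)+P_r\big(\cR>\epsilon,\,\overline{\b{e}}\big).
\end{equation*}
The second term is at most $P_r(\overline{\b{e}})$, which tends to $0$ by Theorem~\ref{the-Probability for having a unit vector as the maximaizer over all other vectors}; invoking that theorem is exactly what spares me from having to bound the rate directly over all non-trivial integer vectors.

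For the first term, on the event $\b{e}$ the optimal vector is the best unit vector $\b{e}_{i^*}$ with $i^*=\argmax_i h_i^2$ (since $f(\b{e}_i)=(1+\text{P}\|\b{h}\|^2)-\text{P}h_i^2$ is minimized by the largest $h_i^2$). Substituting $\b{e}_{i^*}$ into \eqref{equ-Computation rate with MMSE} gives
\begin{equation*}
\cR(\b{h},\b{e}_{i^*})=\frac{1}{2}\log^+\!\left(1-\frac{\text{P}\max_i h_i^2}{1+\text{P}\|\b{h}\|^2}\right)^{-1},
\end{equation*}
and since $\tfrac{\text{P}\max_i h_i^2}{1+\text{P}\|\b{h}\|^2}\le \tfrac{\max_i h_i^2}{\|\b{h}\|^2}$, the event $\{\cR(\b{h},\b{e}_{i^*})>\epsilon\}$ forces $\max_i h_i^2/\|\b{h}\|^2>1-e^{-2\epsilon}$, a fixed positive constant. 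I would bound this last event by controlling the two factors separately: a union bound together with the Gaussian tail yields $\max_i h_i^2\le 4\log L$ with probability at least $1-2/L$, while $\chi^2$ concentration (equivalently the weak law of large numbers) yields $\|\b{h}\|^2\ge L/2$ with high probability. On the intersection the ratio is at most $8\log L/L$, which tends to $0$ and hence eventually lies below the threshold $1-e^{-2\epsilon}$, so the first term vanishes as well.

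Combining the two contributions gives $P_r(\cR(\b{h},\b{a})>\epsilon)\to 0$ for every $\epsilon>0$. The principal obstacle is the concentration step: I must simultaneously produce an $O(\log L)$ upper bound on the maximum of $L$ i.i.d.\ squared Gaussians and an $\Omega(L)$ lower bound on their sum, and confirm that the resulting $O(\log L/L)$ ratio drops below the $\epsilon$-dependent constant $1-e^{-2\epsilon}$. Everything else is routine once Theorem~\ref{the-Probability for having a unit vector as the maximaizer over all other vectors} has quarantined the non-trivial vectors inside an event of vanishing probability, reducing the whole problem to the single strongest-user rate $\cR(\b{h},\b{e}_{i^*})$.
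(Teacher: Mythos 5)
Your proposal is correct, and its skeleton coincides with the paper's proof in Appendix \ref{AppendixA}: both split on the event that the rate-maximizing coefficient vector is a unit vector, both use Theorem \ref{the-Probability for having a unit vector as the maximaizer over all other vectors} to quarantine the non-trivial vectors, and both reduce the unit-vector case to showing that $\max_i h_i^2/\|\b{h}\|^2$ falls below a fixed $\epsilon$-dependent threshold. Where you diverge is in the two probability estimates. For the non-unit-vector term you simply bound the joint probability by $P_r(\overline{\b{e}})$ and invoke Theorem \ref{the-Probability for having a unit vector as the maximaizer over all other vectors}; the paper instead runs Markov and Jensen inequalities on the conditional rate given $\b{h}\in H_{\overline{e}}$ and multiplies by the exponential bound on $P_r(\overline{\b{e}})$ --- more work than strictly necessary, but it produces an explicit decay rate $\frac{1}{\epsilon}\text{P}L\sqrt{2L}e^{-LE_3(L)}$ rather than bare convergence to zero. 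For the unit-vector term you use high-probability concentration ($\max_i h_i^2 \leq 4\log L$ via a union bound on Gaussian tails, $\|\b{h}\|^2 \geq L/2$ via the $\chi^2$ lower tail), whereas the paper applies Markov's inequality to the ratio and bounds $\E\left[\max_i h_i^2/\|\b{h}\|^2\right] \leq \E\left[\max_i h_i^2\right]\E\left[1/\|\b{h}\|^2\right]$ by exploiting the negative covariance between the two factors, then plugs in the extreme-value asymptotics $\E\left[\max_i h_i^2\right] = 2\ln L + O(1)$ and the inverse-$\chi^2$ mean $1/(L-2)$. Your route is more elementary (no extreme-value table, no covariance argument) and cleaner on conditioning, since you pass to unconditional probabilities immediately instead of using $\E\left[\cdot \,\middle|\, \b{h}\in H_e\right] \leq \E\left[\cdot\right]/P_r(\b{e})$; the paper's route buys the quantitative $O(\ln L / L)$ bound on the expected ratio. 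The only cosmetic discrepancy is your threshold $1-e^{-2\epsilon}$ versus the paper's $1-2^{-2\epsilon}$ (natural versus base-2 logarithm in the rate expression), which is immaterial since either is a fixed positive constant.
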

\begin{corollary}\label{the-Sum rate is going to zero}
As $L$ grows, the sum-rate of CF for a general system with $M$ relays converges to zero  in probability, that is,
%\begin{equation}
%\lim_{L \rightarrow \infty} \sum_{l=1}^{L} \min_{m:a_{ml} \neq 0} \cR(\b{h}_m,\b{a}_m) = 0.
%\end{equation}
\begin{equation}
\lim_{L \rightarrow \infty}  P_r\left(\frac{R  L}{ N}>\epsilon\right) = 0,
\end{equation}
for all $\epsilon>0$, where $R<\min\limits_{n=1,...,N}\min_{m} \cR(\b{h}_m(n),\b{a}_m(n))$.
\end{corollary}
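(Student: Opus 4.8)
The plan is to decouple the two sources that drive the sum-rate to zero: the per-slot rate $R$, which is already controlled by Theorem \ref{the-Achievable is going to zero}, and the completion time $N$, which must grow at least linearly in $L$ because each slot contributes only a bounded number of degrees of freedom. The whole statement then reduces to the single-relay rate convergence already established.

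First I would establish a deterministic lower bound on $N$. In each transmission slot the $M$ relays produce exactly $M$ coefficient vectors, so the rank of $\b{A}_n$ can increase by at most $M$ per slot. Since successful decoding requires $rank(\b{A}_N)=L$, we must have $N \geq \lceil L/M \rceil \geq L/M$, and hence $\frac{L}{N} \leq M$ always holds. Consequently $\frac{RL}{N} \leq M R$ deterministically, which removes the growing factor $L$ from the numerator and leaves only the fixed constant $M$.

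Next I would bound $R$ by a single computation rate. From the rate constraint $R < \min_{n=1,\ldots,N}\min_m \cR(\b{h}_m(n),\b{a}_m(n))$, a minimum over any nonempty index set is at most any one of its terms, so in particular $R \leq \cR(\b{h}_1(1),\b{a}_1(1)) \leq \max_{\b{a}} \cR(\b{h}_1(1),\b{a})$, the last step using that any chosen coefficient vector achieves at most the maximal achievable rate. This bound holds regardless of the random value of $N$ (we only need $N\geq 1$), so it cleanly sidesteps the fact that the minimum is taken over a random number of slots. Combining the two steps, for any $\epsilon>0$,
\begin{equation*}
P_r\!\left(\frac{RL}{N} > \epsilon\right) \leq P_r\!\left(M R > \epsilon\right) \leq P_r\!\left(\max_{\b{a}} \cR(\b{h}_1(1),\b{a}) > \frac{\epsilon}{M}\right),
\end{equation*}
and since $M$ is a fixed constant, Theorem \ref{the-Achievable is going to zero} gives that the right-hand side tends to $0$ as $L \to \infty$.

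The main obstacle here is conceptual rather than analytic: one must recognize that the completion time supplies exactly the linear factor in $L$ needed to cancel the $L$ in the numerator, reducing the sum-rate question to the per-relay rate convergence. The only remaining care is in justifying the deterministic inequality $N \geq L/M$ (via the per-slot rank increment being at most $M$) and in handling the random number of slots by passing to a single fixed term in the minimum, both of which are elementary.
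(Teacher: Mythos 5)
Your proposal is correct and takes essentially the same route as the paper's proof in Appendix \ref{AppendixB}: both arguments rest on the two observations that $\frac{L}{N}\leq M$ (the rank of the decoding matrix grows by at most $M$ per slot, so $N\geq L/M$) and that $R$ is bounded by a single relay's optimal computation rate, which vanishes in probability. The only difference is presentational: you invoke Theorem \ref{the-Achievable is going to zero} as a black box after rescaling the threshold to $\epsilon/M$, whereas the paper re-runs the case analysis of Appendix \ref{AppendixA} (unit-vector versus non-unit-vector coefficient choice) with the constant $M$ carried through; your rescaling is the cleaner way to absorb the fixed constant.
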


Theorem \ref{the-Achievable is going to zero} and corollary \ref{the-Sum rate is going to zero} show that the sum-rate of all decoded linear combinations tends to zero as the number of simultaneously transmitting users grows. Thus, applying the CF coding scheme in its vanilla settings, where all transmitters transmit and the relays decode linear combinations of all transmitted messages, is futile. A possible course of action, while still applying CF, is to restrict the number of simultaneously transmitting users in each transmission slot. In the sequel, we show that by performing this restriction, not only a strictly positive sum-rate is achievable, smart scheduling can provide an overall gain to the system's sum-rate. 

\subsection{Scheduling in CF}\label{subsec-Scheduling in CF}

%The previous subsection showed that the optimal solution for the maximization problem presented in \eqref{equ-Problem statment} results in a sum-rate which tends to zero. However, as mentioned in Section \ref{sec-Problem statement}, extending this maximization problem to the case where user scheduling is allowed in each slot improves this decaying sum-rate.

In this part of the work, we will suggest a scheduling policy, and prove its asymptotic optimality by meeting a global upper bound on the system's sum-rate. The scheduling policy is based on setting the number of scheduled users in each slot properly, as well as identifying the specific subset and a specific set of coefficient vectors from which the relay chooses the linear combination for the scheduled users. We note that for the case of a general system, with $M>1$ relays, our suggested scheduling policy should be further adapted to the restrictions which arise from the presence of multiple relays; we discuss this in Section \ref{sec-Multiple Relays}.

We first present the scheduling problem under the model of $L$ users and a single relay. We first assume that there is a scheduler (e.g., the relay itself) that handles the scheduling process. In Section \ref{sec-Distributed Scheduling} we discuss the implementation of the scheduling process in a distributed manner, i.e., without having a centralized entity. We assume that in each slot $i$ a subset of $k$ users is chosen by the scheduler. This subset is denoted by $\cS_k(i)$. The value of $k$ is fixed and will be dealt with in the sequel. Accordingly, the total number of subsets is ${L \choose k}$, each having a channel vector which we denote by $\b{h}(\cS_k(i))$, and a corresponding coefficient vector $\b{a}(\cS_k(i))$. In addition, we denote by $\b{h}_L(i)$ the channel vector of all users, to create a distinction with the channel vector of the scheduled users. As mentioned above, the scheduling problem consists of two highly connected optimization problems. The first can be viewed as finding the proper subset of users $\cS_k(i)$, and the second is finding the proper $\b{a}(\cS_k(i))$ for the $\b{h}(\cS_k(i))$ of the selected users, to maximize the system's sum-rate. Let us denote by $\cS_k^N=\left(\cS_k(1),\cS_k(2),...,\cS_k(N)\right)$ the sequence of all subsets of users which where scheduled until complete decoding. In addition, denote by $\cA_k^N=\left(\b{a}(\cS_k(1)),\b{a}(\cS_k(2)),...,\b{a}(\cS_k(N))\right)$ the coefficients matrix with the coefficient vectors in all slots. Note that since $\cA_k^N\in \Z^{N\times k}$, yet decoding is preformed over dimension $L$, we must map $\cA_k^N$ to $\b{A}_N$ by setting each row of $\b{A}_N$ to have the entries of $\b{a}(\cS_k(i))$ for the scheduled users and $0$ otherwise. We denote this map as $(\cdot)_\Uparrow^L:\b{a}(\cS_k(i)) \mapsto \b{a}(i)$. For example, let $\cS_3(1)=\{1,2,5\}$ be the 3 users that where scheduled, with coefficient vector $\b{a}(\cS_3(1))=(1,1,2)$ in the first slot. Then, for $L=6$, we have $(\b{a}(\cS_3(1)))_\Uparrow^6=\b{a}(1)=(1,1,0,0,2,0)$. Furthermore, it will be useful to define the set of vectors of length $L$, which have $k$ non-zero values taken from a certain set $\Omega$ as $\Ss_{L,k}^\Omega$. For example, for $\Omega=\{1\}$, the set $\Ss_{L,2}^{\{1\}}$ is all binary vectors of length $L$ with exactly two ones.

We thus modify the maximization in \eqref{equ-Problem statment} to include our scheduling problem,
\begin{equation}\label{equ-Problem statment with scheduling}
\max_{\substack{\cS_k^N, \ \cA_k^N \in \Z^{N\times k}} } \frac{R(\cS_k^N,\cA_k^N)  L}{N((\cA_k^N)_\Uparrow^L)}.
\end{equation}

In this work, we provide a polynomial-time (in both $k$ and $L$) scheduling algorithm, described in Algorithm \ref{algo-scheduling algorithm for all transmission}, Section \ref{Sec-Scheduling_in_CF}. Algorithm \ref{algo-scheduling algorithm for all transmission} finds the asymptotically (with $L$) optimal schedule for the maximization problem in \eqref{equ-Problem statment with scheduling} for all transmission slots as well as the coefficient vectors which ensure complete decoding. Its asymptotic guarantees are bellow.

\begin{theorem}\label{the-Expected achievable rate of scheduling algorithm lower bound}
\textit{The asymptotic expected achievable rate Algorithm \ref{algo-scheduling algorithm for all transmission} achieves in each slot, for $L\geq4$,
%while minimizing the number of transmission slots 
is lower bounded by the following,}
\begin{equation*}
	\EX\left[\cR_{ach}^{sch}\right] \geq  \frac{1}{2}  \log^+ \left( k \left(1-\frac{\text{P}ku^4}{(u+\delta)^2(1+\text{P}ku^2)}(1-o(1)) \right)\right)^{-1},
\end{equation*}
\textit{where $u=\sqrt{2\ln{\frac{\delta\sqrt{L}}{\sqrt{2\pi}}}}-\delta$, $\delta=\frac{1}{\ln{L}}$ and $o(1) \rightarrow 0$ as $L \rightarrow \infty$.}
\textit{Thus, the expected achievable rate for Algorithm \ref{algo-scheduling algorithm for all transmission} in each slot scales at least as $O(\frac{1}{4}\log{\log{L}})$. }
\end{theorem}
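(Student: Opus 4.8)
The plan is to lower-bound the achievable rate by analyzing one explicit, feasible choice the algorithm can make, rather than the optimal one. Reading the algorithm, in each slot it isolates $k$ users whose channel coefficients all fall inside a narrow window $[u,u+\delta]$ placed far out in the right tail of the standard normal, and decodes the all-ones combination, i.e.\ it takes $\b{a}(\cS_k)=\mathbf{1}\in\Ss_{L,k}^{\{1\}}$ (lifted to length $L$). The proof then splits into three parts: (i) show that $k$ such "aligned" users exist with probability $1-o(1)$; (ii) on that event, bound $\cR$ from below \emph{deterministically}, using only the window membership $u\le h_i\le u+\delta$; and (iii) substitute the asymptotics of $u$ to read off the scaling law.

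For part (i), the \textbf{concentration step}, I would let $Z=|\{l:h_l\in[u,u+\delta]\}|$, which is $\mathrm{Binomial}(L,q)$ with $q=\Phi(u+\delta)-\Phi(u)$. Using $q\approx\delta\,\phi(u)$ together with the Gaussian tail, the precise choice $u=\sqrt{2\ln(\delta\sqrt L/\sqrt{2\pi})}-\delta$ with $\delta=1/\ln L$ (well-defined and positive for $L\ge4$) is calibrated so that $\phi(u)\approx 1/(\delta\sqrt L)$, whence $\EX[Z]=Lq$ grows polynomially (like $\sqrt L$) and dominates any fixed $k$. A Chernoff/binomial tail bound then gives $\Pr(Z\ge k)=1-o(1)$. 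I expect this to be the \textbf{main obstacle}: the window edge $u$ must be inverted from the tail condition carefully (the $-\delta$ correction and Mills-ratio estimates matter), and $\delta$ must simultaneously be small enough to control $\|\b{h}\|^2$ in part (ii) yet large enough to keep $\EX[Z]\to\infty$.

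For part (ii), the \textbf{rate bound}, I would condition on $\{Z\ge k\}$, pick any $k$ in-window users, and write $\b{h},\b{a}$ for their $k$-dimensional channel and (all-ones) coefficient vectors. Then $\|\b{a}\|^2=k$, while window membership gives $\b{h}^T\b{a}=\sum_i h_i\ge ku$ and $\|\b{h}\|^2=\sum_i h_i^2\le k(u+\delta)^2$ (and the feasibility constraint $\|\b{a}\|^2\le1+\text{P}\|\b{h}\|^2$ holds for large $L$). Substituting into the rate of Theorem \ref{the-Computation rate with MMSE},
\begin{equation*}
\|\b{a}\|^2-\frac{\text{P}(\b{h}^T\b{a})^2}{1+\text{P}\|\b{h}\|^2}\le k-\frac{\text{P}k^2u^2}{1+\text{P}k(u+\delta)^2}=k\left(1-\frac{\text{P}ku^2}{1+\text{P}k(u+\delta)^2}\right).
\end{equation*}
A short computation gives $\frac{\text{P}ku^2}{1+\text{P}k(u+\delta)^2}=\frac{\text{P}ku^4}{(u+\delta)^2(1+\text{P}ku^2)}\,\rho_L$ with $\rho_L=\frac{(u+\delta)^2(1+\text{P}ku^2)}{u^2(1+\text{P}k(u+\delta)^2)}\to1$, so this $\rho_L=1-o(1)$ is exactly the factor appearing in the statement, and the bound becomes $\|\b{a}\|^2-\frac{\text{P}(\b{h}^T\b{a})^2}{1+\text{P}\|\b{h}\|^2}\le k(1-\frac{\text{P}ku^4}{(u+\delta)^2(1+\text{P}ku^2)}(1-o(1)))$. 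Since $\cR=\tfrac12\log^+(\|\b{a}\|^2-\tfrac{\text{P}(\b{h}^T\b{a})^2}{1+\text{P}\|\b{h}\|^2})^{-1}$ holds on $\{Z\ge k\}$ and $\log^+$ is monotone, bounding $\EX[\cR]\ge\Pr(Z\ge k)\cdot\tfrac12\log^+(\cdots)$ and using $\Pr(Z\ge k)=1-o(1)$ delivers the displayed inequality.

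Finally, for part (iii), the \textbf{scaling law}, I would use $u^2\sim\ln L$ (obtained by inverting the tail condition, the $-2\ln\ln L$ and $-\ln 2\pi$ corrections being lower order). The argument of the logarithm, namely the reciprocal of $k(1-\frac{\text{P}ku^4}{(u+\delta)^2(1+\text{P}ku^2)}(1-o(1)))$, then behaves like $\text{P}u^2\sim\text{P}\ln L$, so $\EX[\cR_{ach}^{sch}]$ grows like $\tfrac12\log(\text{P}\ln L)=\Theta(\log\log L)$, which is the claimed scaling. The only remaining work is routine bookkeeping of the two $o(1)$ terms — one deterministic, from the algebraic rewriting via $\rho_L$, and one probabilistic, from $\Pr(Z\ge k)$ — to confirm that neither disturbs the leading $\log\log L$ growth.
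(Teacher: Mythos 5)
Your proposal is correct and in substance follows the same route as the paper's proof: the same feasible-choice construction (a set of $k$ users whose channel magnitudes lie in the tail window $[u,u+\delta]$, decoded with the sign-adjusted all-ones vector), the same calibration $u=\sqrt{2\ln(\delta\sqrt{L}/\sqrt{2\pi})}-\delta$, $\delta=1/\ln L$, the same binomial/Chernoff tail bound on the number of in-window users, the same deterministic bounds $\b{h}^T\b{1}\geq ku$ and $\|\b{h}\|^2\leq k(u+\delta)^2$ (equivalent to the paper's norm and $\cos^2$ bounds), and the same inversion of $u^2\sim\ln L$ for the scaling law. There are two differences worth flagging, neither a gap. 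First, the paper applies Jensen's inequality to pull the expectation inside the logarithm and then conditions on the event $\xi$, so the probabilistic factor $P_r(\xi)$ lands \emph{inside} the log and is exactly the $(1-o(1))$ of the statement; you instead use $\EX[\cR^{sch}_{ach}]\geq \Pr(Z\geq k)\cdot\inf_{\{Z\geq k\}}\cR$, which is more elementary (no Jensen) but leaves the $(1-o(1))$ \emph{outside} the log, so your final bound is $(1-o(1))$ times the theorem's right-hand side rather than literally the stated display; one extra line absorbing the outer factor into the inner unspecified $o(1)$ (possible since the log term diverges while $1-\Pr(Z\geq k)$ is super-polynomially small), or simply observing that the $\tfrac14\log\log L$ scaling is unaffected, closes this. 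Second, your identification of the statement's $(1-o(1))$ with your algebraic ratio $\rho_L$ is a misattribution: $\rho_L=\frac{(u+\delta)^2(1+\text{P}ku^2)}{u^2(1+\text{P}k(u+\delta)^2)}$ satisfies $\rho_L\geq 1$ (numerator minus denominator is $2u\delta+\delta^2>0$), so it works in the favorable direction and can simply be replaced by $1$; in the paper the $(1-o(1))$ factor is $P_r(\xi)$, a probabilistic quantity, not an algebraic one. Your one-sided window $[u,u+\delta]$ (versus the paper's two-sided $[-(u+\delta),-u]\cup[u,u+\delta]$) only halves the per-user success probability and changes nothing asymptotically.
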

The values for $u$ and $\delta$ relate to the characteristics of the scheduled users' channel, and determine the search domain for the optimal schedule, as will be explained in the sequel. They are chosen to provide asymptotically optimal result. The suggested scheduling scheme, however, is independent of $u$ and $\delta$ and works well even for moderate number of transmitters, as Figure \ref{fig-Optimal schedule CF comparisons} depicts. In addition, although it is left as a parameter, the number of scheduled users $k$ in each slot, under our scheduling paradigm, should be $O(\log{L})$; we refer to this choice in the sequel as well.
%Therefore, for large enough $L$,
%\begin{multline*}
%\EX\left[\cR_{ach}^{sch}\right] \geq \frac{1}{2}\log^+ \left(k\left(1-\frac{u^2}{(u+\delta)^2}\right)\right)^{-1} \\ \sim \frac{1}{2}\log^+ \left( \frac{1}{k} + \sqrt{\ln{L}} \right).
%\end{multline*}
We thus have the following corollary.
\begin{corollary}\label{cor-expected system sum-rate is lower bound using scheduling}
The expected system's sum-rate that Algorithm \ref{algo-scheduling algorithm for all transmission} can achieve for a single relay system is lower bounded by,
\begin{equation}
\EX \left[ C_{SR}\right] \geq  (\frac{1}{4}-\epsilon)\log{\log{L}}(1-o(1))\\
\end{equation}
\textit{where $\epsilon$ is a small positive constant and $o(1) \rightarrow 0$ as $L \rightarrow \infty$.}%\FMOR{the small $\epsilon$ is chosen so the limit in the proof will go to zero. One can choose it as he please.}}
\end{corollary}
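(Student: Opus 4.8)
The plan is to deduce the corollary from the per-slot bound of Theorem \ref{the-Expected achievable rate of scheduling algorithm lower bound} by pinning down the completion time $N$ of the single-relay system. By \eqref{equ- system sum-rate definition}, $C_{SR}=RL/N$, so I would bound the two factors separately. For a single relay each slot contributes exactly one row to the decoding matrix $\b{A}_N$, so $\text{rank}(\b{A}_N)=L$ forces $N\ge L$; consequently $L/N\le 1$ always, and the whole content of the corollary is that this ratio is asymptotically tight, i.e.\ that Algorithm \ref{algo-scheduling algorithm for all transmission} realizes rank $L$ in $N=L(1+o(1))$ slots while keeping the common rate $R$ at the level of Theorem \ref{the-Expected achievable rate of scheduling algorithm lower bound}.

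First I would argue that the algorithm attains $N=L(1+o(1))$. In a slot where $\text{rank}(\b{A}_n)=r<L$ the collected rows span a proper subspace $V_r$, and it suffices to produce one further good coefficient vector outside $V_r$. As long as some coordinate has never been scheduled, every previous row vanishes there, so $V_r\subseteq\{x:x_j=0\}$ and any near-optimal vector supported on that fresh user automatically lies outside $V_r$; once all coordinates are covered, the genericity of the independently drawn channel guarantees that, with probability $1-o(1)$, the rich family of rate-optimal vectors in $\Ss_{L,k}^\Omega$ is not contained in the fixed proper subspace $V_r$. Thus a rank-increasing good vector is available in all but an $o(1)$ fraction of slots, only $o(L)$ slots are wasted, and $N=L(1+o(1))$ with high probability.

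The main obstacle is transferring the per-slot guarantee to the common rate $R$, which by \eqref{equ-rate restriction for successful decoding} equals $\min_{n\le N}\cR_n$ rather than a typical slot; this is delicate because Jensen's inequality only bounds $\EX[\min_n\cR_n]$ from above. I would overcome it by upgrading Theorem \ref{the-Expected achievable rate of scheduling algorithm lower bound} to a high-probability statement: with $u=\sqrt{2\ln(\delta\sqrt L/\sqrt{2\pi})}-\delta$ and $\delta=1/\ln L$, the event that among the users present in a slot there exist $k$ whose channel coefficients fall in the targeted window around $u$ fails with probability only $o(1/L)$, by the same extreme-value estimates that drive the proof of Theorem \ref{the-Expected achievable rate of scheduling algorithm lower bound}. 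A union bound over the $N\le L(1+o(1))$ slots then makes every slot simultaneously achieve the rate of Theorem \ref{the-Expected achievable rate of scheduling algorithm lower bound}, so that $R=\min_n\cR_n\ge\frac{1}{4}\log\log L(1-o(1))$ holds with probability $1-o(1)$.

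Finally I would assemble the pieces. On the high-probability event where simultaneously $N=L(1+o(1))$ and every slot is favorable, $C_{SR}=RL/N=R(1-o(1))\ge\frac{1}{4}\log\log L(1-o(1))$, while on the complementary $o(1)$-probability event I use only $C_{SR}\ge 0$. Taking expectations and folding the vanishing failure probability together with the slot-wise concentration slack into an arbitrarily small constant $\epsilon$ yields $\EX[C_{SR}]\ge(\frac{1}{4}-\epsilon)\log\log L(1-o(1))$ for every fixed $\epsilon>0$ and all $L$ large enough. The two delicate ingredients are exactly the linear-independence count that makes $L/N\to 1$ and the union bound that controls the worst slot; both hinge on the pool of favorable users and near-optimal vectors being large enough to survive a union bound over the $\Theta(L)$ slots.
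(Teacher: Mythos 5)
Your high-level decomposition is the same as the paper's, and the rate half of your argument is sound and essentially identical to the paper's: the paper's Lemma \ref{lem-outage probability goes to zero} handles the minimum over slots exactly as you propose, by requiring that \emph{every} one of the $L$ slots contains $k$ users in the window $[u,u+\delta]$ (per-slot failure probability $e^{-\Theta(\sqrt{L})}=o(1/L)$, then Bernoulli's inequality, which is your union bound), after which the deterministic bound of Theorem \ref{the-Expected achievable rate of scheduling algorithm lower bound} holds in all slots simultaneously and $R=(\tfrac{1}{4}-\epsilon)\log\log L$ is supportable. Your final assembly (condition on the good event, use $C_{SR}\geq 0$ on its complement) is also the paper's step, which conditions on $\{N=L\}$.

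The genuine gap is your argument that $N=L(1+o(1))$. Algorithm \ref{algo-scheduling algorithm for all transmission} does \emph{not} select coefficient vectors with any regard to rank during the first $L-1$ slots: it picks the rate-maximizing window of $k$ consecutive sorted channel gains, so the \emph{availability} of a near-optimal vector supported on a fresh user, or of good vectors outside $V_r$, is irrelevant --- the algorithm will not choose them unless they happen to be the rate maximizer. Your "genericity" claim also does not apply: the added rows are $\pm 1$ vectors supported on $k$ coordinates, i.e., elements of the finite set $\Ss_{L,k}^{\{-1,1\}}$, and linear dependence among such discrete vectors is not a measure-zero event but has positive probability that must be quantified; moreover, even if the family of near-optimal vectors is not contained in $V_r$, the single vector the algorithm actually picks may still lie in $V_r$. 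The paper closes exactly this hole differently: since the channels are i.i.d., the rows chosen by the algorithm are \emph{uniformly distributed} on $\Ss_{L,k}^{\{-1,1\}}$; reducing modulo $2$ gives uniform rows in $\Ss_{L,k}^{\{1\}}\subset\F_2^L$, and Calkin's random-matrix theorem (Lemma \ref{lem-rank of i in i slots}) yields $rank(\b{A}_{L-1})=L-1$ with probability $1-o(1)$ provided $k>\ln L+\ln 2$ --- this is precisely what forces the choice $k=\lceil\ln L\rceil+1$. Theorem \ref{the-completion time} then finishes with one deterministic unit-vector slot (such a unit vector always exists), giving $N=L$ exactly on the good event rather than $L(1+o(1))$. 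Without an argument of this type (or a modification of the algorithm, which would no longer prove the stated corollary), your completion-time step does not go through.
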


Theorem \ref{the-Expected achievable rate of scheduling algorithm lower bound} and corollary \ref{cor-expected system sum-rate is lower bound using scheduling} indicate that indeed, as the number of users grows, the system's sum-rate grows as well, making scheduling not only mandatory but worthwhile.

In order to have a comparison with the best scheduling policy one can attain, we present an upper bound on the expected performance of any scheduling algorithm and its scaling law, at the limit of large $L$. 

\begin{theorem}\label{the-Expected sum-rate of scheduling algorithm upper bound}
\textit{The expected system's sum-rate of any scheduling algorithm designed for a CF system with a single relay, is upper bounded by the following,}
\begin{equation*}
	\EX\left[C_{SR}\right] \leq \frac{1}{2}\log{\left(1+\text{P}\left(2\ln{L}-\ln{\ln{L}}-2\ln{\Gamma\left(\frac{1}{2}\right)}+\frac{\gamma}{2}+o(1)\right)\right)},
\end{equation*}
\textit{where $\gamma$ is the Euler-Mascheroni constant.}
\textit{Thus, the expected sum-rate of any scheduling algorithm scales at most as $\frac{1}{2}\log{\log{L}}$.}
\end{theorem}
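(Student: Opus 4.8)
The plan is to collapse the multi-slot sum-rate onto a single-slot computation rate, dominate that rate by the point-to-point capacity of the strongest user, and finally evaluate the expectation of the corresponding extreme order statistic.

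First I would exploit the degree-of-freedom bottleneck of a single relay. In each slot the relay decodes a single linear combination, contributing exactly one row to $\b{A}_N$; therefore $rank(\b{A}_N)=L$ forces $N\ge L$, and consequently $C_{SR}=\frac{RL}{N}\le R$ pointwise. Since the common rate $R$ must be decodable in every slot, $R\le\min_{n}\cR(\b{h}(\cS_k(n)),\b{a}(\cS_k(n)))\le\cR^{\star}$, where $\cR^{\star}\triangleq\max_{\cS_k,\,\b{a}}\cR(\b{h}(\cS_k),\b{a})$ denotes the largest rate attainable in a single slot over all choices of scheduled subset and coefficient vector. Because the channel realizations are i.i.d. across slots, $\cR^{\star}$ has the same law in each slot, so $\EX[C_{SR}]\le\EX[R]\le\EX[\cR^{\star}]$, and the problem reduces to bounding one slot.

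Next I would bound $\cR^{\star}$ by the strongest user alone. The naive Cauchy--Schwarz estimate $\cR(\b{h},\b{a})\le\frac{1}{2}\log(1+\text{P}\n{h}^2)$ uses the full norm of the scheduled subset and is too loose here, since $\n{h}^2$ can scale like $k$ times the largest gain and would inflate the leading term by a factor of two. Instead I would invoke the degeneracy established in Theorems \ref{the-Probability for having a unit vector as the maximaizer over all other vectors} and \ref{the-Achievable is going to zero}: in the large-$L$ regime the rate-maximizing coefficient vector is a unit vector, so the best a relay can do in one slot is to decode the strongest user, giving $\cR^{\star}\le\frac{1}{2}\log(1+\text{P}\max_{l}h_l^2)$ up to an asymptotically vanishing correction. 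Concavity of the logarithm then yields $\EX[\cR^{\star}]\le\frac{1}{2}\log(1+\text{P}\,\EX[\max_{l}h_l^2])$.

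It remains to evaluate $\EX[\max_l h_l^2]$. Since each $h_l\sim\mathcal{N}(0,1)$, this is the maximum of $L$ i.i.d.\ $\chi^2_1$ variables, whose tail obeys $\bar{F}(y)\sim\sqrt{2/(\pi y)}\,e^{-y/2}$; a standard extreme-value (Gumbel) analysis produces a centering sequence $2\ln L-\ln\ln L-\ln\pi=2\ln L-\ln\ln L-2\ln\Gamma(\tfrac12)$ together with a lower-order term proportional to the Euler--Mascheroni constant $\gamma$, reproducing the bracketed expression in the statement. Plugging this in gives the claimed bound, and since the argument of the logarithm grows like $2\text{P}\ln L$, the scaling is $\frac{1}{2}\log\log L$. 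The hard part is two-fold. Conceptually, the delicate step is upgrading the degeneracy result --- which holds with high probability per realization --- into a bound on $\EX[\cR^{\star}]$, i.e.\ controlling the rare events in which a non-trivial combination of two or more scheduled users outperforms every unit vector and still contributes to the expectation; this is exactly what separates the correct $\frac{1}{2}\log\log L$ scaling from the factor-two-looser Cauchy--Schwarz bound. Computationally, the main effort is to pin down the exact lower-order constant of $\EX[\max_l h_l^2]$, since only the precise normalizing constants of the $\chi^2_1$ extreme-value law produce the stated $-\ln\ln L-2\ln\Gamma(\tfrac12)+\tfrac{\gamma}{2}$ correction rather than merely the leading $2\ln L$ term.
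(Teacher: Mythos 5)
Your opening and closing steps coincide with the paper's proof: a single relay contributes one row per slot, so $N\geq L$ and $\EX[C_{SR}]\leq \EX[R]\leq\EX[\cR^{\star}]$; Jensen's inequality and the Gumbel asymptotics for the maximum of $L$ i.i.d.\ $\chi^2_1$ variables then produce the stated constant (the paper cites \cite{embrechts2013modelling} for the latter). You are also right that the naive Cauchy--Schwarz bound $\frac{1}{2}\log(1+\text{P}\n{h}^2)$ would double the constant and must be avoided. The gap is in your middle step, where you bound $\cR^{\star}$ by the strongest user's point-to-point capacity. The inequality you need, $\cR(\b{h},\b{a})\leq\frac{1}{2}\log(1+\text{P}\max_i h_i^2)$, is in fact a \emph{deterministic} universal bound, valid for every channel realization and every nonzero integer coefficient vector; the paper simply invokes it from \cite{nazer2016diophantine} as \eqref{equ-universal upper bound on the achievable rate}. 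Your proposed derivation of it from the degeneracy results (Theorems \ref{the-Probability for having a unit vector as the maximaizer over all other vectors} and \ref{the-Achievable is going to zero}) cannot work.

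The reason is that those theorems are statements about a channel vector with i.i.d.\ Gaussian entries of growing dimension, i.e., the regime in which \emph{all} $L$ users transmit. Under scheduling, the relay evaluates the rate on an \emph{optimally selected} subset of $k=O(\log L)$ users whose gains are order statistics, chosen precisely so that $\b{h}(\cS_k)$ is nearly aligned with the all-ones vector; for such subsets degeneracy fails badly --- a unit vector yields roughly $\frac{1}{2}\log\frac{k}{k-1}\rightarrow 0$, while the non-trivial vector $\b{1}$ yields about $\frac{1}{4}\log\log L$ (Theorem \ref{the-Expected achievable rate of scheduling algorithm lower bound}). So the premise ``the best a relay can do in one slot is to decode the strongest user'' is false in exactly the regime this theorem addresses; only the weaker inequality $\cR^{\star}\leq\frac{1}{2}\log(1+\text{P}\max_l h_l^2)$ is true, and it requires the deterministic bound, not degeneracy. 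Moreover, even where degeneracy does apply it is a high-probability statement, and you explicitly defer (``the delicate step\dots'') the control of rare events needed to convert it into a bound on an expectation, so that step is never actually carried out. If you replace your step by the universal bound of \cite{nazer2016diophantine}, the remainder of your argument goes through and is identical to the paper's proof.
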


Theorem \ref{the-Expected sum-rate of scheduling algorithm upper bound} and corollary \ref{cor-expected system sum-rate is lower bound using scheduling} show that Algorithm \ref{algo-scheduling algorithm for all transmission} is asymptotically optimal, and we have
\begin{corollary}\label{cor-optimality of algorithm 1}
\textit{Algorithm \ref{algo-scheduling algorithm for all transmission} attains the optimal scaling law of the expected sum-rate, which is $O(\log\log{L})$.}
\end{corollary}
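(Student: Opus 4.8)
The plan is to obtain Corollary \ref{cor-optimality of algorithm 1} as a direct squeeze between the two bounds already established, so that no new analytic work is required; the only conceptual point to settle is what "optimal scaling law" means once the leading constants of the two bounds disagree.

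First I would invoke Corollary \ref{cor-expected system sum-rate is lower bound using scheduling}, which guarantees that the expected system sum-rate delivered by Algorithm \ref{algo-scheduling algorithm for all transmission} for a single relay satisfies
\[
\EX\left[C_{SR}\right] \ge \left(\tfrac{1}{4}-\epsilon\right)\log\log L\,(1-o(1)).
\]
Since $\tfrac{1}{4}-\epsilon>0$ for sufficiently small $\epsilon$ and $(1-o(1))\to 1$, this shows the algorithm's expected sum-rate grows at least on the order of $\log\log L$, i.e., it is $\Omega(\log\log L)$. Next I would invoke Theorem \ref{the-Expected sum-rate of scheduling algorithm upper bound}, whose bound holds for \emph{every} scheduling algorithm, Algorithm \ref{algo-scheduling algorithm for all transmission} included. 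Expanding its right-hand side, the dominant term is $\tfrac{1}{2}\log\left(1+2\text{P}\ln L\right)$, so the bound behaves as $\tfrac{1}{2}\log\log L\,(1+o(1))$, giving $\EX\left[C_{SR}\right]=O(\log\log L)$. Because this ceiling applies to any scheduler, it certifies that no algorithm can exceed the order $\log\log L$; hence $\log\log L$ is the optimal scaling law for the problem.

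Combining the two, the expected sum-rate of Algorithm \ref{algo-scheduling algorithm for all transmission} is sandwiched,
\[
\left(\tfrac{1}{4}-\epsilon\right)\log\log L\,(1-o(1)) \le \EX\left[C_{SR}\right] \le \tfrac{1}{2}\log\log L\,(1+o(1)),
\]
so it is $\Theta(\log\log L)$ and matches, up to a bounded constant factor, the universal ceiling of Theorem \ref{the-Expected sum-rate of scheduling algorithm upper bound}. The one point I would take care to articulate is that the claim concerns the scaling law, i.e., the growth order in $L$, and not the exact leading constant: the lower bound carries a factor $\tfrac{1}{4}$ while the universal upper bound carries $\tfrac{1}{2}$, a multiplicative gap that is irrelevant to the order. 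Since both bounds are $\Theta(\log\log L)$, Algorithm \ref{algo-scheduling algorithm for all transmission} attains the optimal scaling law $O(\log\log L)$, which is exactly the assertion of Corollary \ref{cor-optimality of algorithm 1}. The main (and essentially only) obstacle here is expository rather than mathematical — making precise that order-optimality, not constant-optimality, is what is being claimed — since both directions of the sandwich are already furnished by the cited results.
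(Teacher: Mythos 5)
Your proposal is correct and takes essentially the same route as the paper: the paper also obtains Corollary \ref{cor-optimality of algorithm 1} by directly combining the lower bound of Corollary \ref{cor-expected system sum-rate is lower bound using scheduling} with the universal upper bound of Theorem \ref{the-Expected sum-rate of scheduling algorithm upper bound}, noting both scale as $O(\log\log L)$. Your explicit remark that the claim concerns order-optimality rather than the leading constant ($\tfrac{1}{4}$ versus $\tfrac{1}{2}$) is a useful clarification the paper leaves implicit, but it does not change the argument.
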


The above results suggest that the scaling law of the CF sum-rate is similar to the scaling law which multiuser diversity schemes achieve \cite{yoo2006optimality},\cite{kampeas2014capacity},\cite{shmuel2018performance}. In a way, this can be expected since in schemes exploiting multiuser diversity, the scheduler seeks a user (or users) which has the highest channel gain in each transmission slot, where, as we will see in the sequel, our scheduling paradigm also searches for a group of users which have high channel gains. However, these users are not necessarily the ones with the highest gain among all users. Moreover, we point out that, in terms of fairness and delay considerations, our suggested scheduling algorithm for CF promises, with very high probability, the shortest completion time for decoding all users' messages by letting more than one user to use the channel simultaneously. This is with contrast with basic multiuser diversity schemes which may results with longer periods of completion time. We elaborate on this is Section \ref{sec-Completion Time and the Value of k}.

Finally, when extending the above results to multiple relays, a major challenge is finding a subset of transmitters that will allow \emph{all relays} to successfully decode non-trivial linear combinations. In Section \ref{sec-Multiple Relays}, we prove that indeed good schedules exist, satisfying all relays simultaneously, and give heuristic schedules to allow for fast completion time with multiple relays.

%\begin{theorem}\label{the-expected system sum-rate lower bound multiple relays}
%The asymptotic expected system sum-rate for a general system with $M \geq1$ relays that can be achieved is lower bounded by, 
%\begin{multline}
%\E\left[ C_{SR}^M \right] \\ \geq  (1-o(1))\cdot \left\{
 % \begin{array}{lr}
  %  (k-1) \left(\frac{1}{4}-\epsilon\right)\log{\log{L}}  & M > k\\
  % \frac{M}{2} \left(\frac{1}{4}-\epsilon\right)\log{\log{L}}   &  M \leq k
 % \end{array}
%\right.,
%\end{multline}
%\textit{where $\epsilon$ is an arbitrarily small positive constant.}
%\end{theorem}

\section{CF With a Large Number of Users }\label{Sec-CF_with_large_users}

In this section, we analyze the system's behavior, under the CF scheme, for the regime of a large number of users and a single relay. This analysis will be the foundation for the theorems given in Section \ref{sec-Main_results} and the motivation for the scheduling given in section \ref{Sec-Scheduling_in_CF} %As described in the system model, there is no coordination between the relays and thus the decision of the optimal vector is done solely on the information each relay posses (the channel vector from all users to it). Therefore, the analysis for a single relay behavior is sufficed for the system's behavior for this part. 
We assume that the relay seeks a coefficient vector which maximizes the achievable rate (minimize $f$ as described in \eqref{equ-Optimal a vector in quadratic form}) in a given slot. Though this may not be the optimal choice that maximizes \eqref{equ-Problem statment}, it gives an upper bound on the achievable rate of the relay in that slot.

\subsection{Minimization of the quadratic form $f$}
Examining the matrix $\b{G}$ in the minimization problem \eqref{equ-Optimal a vector in quadratic form}, one can notice that as $L$, the number of transmitters, grows, the positive diagonal elements grow very fast relative to the off-diagonal elements. Specifically, each diagonal element is a $\chi^2_L$ r.v. minus a $\chi^2_1$ r.v., whereas the off-diagonal elements are only a multiplication of two Gaussian r.vs. As $L$ grows, the former has much higher expected value compared to the later and in the limit of large $L$, the matrix $\b{G}$ will tend to a diagonal matrix if properly normalized. 
Examples of a specific $\b{G}$, for a certain realization of the channel vector $\b{h}$, are presented in Figure \ref{fig-Gmatrix}, for different dimensions. %It is clear that even for a moderate number of transmitters, the differences in values between the diagonal and off-diagonal elements are significant.

Consider now the quadric form \eqref{equ-Optimal a vector in quadratic form} we wish to minimize. Roughly speaking, any choice of $\b{a}$ that is not a unit vector will add more than one (probably) positive element from the diagonal of $\b{G}$ to it together with possibly negative off-diagonal elements. However, when $L$ is large, the off-diagonal elements have lesser effect compared to the diagonal ones. Therefore, intuitively, one would prefer to choose $\b{a}$ to be a unit vector.  In the remainder of this section, we make this argument formal.

The minimization function $f(\b{a})=\textbf{a}^T\textbf{G}\textbf{a}$ can be written as 

\begin{equation*}
\begin{aligned}
\b{a^TGa}&=\sum_{i=1}^{L}(1-\text{P}(\|\b{h}\|^2-h_i^2))a_i^2   -  2\sum_{i=1}^{L}\sum_{j=1}^{i-1}\text{P}h_ih_ja_ia_j \\
	       &=\|\b{a}\|^2 + \text{P}\sum_{i=1}^{L}\sum_{j=1}^{i-1} (h_ia_j-h_ja_i)^2\\
	       &=\|\b{a}\|^2 + \text{P}(\|\b{a}\|^2\|\b{h}\|^2-(\b{a}^T\b{h})^2).
\end{aligned}
\end{equation*}

%Note that the right term consists of all possible pairs $(i,j)$ such that $i \neq j$, a total of $\frac{L(L-1)}{2}$ elements.  

\begin{figure}[t]
    \centering
    \begin{subfigure}[b]{0.23\textwidth}
        \includegraphics[width=0.95\textwidth]{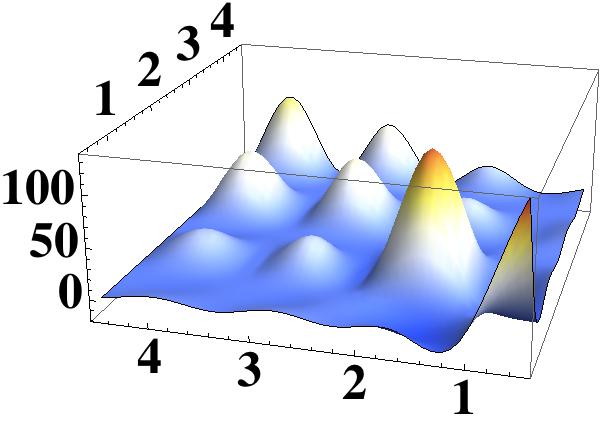}
        \caption{$L=4$}
    \end{subfigure}
    ~
    \centering
    \begin{subfigure}[b]{0.23\textwidth}
        \includegraphics[width=0.95\textwidth]{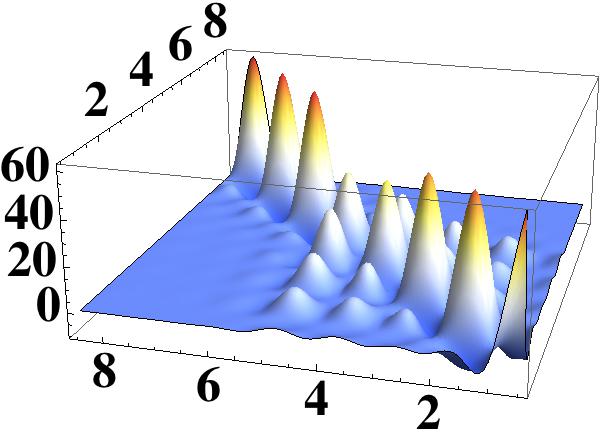}
        \caption{$L=8$}
    \end{subfigure}
    ~
    \begin{subfigure}[b]{0.23\textwidth}
        \includegraphics[width=0.95\textwidth]{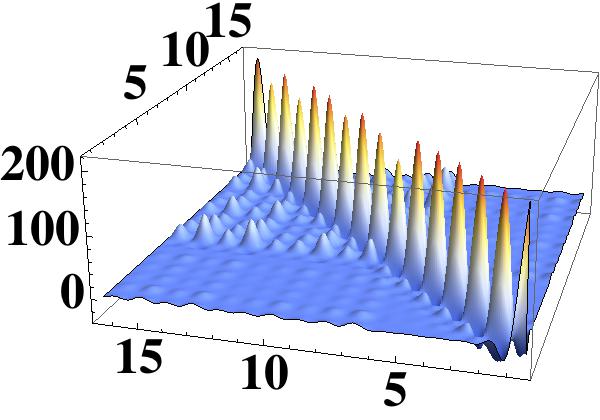}
        \caption{$L=16$}
    \end{subfigure}
    ~
    \begin{subfigure}[b]{0.23\textwidth}
        \includegraphics[width=0.95\textwidth]{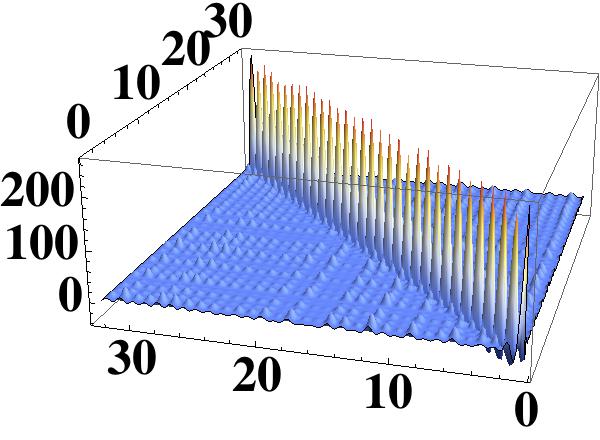}
        \caption{$L=32$}
    \end{subfigure}
    \caption{An example for the magnitude of the elements of $G$ for different dimensions (i.e., different values of $L$), for $\text{P}=10$. The graphs depict a single realization for each $L$, and were interpolated for ease of visualization.}
    \label{fig-Gmatrix}
\end{figure}

We wish to understand when will a relay prefer a unit vector over any other non-trivial vector $\b{a}$. Specifically, since $\b{a}$ is a function of the random channel $\b{h}$, we will compute the probability that a certain non-trivial $\b{a}$ will minimize $f$ compared to a unit vector. In particular, the best unit vector among all other unit vectors. %of having a unit vector as the minimizer of $f$ for a given $\b{a}$. %Or, alternatively, the probability . 
We thus wish to find the probability
\begin{equation}\label{equ-Probability of choosing a unit vector}
	P_r( f(\b{a}) \leq \min_i{f(\b{e}_i)})=P_r\left( \|\b{a}\|^2 + \text{P}(\|\b{a}\|^2\|\b{h}\|^2-(\b{a}^T\b{h})^2\right) 
\leq \min_i\{1+\text{P}\left(\|\b{h}\|^2-h_i^2)\}\right),
\end{equation}
where $\b{e}_i$ is any unit vector of size $L$ with $1$ at the $i$-th entry and zero elsewhere, and $\b{a}$ is \emph{any} integer valued vector that is \emph{not} a unit vector. %Note that \eqref{equ-Probability of choosing a unit vector} refers to any integer vector $\b{a}$, including the vectors in the search domain such that $\|\textbf{a}\|\leq\sqrt{1+P\|\textbf{h}\|^2}$ (\cite[Lemma 1]{nazer2011compute}). 
Note that the right and left-hand sides of the inequality in \eqref{equ-Probability of choosing a unit vector} are dependent, hence direct computation of this probability is not trivial. Still, this probability can be evaluated exactly noting that the angle between $\b{a}$ and $\b{h}$ is what mainly affects it. Formally, we give the following Lemma.

\begin{lemma}\label{lem-distribution of the squared cosine of the phase}
The distribution of $\frac{(\b{a}^T\b{h})^2}{\|\b{a}\|^2\|\b{h}\|^2}$, which is the squared cosine of the angle between an integer vector $\b{a}$ and a standard normal vector $\b{h}$, both of dimension L, is $Beta(\frac{1}{2},\frac{L-1}{2})$.
\end{lemma}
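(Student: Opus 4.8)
The plan is to exploit the rotational (orthogonal) invariance of the standard Gaussian vector $\b{h}$, which reduces the problem to a one-line gamma/beta computation. First I would observe that the quantity of interest,
\begin{equation*}
\frac{(\b{a}^T\b{h})^2}{\|\b{a}\|^2\|\b{h}\|^2}=\frac{(\b{u}^T\b{h})^2}{\|\b{h}\|^2},
\qquad \b{u}\triangleq \frac{\b{a}}{\|\b{a}\|},
\end{equation*}
depends on $\b{a}$ only through its direction $\b{u}$, a deterministic unit vector. Since $\b{h}\sim\mathcal{N}(\b{0},\b{I})$ is isotropic, for any orthogonal matrix $\b{Q}$ the vector $\b{Q}\b{h}$ has the same distribution as $\b{h}$. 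Choosing $\b{Q}$ so that $\b{Q}\b{u}=\b{e}_1$ (such a $\b{Q}$ always exists), and using $\|\b{Q}\b{h}\|=\|\b{h}\|$, I get that $(\b{u}^T\b{h})^2/\|\b{h}\|^2$ has the same law as $(\b{e}_1^T\b{h})^2/\|\b{h}\|^2=h_1^2/\sum_{i=1}^L h_i^2$.

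Next I would decompose the denominator. Writing $h_1^2$ and $\sum_{i=2}^{L}h_i^2$ as the two independent pieces, the first is a $\chi^2_1$ random variable and the second is a $\chi^2_{L-1}$ random variable, and they are independent because distinct coordinates of $\b{h}$ are independent. Hence
\begin{equation*}
\frac{(\b{a}^T\b{h})^2}{\|\b{a}\|^2\|\b{h}\|^2}\ \stackrel{d}{=}\ \frac{\chi^2_1}{\chi^2_1+\chi^2_{L-1}}.
\end{equation*}
Finally I would invoke the standard gamma/beta relationship: if $X\sim\mathrm{Gamma}(\tfrac{a}{2},1)$ and $Y\sim\mathrm{Gamma}(\tfrac{b}{2},1)$ are independent, then $X/(X+Y)\sim\mathrm{Beta}(\tfrac{a}{2},\tfrac{b}{2})$. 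Since $\chi^2_k=\mathrm{Gamma}(\tfrac{k}{2},1)$ (up to the common scale that cancels in the ratio), applying this with $a=1$ and $b=L-1$ yields exactly $\mathrm{Beta}(\tfrac{1}{2},\tfrac{L-1}{2})$, as claimed.

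The only real subtlety, and the step I would write most carefully, is the orthogonal-invariance reduction: one must justify that $\b{a}$ being fixed (it is the deterministic coefficient vector) is what makes the rotation legitimate, and that $\b{Q}\b{h}\stackrel{d}{=}\b{h}$ together with norm preservation transfers the full distribution of the ratio, not merely its mean. Everything after that is a routine identity, so I do not expect the gamma/beta step to pose any difficulty. As an alternative that avoids the invariance argument, one could compute the density of $\cos^2\theta$ directly via a change of variables on $(h_1,\dots,h_L)\mapsto(\|\b{h}\|,\text{angles})$ and integrate out the nuisance coordinates, but the rotational-invariance route is cleaner and I would present it as the main proof.
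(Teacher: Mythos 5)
Your proposal is correct and follows essentially the same route as the paper's proof: both reduce the problem via an orthogonal rotation aligning $\b{a}$ (equivalently, its unit direction) with $\b{e}_1$, exploiting the rotational invariance of the standard normal vector $\b{h}$, and then identify the resulting ratio $h_1^2/\|\b{h}\|^2$ as $\chi^2_1/(\chi^2_1+\chi^2_{L-1})$, which is $Beta(\tfrac{1}{2},\tfrac{L-1}{2})$ by the standard gamma/beta relationship. The only cosmetic difference is that the paper writes the rotation as an exact pathwise identity (defining $\b{h'}=\b{Q}\b{h}$ and checking it is standard normal) rather than invoking the distributional equality $\b{Q}\b{h}\stackrel{d}{=}\b{h}$, but these are the same argument.
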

\begin{proof}
Let $\b{Q}$ be an orthogonal rotation matrix such that $\b{Q}\b{a}=\b{a'}$, where $\b{a'}$ is co-linear with the basis vector $e_1$. That is, $\b{a'}=(\|\b{a}\|,0,...,0)$. Define $\b{h'}=\b{Q}\b{h}$. Note that $\b{h'}$ is a standard normal vector since $E[\b{h'}]=E[\b{Q}\b{h}]=0$, and $\b{QIQ}^T=\b{QQ}^T=\b{I}$. We have
\begin{equation}
\begin{aligned}
	\frac{(\b{a}^T\b{h})^2}{\|\b{a}\|^2\|\b{h}\|^2}&=\frac{(\b{a}^T\b{h})^2}{(\b{a}^T\b{a})(\b{h}^T\b{h})}=\frac{(\b{a}^T\b{Q}^T\b{Qh})^2}{(\b{a}^T\b{Q}^T\b{Qa})(\b{h}^T\b{Q}^T\b{Qh})}\\
	&=\frac{((\b{Qa})^T\b{Qh})^2}{((\b{Qa})^T\b{Qa})((\b{Qh})^T\b{Qh})}=\frac{((\b{Qa})^T\b{Qh})^2}{\|\b{Qa}\|^2\|\b{Qh}\|^2}\\
	&=\frac{\|\b{a'}\|^2(\b{e_1}^T\b{h'})^2}{\|\b{a'}\|^2\|\b{h'}\|^2}=\frac{(\b{e_1}^T\b{h'})^2}{\|\b{h'}\|^2}=\frac{{h'}_1^2}{\|\b{h'}\|^2}.
\end{aligned}
\end{equation}
Therefore,
\begin{equation}
\cos^2{\theta}=\frac{{h'}_1^2}{{h'}_1^2+{h'}_2^2+...+{h'}_L^2}.
\end{equation}
This expression can be represented as $\frac{W}{W+V}$, where $W={h'}_1^2$ is a $\chi^2_1$ r.v. and $V=\sum_{i=2}^L{h'}_i^2$ is a $\chi^2_{L-1}$ r.v. independent of $W$. This ratio has a $Beta(a,b)$ distribution, with $a=\frac{1}{2}$ and $b=\frac{L-1}{2}$ \cite{walck1996hand}. Note that $a$ and $b$ correspond to the degrees of freedom of $W$ and $V$.
\end{proof}

We can now give the proof for Theorem \ref{the-Probability for having a unit vector as the maximaizer}.
\begin{proof} [Proof of Theorem \ref{the-Probability for having a unit vector as the maximaizer}]
According to equation \eqref{equ-Probability of choosing a unit vector}, we have,
\small
\begin{align*}
	P_r( f(\b{a}) \leq \min_i{f(\b{e}_i)})&=P_r\left(\|\b{a}\|^2 +\text{P}\left(\|\b{a}\|^2\|\b{h}\|^2-(\b{a}^T\b{h})^2\right) \leq \min_i\{1+\text{P}(\|\b{h}\|^2-h_i^2)\} \right)\\
	&=P_r\left(\frac{1-\|\b{a}\|^2}{\text{P}} + \|\b{h}\|^2 - \max_i{h_i^2} - \|\b{a}\|^2\|\b{h}\|^2 + (\b{a}^T\b{h})^2 \geq 0 \right)\\
	&\overset{(a)}{\leq} P_r\left(\|\b{h}\|^2 - \|\b{a}\|^2\|\b{h}\|^2 + (\b{a}^T\b{h})^2 \geq 0 \right)\\
	&=P_r\left(\frac{1}{\|\b{a}\|^2} - 1 + \frac{(\b{a}^T\b{h})^2}{\|\b{a}\|^2\|\b{h}\|^2} \geq 0 \right)\\
	&=P_r\left(\frac{(\b{a}^T\b{h})^2}{\|\b{a}\|^2\|\b{h}\|^2} \geq 1- \frac{1}{\|\b{a}\|^2} \right)\\
	&\overset{(b)}{=}1- I_{\Phi(\b{a})}\left(\frac{1}{2},\frac{L-1}{2}\right),
\end{align*}
\normalsize
where $(a)$ follows from removing the negative terms ($\|\b{a}\|^2>1$) and $(b)$ follows from Lemma \ref{lem-distribution of the squared cosine of the phase} with $\Phi(\b{a})=1- \frac{1}{\|\b{a}\|^2}$.
\end{proof}

The bound on the probability given in Theorem \ref{the-Probability for having a unit vector as the maximaizer} consists of a complicated analytic function $I_{\Phi(\b{a})}(\cdot)$. Hence, Corollary \ref{cor-Probability for having a unit vector as the maximaizer goes to one} includes a simplified bound which avoids the use of $I_{\Phi(\b{a})}(\cdot)$, yet keeps the nature of the result in Theorem \ref{the-Probability for having a unit vector as the maximaizer}. The proof of Corollary \ref{cor-Probability for having a unit vector as the maximaizer goes to one} is based on the following lemma.

\begin{lemma}\label{lem-lower bound for the distribution of the squared cosine of the phase}
The CDF of $\frac{(\b{a}^T\b{h})^2}{\|\b{a}\|^2\|\b{h}\|^2}$ can be lower bounded by the CDF of the minimum of $\left(\floor*{\frac{L}{2}}-1\right)$ i.i.d. uniform random variables in $[0,1]$.
\end{lemma}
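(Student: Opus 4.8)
The plan is to recast the statement as a first-order stochastic-dominance comparison between two Beta laws and then establish that dominance by an explicit coupling of Gamma variables. First I would record the two distributions in play. By Lemma \ref{lem-distribution of the squared cosine of the phase}, the variable $X=\frac{(\b{a}^T\b{h})^2}{\|\b{a}\|^2\|\b{h}\|^2}$ is $Beta(\tfrac12,\tfrac{L-1}{2})$. On the other side, if $U_1,\dots,U_m$ are i.i.d. uniform on $[0,1]$ then $P(\min_j U_j\le x)=1-(1-x)^m$, so $\min_j U_j\sim Beta(1,m)$; I would take $m=\floor*{\frac{L}{2}}-1$. Hence the claim that the CDF of $X$ lower-bounds that of $\min_j U_j$ is exactly the assertion $F_{Beta(1/2,(L-1)/2)}(x)\ge F_{Beta(1,m)}(x)$ for all $x$, i.e. $X\preceq_{st}\min_j U_j$.

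To prove the ordering I would use the Gamma representation of the Beta law: if $G_a\sim\mathrm{Gamma}(a)$ and $G_b\sim\mathrm{Gamma}(b)$ are independent (rate $1$), then $\frac{G_a}{G_a+G_b}\sim Beta(a,b)$, and the map $(g,v)\mapsto\frac{g}{g+v}$ is increasing in $g$ and decreasing in $v$. I would then couple the two Beta variables on a single probability space: write $G_1=G_{1/2}+G'_{1/2}$ with $G_{1/2},G'_{1/2}$ i.i.d. $\mathrm{Gamma}(\tfrac12)$ (legitimate since shapes add and two $\mathrm{Gamma}(\tfrac12)$ sum to $\mathrm{Gamma}(1)$), and write $G_{(L-1)/2}=G_m+G_{(L-1)/2-m}$ with independent Gamma summands. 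The arithmetic point to verify is $\tfrac{L-1}{2}-m\ge0$: for even $L$ this difference equals $\tfrac12$ and for odd $L$ it equals $1$, both nonnegative, which is precisely why the floor in $m=\floor*{\frac{L}{2}}-1$ appears. Under this coupling $G_{1/2}\le G_1$ and $G_{(L-1)/2}\ge G_m$ pointwise, so monotonicity of the ratio map gives $\frac{G_{1/2}}{G_{1/2}+G_{(L-1)/2}}\le\frac{G_1}{G_1+G_m}$ pointwise; the left side is $Beta(\tfrac12,\tfrac{L-1}{2})$-distributed and the right side is $Beta(1,m)$-distributed, so $X\preceq_{st}\min_j U_j$, and therefore $F_X(x)\ge 1-(1-x)^m$ for all $x$, which is the desired inequality. (As a sanity check, this yields $P(X>x)\le(1-x)^m$, the exact power-law tail needed in Corollary \ref{cor-Probability for having a unit vector as the maximaizer goes to one} with $1-x=1/\|\b{a}\|^2$.)

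The main obstacle is not the coupling mechanics but matching the half-integer shape parameters and pinning down the direction of the inequality together with the parity of $L$: one must confirm that decreasing the first shape from $1$ to $\tfrac12$ and increasing the second from $m$ to $\tfrac{L-1}{2}$ both push the variable stochastically downward, and that the floor guarantees $\tfrac{L-1}{2}\ge m$ in both parities. A more computational alternative that avoids stochastic-ordering machinery would be to prove $P(X>x)\le(1-x)^m$ directly by examining the sign of $\frac{d}{dx}\big[(1-x)^m-P(X\le x)\big]$ against the Beta density, but this forces a delicate crossing argument near the endpoints $x=0$ and $x=1$, where the $x^{-1/2}$ factor in the $Beta(\tfrac12,\tfrac{L-1}{2})$ density is singular, so I would prefer the coupling argument.
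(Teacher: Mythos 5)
Your proposal is correct: the reduction to the stochastic-dominance statement $Beta(\tfrac12,\tfrac{L-1}{2})\preceq_{st}Beta(1,m)$ with $m=\floor*{\tfrac{L}{2}}-1$ is exactly equivalent to the lemma, the Gamma-additivity coupling is valid (all four Gamma pieces have nonnegative shape in both parities, the required independence of the numerator and denominator pieces holds for each marginal, and the two monotonicity steps $\frac{g}{g+v_1+v_2}\leq\frac{g}{g+v_1}\leq\frac{g+g'}{g+g'+v_1}$ are sound), and pointwise domination under a coupling does imply the claimed CDF inequality. However, your route differs from the paper's in its key mechanism. The paper, after invoking Lemma \ref{lem-distribution of the squared cosine of the phase} to replace the cosine by $\frac{h_1^2}{\|\b{h}\|^2}$, lower-bounds its CDF by that of $\frac{h_1^2+h_2^2}{\|\b{h}\|^2}$ (same denominator, enlarged numerator), then groups the remaining coordinates into pairs so the ratio becomes $\frac{W}{W+V}$ with $W$ exponential and $V$ an independent sum of $\tfrac{L}{2}-1$ exponentials, and identifies this as the minimum of $\tfrac{L}{2}-1$ uniforms via the Poisson-process waiting-time interpretation (with odd $L$ handled by discarding $h_L^2$ from the denominator). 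In effect the paper's argument is the \emph{dependent} version of your coupling: the role of your independent $G'_{1/2}$ is played by $h_2^2$, which is simultaneously moved out of the denominator's complement, so the total denominator never changes and a single monotonicity step suffices; the parity bookkeeping is done by dropping a coordinate rather than by a fractional Gamma shape. What your version buys is generality and self-containment: it isolates the clean fact that $Beta(a,b)$ is stochastically increasing in $a$ and decreasing in $b$, and it avoids the paper's citation of the Poisson-process/order-statistics identity, since $\min_j U_j\sim Beta(1,m)$ follows from the elementary computation $P(\min_j U_j\leq x)=1-(1-x)^m$. What the paper's version buys is economy: one inequality on the same underlying Gaussian vector, with no auxiliary randomness introduced.
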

\begin{proof}
We start by assuming that $L$ is even. The case of odd $L$ will be dealt with later. From Lemma \ref{lem-distribution of the squared cosine of the phase}, the r.v. $\frac{(\b{a}^T\b{h})^2}{\|\b{a}\|^2\|\b{h}\|^2}$ has the same distribution as $\frac{h_1^2}{\|\b{h}\|^2}$, that is, for any $0 \leq \alpha \leq 1$,
\begin{align*}
	P_r\left(\frac{(\b{a}^T\b{h})^2}{\|\b{a}\|^2\|\b{h}\|^2} \leq \alpha \right)&=P_r\left(\frac{h_1^2}{\|\b{h}\|^2} \leq \alpha \right)\\
	&\geq P_r\left(\frac{h_1^2 + h_2^2}{\|\b{h}\|^2} \leq \alpha \right)\\
	&= P_r\left(\frac{h_1^2 + h_2^2}{(h_1^2 + h_2^2)+...+(h_{L-1}^2 + h_L^2)} \leq \alpha \right)\\
	&= 1-(1-\alpha)^{\frac{L}{2}-1},
\end{align*}
where the last line is due to the observation that $\frac{h_1^2 + h_2^2}{\|\b{h}\|^2}$ can be represented as $\frac{W}{W+V}$, where $W=h_1^2 + h_2^2$ and $V=\sum_{i=3}^{L} h_i^2$ are independent exponential r.vs. Note that $V$ is essentially a sum of $\frac{L}{2}-1$ independent pairs. This ratio is distributed as the minimum of $\left(\frac{L}{2}-1\right)$ i.i.d. uniform $[0,1]$ random variables \cite[Lemma 3.2]{jagannathan2006efficient},\cite{kampeas2018ergodic}. This is since the ratio can be interpreted as the proportion of the waiting time from the first arrival to the $\frac{L}{2}$ arrival of a Poisson process. 

In case $L$ is an odd number, we can increase the term in the proof by replacing it with $\frac{h_1^2 + h_2^2}{\|\b{h}\|^2 - h_L^2}$, resulting with a ratio which is distributed as the minimum of $(\frac{L-1}{2}-1)$ i.i.d. uniform random variables in the same manner.
\end{proof} 

Using Lemma \ref{lem-lower bound for the distribution of the squared cosine of the phase} we now give the proof for Corollarly \ref{cor-Probability for having a unit vector as the maximaizer goes to one}.
\begin{proof} [Proof of Corollary \ref{cor-Probability for having a unit vector as the maximaizer goes to one}]
\small
\begin{equation}\label{equ-upper bound on the probability for not having a unit vector}
\begin{aligned}
	 P_r(f(\b{a}) \leq \min_i{f(\b{e}_i)})&\overset{(a)}{\leq} P_r\left(\frac{(\b{a}^T\b{h})^2}{\|\b{a}\|^2\|\b{h}\|^2} \geq 1- \frac{1}{\|\b{a}\|^2} \right)\\
	&\overset{(b)}{\leq} \left(1-\left(1- \frac{1}{\|\b{a}\|^2} \right) \right)^{\floor*{\frac{L}{2}}-1}\\
	&\leq   \left(\frac{1}{\|\b{a}\|^2} \right)^{\frac{L-1}{2}-1}\\
	&=   e^{-LE_1(L)},
\end{aligned}
\end{equation}
\normalsize

where $(a)$ and $(b)$ follow from Lemmas \ref{lem-distribution of the squared cosine of the phase} and \ref{lem-lower bound for the distribution of the squared cosine of the phase}, respectively, $E_1(L)=(1-\frac{3}{L})\log{\|\b{a}\|}$ and $\|\b{a}\|^2>1$. 
\end{proof}

As mentioned, Corollary \ref{cor-Probability for having a unit vector as the maximaizer goes to one} refers to the probability that a non-trivial \emph{fixed} $\b{a}$ will be the maximizer of $f$. Thus, to consider all possible coefficient vector and to analyze $P_r(\overline{\b{e}})$, the probability which any other vector that is not a unit vector was chosen, at the limit of large $L$, one needs to consider the set of all $\b{a}$ vectors in the search domain. 

In \cite{sahraei2014compute}, a polynomial-time algorithm for finding the optimal coefficient vector $\b{a}$ was given. The complexity result derives from the fact that the \textit{cardinality of the set of all $\b{a}$ vectors} which are possible is upper bounded by $2L\left(\ceil{\sqrt{1+\text{P}\|\b{h}\|^2}}+1\right)$. That is, any vector which is not in this set has zero probability to be the one that maximizes the rate. Note that all unit vectors are included in this set. We define a new set without the unit vectors by $\mathcal{A}_{\b{h}}= \{ \b{a} \in \mathbb{Z}^L : \b{a} \text{  is possible}, \ \b{a}\neq \b{e}_i \ \forall i \}$. Thus, we wish to compute
\begin{equation}
P_r(\overline{\b{e}})= P_r \left( \bigcup_{\b{a}\in\mathcal{A}_{\b{h}}} \left\{f(\b{a}) \leq f(\b{e}_i) \text{ for all }i\right\} \right).
\end{equation}
Note that the probability that another \emph{unit vector} $\b{e}_j$ will have a better metric than $\b{e}_i$ is not negligible. In fact, as $L$ grows, the unit vector with the best metric is the one corresponding to the strongest user. Note also that the cardinality of $\mathcal{A}_{\b{h}}$ grows with the dimension of $\b{h}$, i.e., with $L$, and can be easily upper bounded as follows,
\begin{equation}\label{equ-Cardinality of the search domain}
|\mathcal{A}_{\b{h}}| \leq 2L\left(\ceil*{\sqrt{1+\text{P}\|\b{h}\|^2}}+1\right)-L \\
\leq  2L(\text{P}\|\b{h}\|^2+2.5).
\end{equation}

With the above definition and the upper bound on the cardinality of the search domain, we can now give the proof for Theorem \ref{the-Probability for having a unit vector as the maximaizer over all other vectors}
\begin{proof}[Proof of Theorem \ref{the-Probability for having a unit vector as the maximaizer over all other vectors}] 
\begin{align*}
\lim_{L \rightarrow \infty}P_r(\overline{\b{e}})&= \lim_{L \rightarrow \infty}  P_r \left( \bigcup_{\b{a}\in\mathcal{A}_{\b{h}}} \left\{f(\b{a}) \leq f(\b{e}_i) \text{ for all }i\right\} \right) \\
&\leq \lim_{L \rightarrow \infty} \sum_{\mathcal{A}_{\b{h}}}  P_r\left(f(\b{a}) \leq f(\b{e}_i) \text{ for all }i\right)\\
&= \lim_{L \rightarrow \infty} \sum_{\mathcal{A}_{\b{h}}}  P_r\left(f(\b{a}) \leq \min_i{f(\b{e}_i)} \right)\\
&\overset{(a)}{\leq} \lim_{L \rightarrow \infty} \sum_{\mathcal{A}_{\b{h}}} \left(\frac{1}{\|\b{a}\|^2} \right)^{\frac{L-1}{2}-1}\\
&\overset{(b)}{\leq} \lim_{L \rightarrow \infty} |\mathcal{A}_{\b{h}}| \left(\frac{1}{2} \right)^{\frac{L-1}{2}-1}\\
&\overset{(c)}{\leq} \lim_{L \rightarrow \infty} 2L(\text{P}\|\b{h}\|^2+2.5) \left(\frac{1}{2} \right)^{\frac{L-1}{2}-1}\\
&= \lim_{L \rightarrow \infty} 2L(\text{P}\sum_{i=1}^Lh_i^2+2.5) \left(\frac{1}{2} \right)^{\frac{L-1}{2}-1}\\
&= \lim_{L \rightarrow \infty} 2L^2\text{P}\left(\frac{1}{2} \right)^{\frac{L-1}{2}-1}  \frac{1}{L}\sum_{i=1}^Lh_i^2\\
&\overset{(d)}{=} \lim_{L \rightarrow \infty} 2L^2\text{P}\left(\frac{1}{2} \right)^{\frac{L-1}{2}-1} \\
&=4\text{P} \lim_{L \rightarrow \infty}  L^2 2^{-\frac{L-1}{2}} \\
&=4\text{P}\lim_{L \rightarrow \infty}  L^2 e^{-LE_2(L)} = 0,
\end{align*}
where $(a)$ follows from Corollary \ref{cor-Probability for having a unit vector as the maximaizer goes to one} and $(b)$ is true since $\b{a}$ is an integer vector which is not the unit vector. $(c)$ is due to \eqref{equ-Cardinality of the search domain}. $(d)$ follows from the strong law of large numbers - the normalized sum converges with probability one to the expected value of $\chi^2_1$ r.v. which is one. Lastly we define $E_2(L)=\frac{1}{2}(1-\frac{1}{L})\log{2}$.
\end{proof}

\begin{figure}[t]
\centering
    \includegraphics[width=0.4\textwidth]{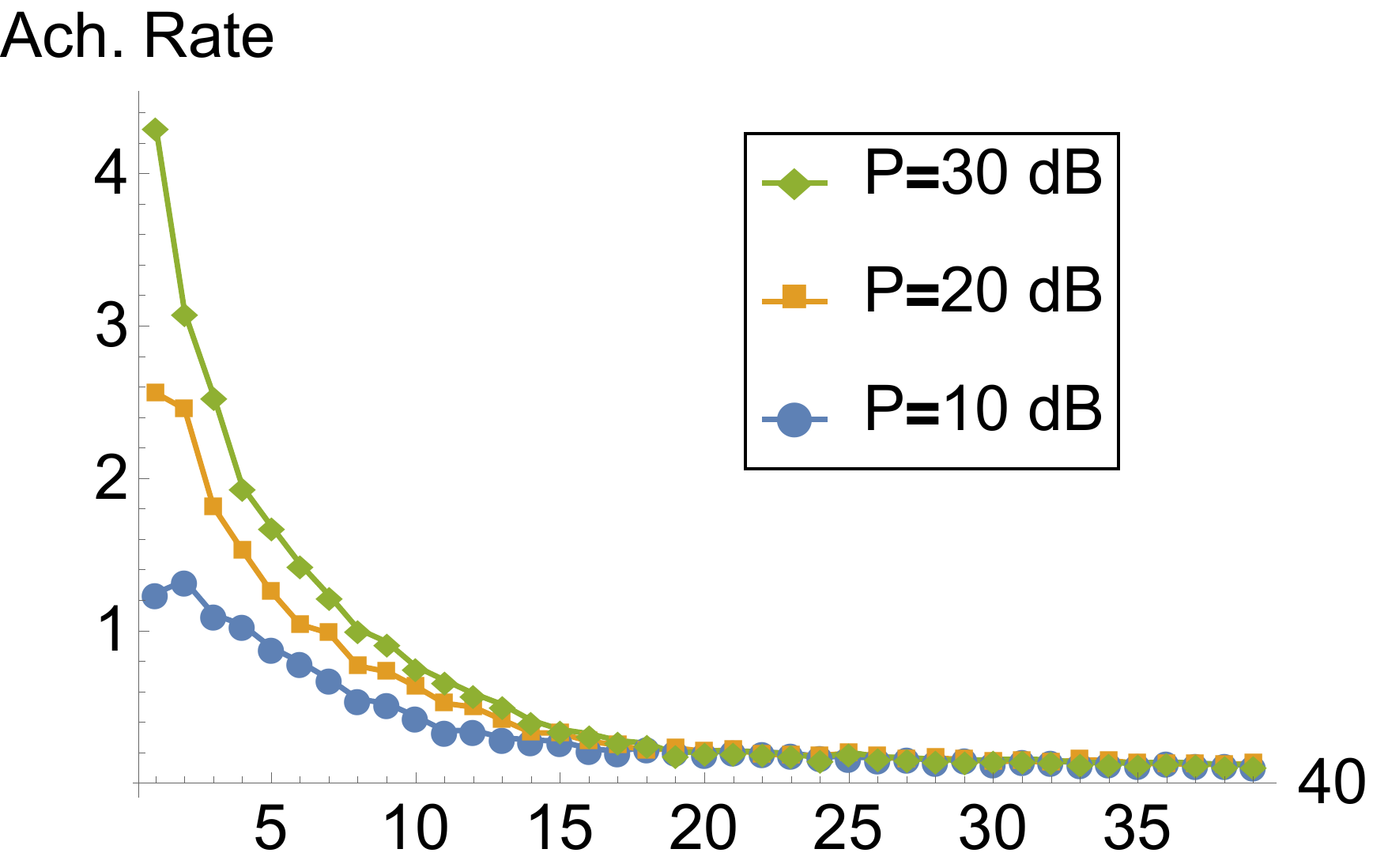}
\caption{Simulation results for the achievable rate at a relay, as given in Theorem \ref{the-Computation rate with MMSE}. The rate was plotted as a function of the number of simultaneously transmitting users $L$, for different values of $\text{P}$. In each sample, the relay chose the coefficient vector $\b{a}^{max}$.}
\label{fig-Rate_AsFunc_users}
\end{figure}

This result implies that the probability of having any non-unit vector as the rate maximizer is decreasing exponentially to zero as the number of users grows. The direct consequence is that the relay will try to decode a single message when all other transmitters are treated as interference. Theorem \ref{the-Achievable is going to zero} shows that in this case the rate of that single message has to go to zero as the number of transmitters grows. The proof is presented in Appendix \ref{AppendixA}. Moreover, Corollary \ref{the-Sum rate is going to zero} which relates to a general system consisting of $M$ relays, shows that as the number of users grows, the system's sum-rate decreases to zero as well. That is, without scheduling users, not only each individual rate is negligible; this is true for the sum-rate of the entire system as well. The proof is presented in Appendix \ref{AppendixB}.

Simulation results for the achievable rate at a relay, as a function of $L$ and for different values of $\text{P}$, are depicted in Figure \ref{fig-Rate_AsFunc_users}. Here, in each sample, the relay chose the rate maximizing coefficient vector $\b{a}^{max}$. One can observe that for large $L$, the achievable rate decreases to zero as Theorem \ref{the-Achievable is going to zero} suggests. Accordingly, following corollary \ref{the-Sum rate is going to zero}, Figure \ref{fig-SumRate no sch} depicts simulation results for the behavior of the sum-rate, for a system with $M$ relays, as a function of $L$. In fact, the curves constitute an upper bound since in the simulation we considered the achievable rates in each slot rather than the minimum rate among all slots. In addition, for decoding, we considered the $L$ linearly independent coefficient vectors (the rows of $\b{A}$) with the highest rates. As the plot shows, similar to the achievable rate at each relay, the sum-rate goes to zero as well. On the other hand, Figure \ref{fig-SumRate rand sch} presents simulation results where $k=3$ users were scheduled for transmission in each slot in a uniform manner. It is clear that even a simple scheduling policy can guarantee a non-zero rate. Note that the decrease in the sum-rate is a result of the constant number of users who were scheduled regardless of $L$. As we will see in the sequel, $k$ should also grow with $L$, or else the completion time will grow and thus the sum-rate will decreases as the figure depicts.

\begin{figure}[t]
    \centering
        \includegraphics[width=0.4\textwidth]{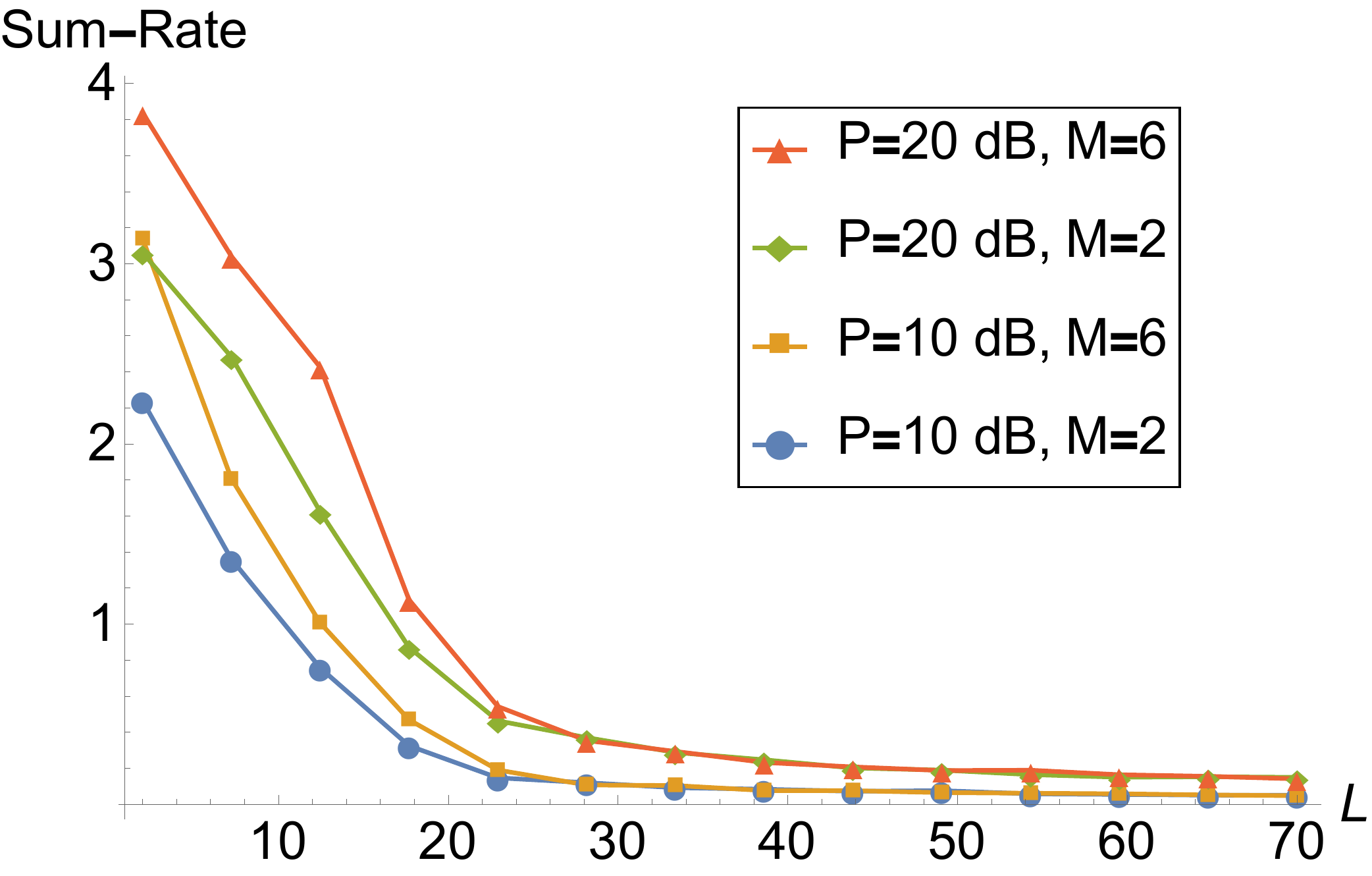}
    \caption{Simulation results for the sum-rate of a system with $M=2,6$ relays, for different values of $\text{P}$, as a function of $L$. The sum-rate decreases since each relay prefers to decode a single massage and treat all other massages as noise. Here, all the users are transmitting.}
       \label{fig-SumRate no sch}
\end{figure}

The results of this section show that the increase in the number of users can dramatically degrade the performance of a system that applies the CF coding scheme. This is in contrast to the common knowledge for the capacity of MAC which is increasing with the number of simultaneously transmitting users. The decrease in the system performance results from the approximation error of the large real channel vector by the vector of integer coefficients of the linear combination of the messages.

\begin{figure}[t]
    \centering
        \includegraphics[width=0.4\textwidth]{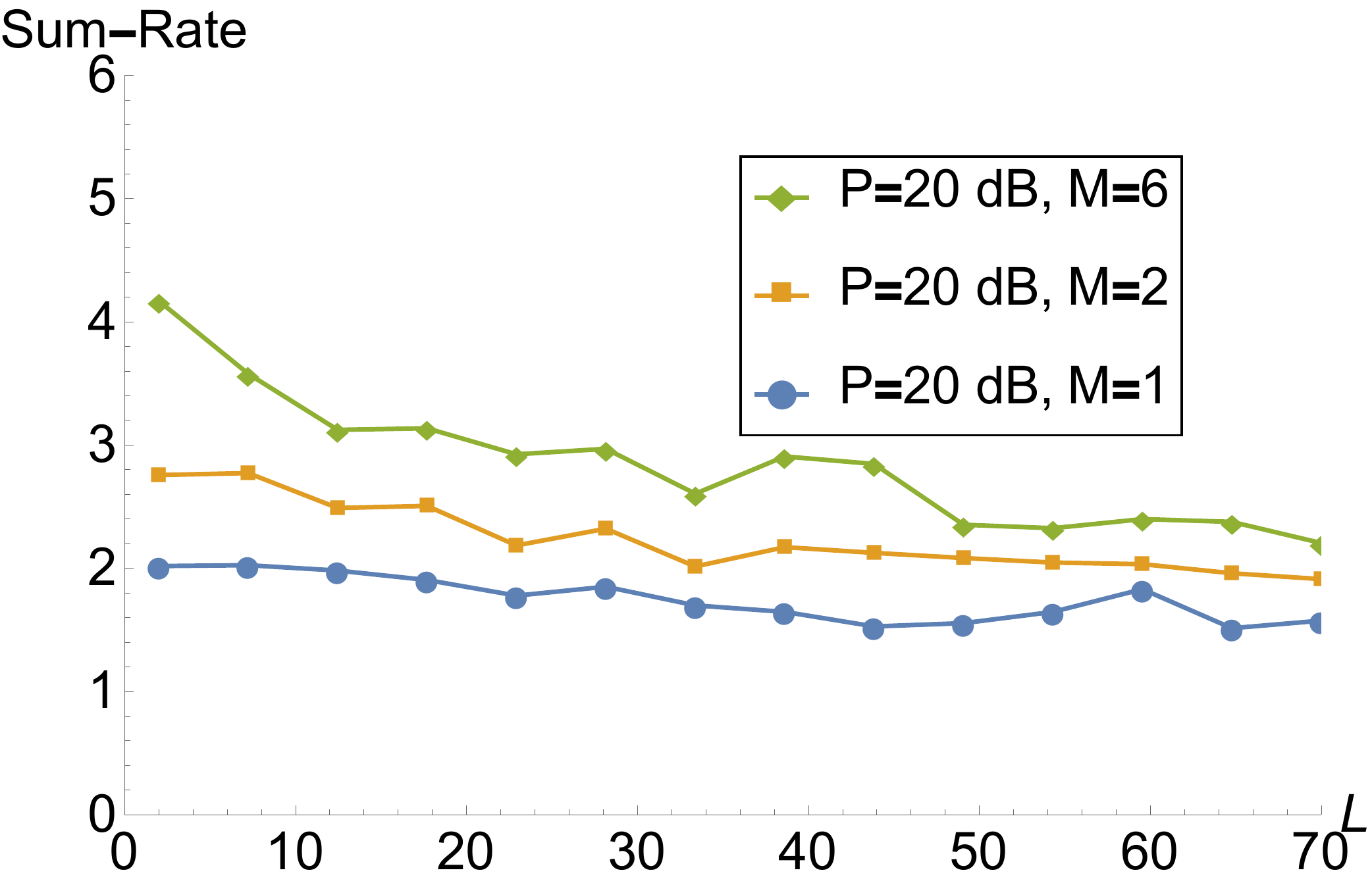}
    \caption{Simulation results for the sum-rate with a random scheduling policy. In each slot $k=3$ users were scheduled for transmission while each relay chose the rate maximizing coefficient vector $\b{a}^{max}$.}
       \label{fig-SumRate rand sch}
\end{figure}

\section{Scheduling in CF}\label{Sec-Scheduling_in_CF}

%Theorem \ref{the-Achievable is going to zero} and Corollary \ref{the-Sum rate is going to zero} suggest that applying CF on large scale relying systems where there is a fixed number of relays, without a restriction on the number of transmitting users, would be futile. However, this restriction provides coding opportunities for the CF scheme by scheduling users which their channel conditions are more favorable while grouped together. That is, the possibility of scheduling not only guarantees a non-zero rate but also provides an additional gain.  

The scheduling problem presented in \eqref{equ-Problem statment with scheduling} requires maximizing the transmission rates and minimizing the number of slots, $N$, simultaneously. This should be done via jointly selecting users and the appropriate coefficient vectors. In the following subsections, we show that asymptotically with $L$, a computationally efficient solution for this problem exists. Specifically, we will show that a policy that maximizes the achievable rate in each slot will also minimize $N$ to its (asymptotic) minimum value as a direct outcome, as long as the number of scheduled users, $k$, in each slot, is $O(\log{L})$. Essentially, this will mean that the scheduling problem can be reduced to finding the subset of $k=O(\log{L})$ users $\cS_k^*(i)$ which yields the highest $\cR(\b{h}(\cS_k^*(i)),\b{a}(\cS_k^*(i)))$ in each slot $i$ \emph{separately}. That is, 
\begin{equation}\label{equ-maximization of sum rate per slot}
\begin{aligned}
%&\b{h}^*=\argmax_{\b{h} \in \mathcal{H^S}}\left\{\max_{\b{a} \in\mathbb{Z}^L \backslash \{\textbf{0}\}}\left\{\sum_{i=1}^k \mathds{1}_{\{a_i \neq 0\}}\cR(\b{h},\b{a})\right\}\right\}. 
\cS_k^*=\argmax_{\cS_k}\left\{\max_{\b{a}(\cS_k) \in\mathbb{Z}^k \backslash \{\textbf{0}\}}\cR(\b{h}(\cS_k),\b{a}(\cS_k))\right\}.
\end{aligned}
\end{equation}

%We start the analysis with a geometric interpretation of the above scheduling problem, which points to a set of coefficient vectors which are good candidates for the maximization of $\cR(\b{h},\b{a})$. We then show that if the coefficient vector in each slot is chosen from this set then the completion time $N$ (when $\b{A}_N$ is full rank) is $L$ with high probability. Finally, we provide properties on the channel vector of the optimal schedule for a given $\b{a}$ from this set.

%Based on this analysis we provide a scheduling algorithm for the complete transmission of all messages with its complexity and asymptotic guarantees.

\begin{figure*}[t]
    \centering
    \begin{subfigure}[b]{0.45\textwidth}
        \includegraphics[width=\textwidth]{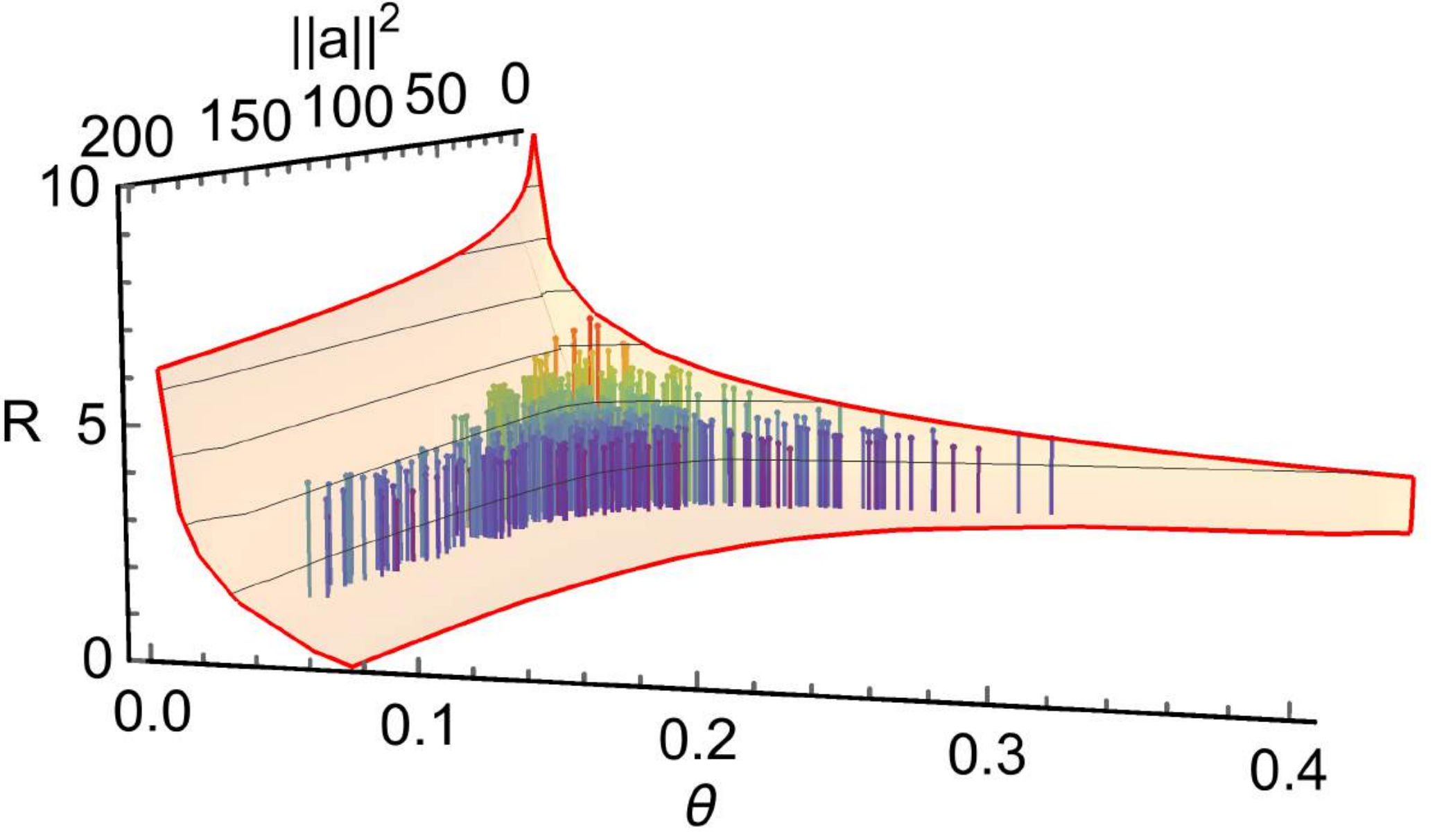}
        \caption{$L=15$}
        \label{fig-Rate_Large_P_func_theta_norm_combined_L=15}
    \end{subfigure}
    \begin{subfigure}[b]{0.45\textwidth}
        \includegraphics[width=\textwidth]{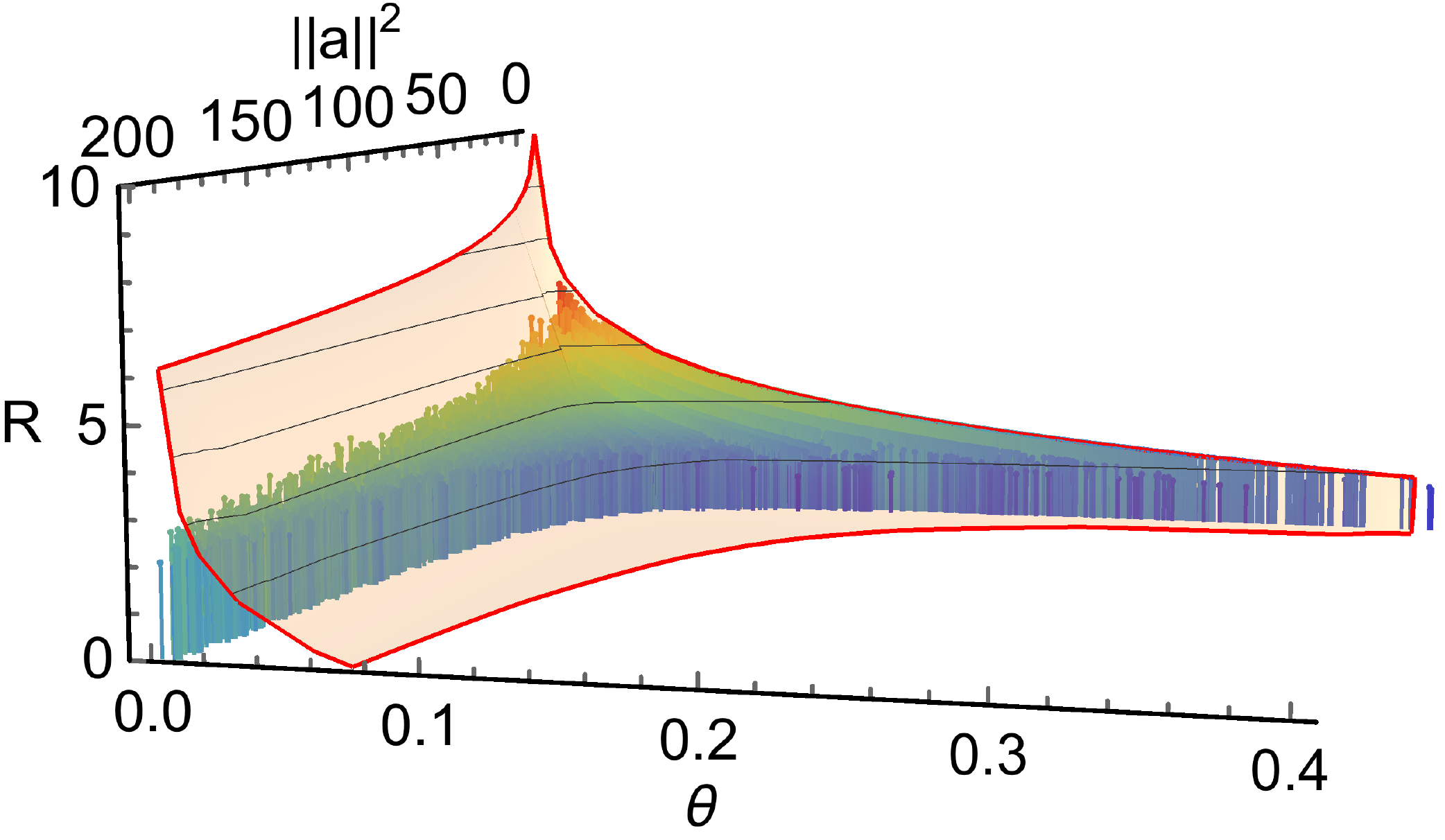}
        \caption{$L=45$}
        \label{fig-Rate_Large_P_func_theta_norm_combined_L=45}
    \end{subfigure}
    \caption{The achievable rate as a function of $\theta$, i.e., the angle between $\b{a}$ and $\b{h}$, and the squared norm of $\b{a}$. The discrete points are simulation results for the rate of each subset of users of size $k=3$ for a specific realization of the channel vector $\b{h}_L$ with $(a)L=15$, $(b)L=45$, and $\text{P}=1000$. The transparent curved plane describes an upper bound on the rate as in \eqref{equ-Achievable rate high SNR}, where, for ease of visualization, a continues function was plotted.}
    \label{fig-Rate_Large_P_func_theta_norm_combined}
\end{figure*}

Thus, we start the analysis by examining the achievable rate as a function of $\b{h}$ and $\b{a}$ and discuss the properties of the optimal schedule. Based on this analysis, we then provide a scheduling algorithm for the complete transmission of all messages, with its complexity and asymptotic guarantees. Note that since the slots are assumed to be memoryless, we omit the time index in $\cR(\b{h},\b{a})$. Furthermore, occasionally, we also omit the users' subset index when it is obvious from the context and write $\b{h}$ and $\b{a}$ as a general writing for the sake of notation simplicity. 

\subsection{Achievable Rate Under Scheduling}

The main challenge in low complexity scheduling for CF is identifying the characteristics of the channel values of a certain subset of users such that this subset would be classified as a good option for CF. Remembering the rate expression in \eqref{equ-Computation rate with MMSE}, the challenge is to identify a subset of the channel values that can be well approximated by an integer vector.

%To answer these questions we first must understand how the achievable rate behaves as a function of $\b{h}$ and $\b{a}$ under the restriction of the search domain. Ideally, since $\b{a}$ is a function of $\b{h}$, we would like to express $\b{a}$ as a function $\b{h}$. However, since such a representation is hard to attain, we analyze the problem as a function of both vectors and we give intuitions regarding the optimal schedule from the understanding of the mutual relations of these two vectors. Therefore, in this section, we give a graphic interpretation for the problem of finding the optimal schedule. 
We start by exploring an upper bound on the achievable rate. This bound motivates the suggested algorithm, which will be given in the sequel. We have
\begin{align}\label{equ-Achievable rate high SNR}
\cR(\b{h},\b{a})&= \frac{1}{2} \log^+ \left( \|\b{a}\|^2- \frac{\text{P}(\b{h}^T\b{a})^2}{1+\text{P}\|\b{h}\|^2} \right)^{-1} \nonumber \\
		      &\leq \frac{1}{2} \log^+ \left( \|\b{a}\|^2- \frac{(\b{h}^T\b{a})^2}{\|\b{h}\|^2} \right)^{-1} \nonumber \\	
      		      &=\frac{1}{2} \log^+ \left( \|\b{a}\|^2- \|\b{a}\|^2\cos^2(\theta) \right)^{-1} \nonumber \\			
      		      &=\frac{1}{2} \log^+ \left( \|\b{a}\|^2\sin^2(\theta) \right)^{-1},			
\end{align}
where $\theta$ is the angle between $\b{h}$ and the chosen coefficient vector $\b{a}$. 

%\begin{remark}[The search domain under scheduling]
%The search domain as defined in Equation \ref{equ-Search domain for the vector coefficients} is ruled by the power and the norm of the transmitting users' channel. Thus, since scheduling is applied, and only $k$ users are simultaneously transmitting, the norm value does not increase with $L$. 
%\end{remark}

The behavior of the achievable rate as a function of $\theta$ and $\sqn{a}$ is depicted in Figure \ref{fig-Rate_Large_P_func_theta_norm_combined}. The discrete lines represent simulation results for $\cR(\b{h}(\cS_k),\b{a}(\cS_k))$ for each subset of size $k=3$, out of a realization of the channel vector $\b{h}_L$. That is, for each subset $\cS_k$, the optimal $\b{a}(\cS_k)$ was chosen according to \eqref{equ-Optimal a vector definition}. $\text{P}=1000$. The continuous curve is a smoothed representation of \eqref{equ-Achievable rate high SNR}. The smoothing is since, for one, $\b{a}$ is an integer vector, hence, its squared norm takes only integer values. Second, for a certain $\|\b{a}\|^2$, there are only finitely many possible choices of $\b{a}$ and thus a finite number of angles with $\b{h}$. For example, for $\|\b{a}\|^2=5$ and dimension 2 the possible vectors are only $(1,2),\ (-1,2),\ (1,-2)$ and $(-1,-2)$. That is, in this case, there are 4 possible angles with a given $\b{h}$. 

One can infer from the upper bound \eqref{equ-Achievable rate high SNR} and Figure \ref{fig-Rate_Large_P_func_theta_norm_combined} that 
the subsets of users that attain the highest rates are those with low values of $\sqn{a}$ and $\theta$. Moreover, we note that the slope of the rate as a function of $\theta$ is sharper than the slope as a function of $\sqn{a}$, with an exception for the smallest values of $\sqn{a}$. Accordingly, for any given $\b{h}_L$, we expect to use only a \emph{fixed set of coefficient vectors} with low norm values as the set the relay will choose from. Then, try to find a subset of transmitters $\cS_k$ with the smallest angle between $\b{h}(\cS_k)$ and one vector out of this set. %Essentially, this means that we ignore subsets that the squared norm on their optimal $\b{a}$ is higher than some value while not losing much in optimality. 

Considering the above, define the set of coefficient vectors: 
\begin{equation}\label{equ-definition of the set of all one vectors}
\b{a^{\{1,k\}}}=\left\{ \b{a} \ \Big| \  \b{a}\in\Z^k, \ |a_i|=1,\ i=1,...,k \right\}.
\end{equation}
Note that $\forall \b{a} \in \b{a^{\{1,k\}}}, \ \sqn{a}=k$, and that the cardinality of $\b{a^{\{1,k\}}}$ is $2^k$. In what follows, we suggest that the scheduled subset of users would be a subset \emph{for which the relay will be able to choose the coefficient vector out of $\b{a^{\{1,k\}}}$} as its rate maximizer.
%Therefore, since there are no zero entries, $C_{SR}$ will be the achievable rate of the scheduled subset times $k$. 
Thus, Equation \eqref{equ-maximization of sum rate per slot} can be written as, 
\begin{equation}\label{equ-maximization of sum rate per slot reduced}
\cS_k^{\b{a^{\{1,k\}}}}=\argmax_{\cS_k}\left\{\max_{\b{a}(\cS_k) \in \b{a^{\{1,k\}}} }\left\{\cR(\b{h}(\cS_k),\b{a}(\cS_k))\right\}\right\}.
\end{equation}

Note that since $\b{a}\in \b{a^{\{1,k\}}}$ refers to the coefficient vector of \emph{scheduled} users, it has no zero entries.
%We note that vectors with zero entries are excluded from the considered set of coefficient vectors. A zero entry means that the relay considers the corresponding message only as noise, hence there is no reason to schedule that user. Thus, a reasonable desire is to have coefficient vectors that have no zero entries at all. 

The following Lemma shows an important property of the optimal coefficient vector $\b{a}$ which maximizes $\cR(\b{h},\b{a})$.
\begin{lemma}\label{lem-the signs of the coefficient vector are ruled by the channel}
\textit{The optimal vector $\b{a}$ satisfies either, $sign(h_i)=sign(a_i)$ for all $i$ or  $sign(h_i) \neq sign(a_i)$ for all $i$.}
\end{lemma}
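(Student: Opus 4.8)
The plan is to reduce the optimization over signs to a maximization of the inner product $\mathbf{h}^T\mathbf{a}$ while the norm is held fixed. Recall from \eqref{equ-Optimal a vector in quadratic form} that maximizing $\mathcal{R}(\mathbf{h},\mathbf{a})$ is equivalent to minimizing $f(\mathbf{a})=\|\mathbf{a}\|^2-\frac{\text{P}(\mathbf{h}^T\mathbf{a})^2}{1+\text{P}\|\mathbf{h}\|^2}$. The first observation I would make is that $f$ separates its dependence on the magnitudes $|a_i|$ from its dependence on the signs $\text{sign}(a_i)$: the term $\|\mathbf{a}\|^2=\sum_i|a_i|^2$ and the denominator $1+\text{P}\|\mathbf{h}\|^2$ are both invariant under flipping the sign of any single entry $a_i$. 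Hence, for a \emph{fixed} vector of magnitudes $(|a_1|,\dots,|a_L|)$, minimizing $f$ over the sign pattern is equivalent to maximizing $(\mathbf{h}^T\mathbf{a})^2$.

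The core step is then a triangle-inequality argument. Writing $\mathbf{h}^T\mathbf{a}=\sum_{i}\text{sign}(a_i)\,|a_i|\,h_i$, I would bound $|\mathbf{h}^T\mathbf{a}|\le\sum_i|a_i|\,|h_i|$ and recall that equality holds if and only if all nonzero summands $\text{sign}(a_i)|a_i|h_i$ carry the same sign. Since $|a_i|\ge 0$, this occurs exactly when $\text{sign}(a_i)=\text{sign}(h_i)$ for every $i$ (making every term nonnegative) or $\text{sign}(a_i)=-\text{sign}(h_i)$ for every $i$ (making every term nonpositive). Either pattern attains the maximal value $\sum_i|a_i||h_i|$ of $|\mathbf{h}^T\mathbf{a}|$, hence the maximal $(\mathbf{h}^T\mathbf{a})^2$ and the minimal $f$; any mixed sign pattern yields a strictly smaller $|\mathbf{h}^T\mathbf{a}|$, hence a strictly larger $f$, so it cannot be optimal. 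This establishes that the optimal $\mathbf{a}$ must obey one of the two stated alignment conditions. The two options correspond precisely to $\mathbf{a}$ and $-\mathbf{a}$, which give identical rate because $f$ is even in $\mathbf{a}$.

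The main care point, rather than a genuine obstacle, is the treatment of degenerate coordinates. Almost surely $h_i\neq 0$ for all $i$, since the $h_i$ are continuous Gaussian, so $\text{sign}(h_i)$ is well defined; and for any coordinate with $a_i=0$ the value $\text{sign}(a_i)$ is immaterial, so the statement should be read over the support of $\mathbf{a}$. I would also make explicit that the comparison above is among sign patterns sharing a \emph{fixed} magnitude profile, but this is enough: the global maximizer over all integer $\mathbf{a}$ has some magnitude profile, and the argument shows that replacing its signs by the aligned (or uniformly anti-aligned) pattern can only weakly increase the rate without changing $\|\mathbf{a}\|^2$, so the global optimum can always be taken sign-consistent with $\mathbf{h}$. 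This is exactly the property the subsequent scheduling argument exploits when it restricts the search to the sign-consistent vectors in $\mathbf{a^{\{1,k\}}}$.
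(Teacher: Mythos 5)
Your proof is correct and follows essentially the same route as the paper's: both exploit that $\|\mathbf{a}\|^2$ is sign-invariant, reducing the problem to maximizing $(\mathbf{h}^T\mathbf{a})^2$, and then observe that this forces all signs to match (or all to oppose) those of $\mathbf{h}$. The only difference is that you make the paper's ``obviously'' step rigorous via the equality condition of the triangle inequality and handle the degenerate coordinates ($h_i=0$ or $a_i=0$) explicitly, which is a welcome tightening rather than a different approach.
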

\begin{IEEEproof}
Considering the rate expression \eqref{equ-Computation rate with MMSE}, since $\sqn{a}$ does not depend on the signs, the optimal signs must maximize the inner product $(\b{h}^T\b{a})^2$. Obviously, all signs must match (or be oposite) in order to have only positive (or negative) elements in the summation of the inner product. 
\end{IEEEproof}

The usefulness of Lemma \ref{lem-the signs of the coefficient vector are ruled by the channel} is in making the inner maximization in \eqref{equ-maximization of sum rate per slot reduced} trivial, since given a subset of users $\cS_k$ with channel coefficients $\b{h}(\cS_k)$, the optimal $\b{a}(\cS_k) \in \b{a^{\{1,k\}}}$ is clear - just set the signs according to those of $\b{h}(\cS_k)$. Consequently, the following procedure is optimal for solving \eqref{equ-maximization of sum rate per slot reduced}: disregard the signs in $\b{h}_L$; find the optimal subset $\cS_k$ by considering only absolute values and finding the subset which best fits $\b{a}=(1,1,...,1)\triangleq \b{1}$; then simply set the signs of $\b{a}$ from all positive to the original signs of $\b{h}(\cS_k)$. This reduces the double optimization in \eqref{equ-maximization of sum rate per slot reduced}, with $2^k$ options in the inner one, to a much simpler optimization: 
\begin{equation}\label{equ-maximizing the rate for an all one vector}
\cS_k^{\b{a^{\{1,k\}}}}=\argmax_{\cS_k}\left\{\cR(|\b{h}(\cS_k)|,\b{1})\right\},
\end{equation}
where $|\b{h}(\cS_k)|=(|h_1|,|h_2|,...,|h_k|)$.
Thus, when searching for the optimal schedule in each slot, we significantly relax the optimization: we fix a reasonably good $\b{a}$ and search for the best $\b{h}$. As it turns out, this will be asymptotically optimal.

\subsection{Best channel for a fixed $\b{a}$}\label{subsec-Best channel for a fixed a}
 Towards the solution of \eqref{equ-maximizing the rate for an all one vector}, note that
%Following the previous section, we now consider the opposite case in which we fix a specific $\b{a}$ and seek the optimal subset of users $\cS_k$ which maximize the achievable rate. We thus have, 
\begin{equation}\label{equ-Scheduling optimal rate fixed a}
\begin{aligned}
\argmax_{\cS_k}\left\{\cR(\b{h}(\cS_k),\b{a})\right\} &= \argmax_{\cS_k}\left\{\frac{1}{2} \log^+ \left( \|\b{a}\|^2- \frac{\text{P}(\b{h}(\cS_k)^T\b{a})^2}{1+\text{P}\|\b{h}(\cS_k)\|^2} \right)^{-1}\right\}\\
				&=\frac{1}{2} \log^+ \left( \|\b{a}\|^2-  \argmax_{\cS_k}\left\{\frac{\text{P}(\b{h}(\cS_k)^T\b{a})^2}{1+\text{P}\|\b{h}(\cS_k)\|^2}\right\} \right)^{-1}\\
				&=\argmax_{\cS_k}\left\{\frac{\text{P}(\b{h}(\cS_k)^T\b{a})^2}{1+\text{P}\|\b{h}(\cS_k)\|^2}\right\}\\
				&=\argmax_{\cS_k}\left\{\frac{\text{P}\|\b{h}(\cS_k)\|^2\|\b{a}\|^2\cos^2(\theta)}{1+\text{P}\|\b{h}(\cS_k)\|^2}\right\}\\
				&=\argmax_{\cS_k}\left\{\frac{\cos^2(\theta)}{1+\frac{1}{\text{P}\|\b{h}(\cS_k)\|^2}}\right\}.				
\end{aligned}
\end{equation}
Thus, generally speaking, an $\b{h}$ which maximizes the achievable rate should have a high norm and a small angle with $\b{a}$. Clearly, the highest norm vector may not be the one with the smallest angle to $\b{a}$. Thus, the scheduler should seek the optimal tradeoff point to maximize the achievable rate. Our suggested scheduling algorithm, given in the next subsection, searches for this optimal tradeoff point in polynomial-time. The search relies on the following lemma which shows that, for the case of all-ones coefficient vector, sorting the channel vector $\b{h}_L$ by the elements' absolute value simplifies the search significantly. Thus, we define $\b{h}_L^s$ to be an ascending ordered vector according to $|\b{h}_L|$.
\begin{lemma}\label{lem-optimal schedule for all one vector}
The optimal subset $\cS_k$ for the all-ones vector $\b{1}$ is a subset for which $\b{h}(\cS_k)$ is $k$ consecutive elements in $\b{h}_L^s$. That is,
\begin{equation*}
\max_{\cS_k}\left\{\cR(|\b{h}(\cS_k)|,\b{1})\right\}=\max_{i}\left\{\cR(|\b{h}_i'|,\b{1})\right\},
\end{equation*}
where $\b{h}_i'= (h_{L,i}^s,...,h_{L,i+k-1}^s)$ for $i \in [1,...,L-k+1]$.
\end{lemma}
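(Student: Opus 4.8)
The plan is to reduce \eqref{equ-maximizing the rate for an all one vector} to a one-dimensional selection problem and then linearize it. First I would use the chain \eqref{equ-Scheduling optimal rate fixed a} together with $\b{a}=\b{1}$ to note that maximizing $\cR(|\b{h}(\cS_k)|,\b{1})$ over size-$k$ subsets is equivalent to maximizing
\[
V(\cS_k)=\frac{S(\cS_k)^2}{1+\text{P}\,Q(\cS_k)},\qquad S(\cS_k)=\sum_{i\in\cS_k}g_i,\quad Q(\cS_k)=\sum_{i\in\cS_k}g_i^2,
\]
where $g_1\le g_2\le\cdots\le g_L$ denote the sorted absolute values, i.e.\ the entries of $\b{h}_L^s$. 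Consecutive elements of $\b{h}_L^s$ correspond exactly to subsets whose indices form an interval in this sorted list, so it suffices to prove that the global maximizer of $V$ can be taken to be such an interval.

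The key idea is a supporting-hyperplane (Dinkelbach-type) linearization of the ratio $V$. Let $\cS^*$ be a global maximizer with $S^*=S(\cS^*)$, $Q^*=Q(\cS^*)$ and value $V^*=V(\cS^*)$; I assume $V^*>0$, the case $V^*=0$ being trivial since then all rates vanish. Introduce the \emph{linear} functional
\[
T(\cS_k)=2S^*\,S(\cS_k)-V^*\text{P}\,Q(\cS_k)=\sum_{i\in\cS_k}t(g_i),\qquad t(g)=2S^*g-V^*\text{P}\,g^2.
\]
I would then show that $\cS^*$ also maximizes $T$ over all size-$k$ subsets and, more importantly, that \emph{every} maximizer of $T$ attains $V^*$. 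This uses two ingredients: the convexity bound $2S^*S\le S^2+(S^*)^2$, the tangent line to $s\mapsto s^2$ at $S^*$ (tight iff $S=S^*$), and optimality of $V^*$, which gives $S^2-V^*\text{P}Q\le V^*$ for every subset (tight iff $V=V^*$). Chaining these yields $T(\cS_k)\le (S^*)^2+V^*=T(\cS^*)$ for all $\cS_k$, with equality only when both bounds are tight, i.e.\ only when $V(\cS_k)=V^*$.

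Finally I would exploit the shape of $t$. Since $V^*>0$, the function $t(g)=2S^*g-V^*\text{P}g^2$ is a strictly concave downward parabola, hence unimodal in $g$. Maximizing $T(\cS_k)=\sum_{i\in\cS_k}t(g_i)$ over size-$k$ subsets simply selects the $k$ indices with the largest $t(g_i)$; by unimodality these form a block of consecutive indices in the order $g_1\le\cdots\le g_L$ (if $i<j<l$ with $i,l$ selected and $j$ not, unimodality forces $t(g_j)\ge\min\{t(g_i),t(g_l)\}$, contradicting that $j$ is outside the top $k$ except in degenerate ties, which can always be broken so as to keep the selection consecutive). Thus there is a consecutive subset $\cS'$ maximizing $T$, and by the equality condition above $V(\cS')=V^*$. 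Hence the maximum of $\cR$ over the consecutive windows $\b{h}_i'$ equals the global maximum, which is the assertion.

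The main obstacle is the middle step: showing that the ratio-optimal set $\cS^*$ is also optimal for the linear surrogate $T$, and that $T$-optimality is strong enough to pin $V$ back to $V^*$. This is precisely what converts the awkward fractional objective, whose level sets in $(S,Q)$ are not convex so that a naive exchange argument on $S$ and $Q$ yields no clean sign, into a separable, unimodal selection problem for which the consecutive structure is immediate. The remaining technicalities to treat carefully are the degenerate case $V^*=0$ and ties among the values $t(g_i)$.
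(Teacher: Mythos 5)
Your proof is correct, but it takes a genuinely different route from the paper's. The paper works with the $\alpha$-parametrized form of the computation rate from Theorem \ref{the-Computation rate}: it swaps the minimization over $\alpha$ with the minimization over subsets, and then, \emph{for each fixed} $\alpha>0$, the objective is already separable, $\sum_{i\in\cS_k}(\alpha g_i-1)^2$; since $g\mapsto(\alpha g-1)^2$ is a convex parabola, the sequence $\Delta_j=(\alpha h_{L,j}^s-1)^2$ is V-shaped along the sorted order, so its $k$ smallest entries form a consecutive block, and taking the outer minimum over $\alpha$ preserves this. In other words, the scale coefficient $\alpha$ plays the role of a free linearization parameter, and the separability you had to manufacture comes for granted. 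You instead start from the MMSE closed form \eqref{equ-Computation rate with MMSE} via \eqref{equ-Scheduling optimal rate fixed a} and recover separability through a Dinkelbach-type supporting-hyperplane argument at the optimum: the surrogate $t(g)=2S^*g-V^*\mathrm{P}g^2$ is a concave parabola, so its top-$k$ values along the sorted gains are consecutive, and your equality-condition analysis ($S=S^*$ and $V=V^*$ forced at any $T$-maximizer) correctly pins the consecutive $T$-maximizer back to the optimal ratio value. The two proofs share the same combinatorial core — extreme $k$ values of a unimodal/V-shaped sequence over sorted gains are consecutive — but reach it differently: the paper's argument is shorter because it reuses the structure of Theorem \ref{the-Computation rate}, while yours is self-contained given only the closed-form rate, illustrates a general technique for fractional objectives, and is more careful about ties and degenerate cases than the paper's somewhat terse exchange step at the end of its proof.
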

\begin{IEEEproof}
Consider the expression for the achievable rate in Theorem \ref{the-Computation rate}.
We have,
\begin{equation}
\begin{aligned}
\max_{\cS_k}\left\{\cR(|\b{h}(\cS_k)|,\b{1})\right\}&=\max_{\cS_k}\left\{ \frac{1}{2} \log^+ \left( \frac{\text{P}}{\min \limits_{\alpha \in \R} \left\{\alpha^2+\text{P}\|\alpha |\b{h}(\cS_k)|-\b{1}\|^2\right\}} \right)\right\}\\
&= \frac{1}{2} \log^+ \left( \frac{\text{P}}{\min \limits_{\cS_k} \left\{\min \limits_{\alpha \in \R} \left\{\alpha^2+\text{P}\|\alpha |\b{h}(\cS_k)|-\b{1}\|^2\right\}\right\}} \right)\\
&= \frac{1}{2} \log^+ \left( \frac{\text{P}}{\min \limits_{\alpha >0} \left\{\min \limits_{\cS_k} \left\{\alpha^2+\text{P}\|\alpha |\b{h}(\cS_k)|-\b{1}\|^2\right\}\right\}} \right),\\
\end{aligned}
\end{equation}
where in the last line we can reduce the minimization to $\alpha >0$ since for $\alpha <0$ we would increase the term for all $|\b{h}(\cS_k)|$. Therefore we need to show that for any $\alpha >0$ 
\begin{equation*}
\b{h}\left(\argmin \limits_{\cS_k} \left\{\|\alpha |\b{h}(\cS_k)|-\b{1}\|^2\right\}\right)=\b{h}_i',
\end{equation*}
for  some $i \in [1,...,L-k+1]$.

Define the sequence $\Delta_j=(\alpha h_{L,j}^s-1)^2$, for $j=1,...,L$. This sequence can be monotonic increasing, monotonic decreasing or monotonic decreasing and then monotonic increasing with $j$; it depends on the value of $\alpha h_{L,1}^s$ and $\alpha h_{L,L}^s$ with respect to $1$. For example, if $\alpha h_{L,1}^s>1$ then the sequence is monotonic increasing with $j$.
Let us choose some $\cS_k$ with $\b{h}(\cS_k)$ such that its corresponding elements in $\b{h}_L^s$ are not consecutive. Hence, w.l.o.g. assume that two elements in $\b{h}(\cS_k)$ corresponds to two elements $h_{L,i}^s$ and $h_{L,j}^s$ such that $i+1 \neq j $. Accordingly, either the choices $\b{h}_j'$ or $\b{h}_{j-k}'$ will minimize  $\left\{\|\alpha |\b{h}(\cS_k)|-\b{1}\|^2\right\}$ since in at least one of the choices we would decreased with the sequence $\Delta_i$. Note also that this is true for the choices $\b{h}_i'$ or $\b{h}_{i-k}'$
\end{IEEEproof}

\subsection{Scheduling Algorithm}

In this section, we present a polynomial-time scheduling algorithm presented as Algorithm \ref{algo-scheduling algorithm for all transmission} which is an asymptotically optimal solution for the scheduling problem as defined in \eqref{equ-Problem statment with scheduling}. The algorithm relies on the properties suggested in the previous subsections, and therefore in each slot searches the subset of users which maximizes \eqref{equ-maximizing the rate for an all one vector}. Specifically, this search is done by the subroutine Algorithm \ref{algo-scheduling algorithm per slot} which relies on Lemmas \ref{lem-the signs of the coefficient vector are ruled by the channel} and \ref{lem-optimal schedule for all one vector}. The output of Algorithm \ref{algo-scheduling algorithm for all transmission} is the set of subsets of users which should be scheduled in each slot along with the decoding matrix $\b{A}$. 

The complexity of Algorithm \ref{algo-scheduling algorithm per slot} is $O(L\log{L}+(L-k)k)$ due to the sorting of $\b{h}_L$ and the scan of $L-k$ scheduling options for which it computes the achievable rate on vectors with length $k$. Accordingly, the complexity of Algorithm \ref{algo-scheduling algorithm for all transmission} is $O\left(\left(L-1\right)\left(L\log{L}+(L-k)k\right)\right)$ which uses Algorithm \ref{algo-scheduling algorithm per slot} $L-1$ slots. With $k=O(\log{L})$, this results in $O(L^2\log{L})$. Since $O(L^2\log{L})$ is the required complexity to invert the coefficient matrix \cite{wiedemann1986solving}, the complexity of the scheduling algorithm is within this range and does not add any significant computations above the necessary order.

%This complexity neglects the calculation of the linear system solution which must be performed in any scheduling algorithm. %The search for $\b{e}_i$ which is linearly independent with $\b{A}$ may be considered as part of the linear system solution.

%\FMOR{The following should be with regards to algorithm 1 not 2 since we want to talk about the sum-rate. In addition, finding the maximizig achi-rate schedule will not guarantee a completion time of $N$ necessarily  ----For the purpose of comparison the naive solution for this scheduling problem per slot is to compute the sum-rate for all subsets of size $k$ and choose the maximum among them. Since one has ${L \choose k} \approx L^k$ subsets, and for each subset there are $O(k^2\sqrt{1+P\|\b{h}^2\|})$ candidate coefficient vectors, the complexity is polynomial in $L$ but exponential in $k$. In fact, even for fixed $k$ such a complexity might be too high if $L$ is large. }

Simulation results of the system's expected sum-rate for Algorithm \ref{algo-scheduling algorithm for all transmission} is depicted in Figure \ref{fig-Optimal schedule CF comparisons}. The simulation results are compared with an upper bound on the expected system sum-rate which is calculated as the mean of the highest achievable rate in $L$ slots. That is, we ignore the rank restriction and take the highest rate in each transmission slot. However, since this calculation becomes prohibitively complex, the exhaustive search was done only up to $L=40$. For larger $L$ the plot merely an interpolation. The figure also includes the upper bound as given in Theorem \ref{the-Expected sum-rate of scheduling algorithm upper bound} and $\log{\log{L}}$ to reflect the scaling law anticipated by Corollary \ref{cor-optimality of algorithm 1}. One can observe that, as $L$ grows, the expected sum-rate that Algorithm \ref{algo-scheduling algorithm for all transmission} provides coincides with the curve of the optimal schedule.

\begin{remark}[Real-time algorithm]
Algorithm \ref{algo-scheduling algorithm for all transmission} attains its optimality without requiring global CSI of all slots in advance. That is, the scheduler only needs the CSI in the beginning of each slot in order to determine which $k$ users to schedule.
\end{remark}

Algorithm \ref{algo-scheduling algorithm for all transmission} completes, yet successful decoding can actually occur if both the rank of the decoding matrix $\b{A}_L$ is $L$ and the transmission rate of all users were below the minimal achievable rate at the relay among all slots. Theorem \ref{the-completion time} and Lemma \ref{lem-outage probability goes to zero} in the next subsection discuss these two critical conditions.

\begin{figure}[t]
    \centering
        \includegraphics[width=0.48\textwidth]{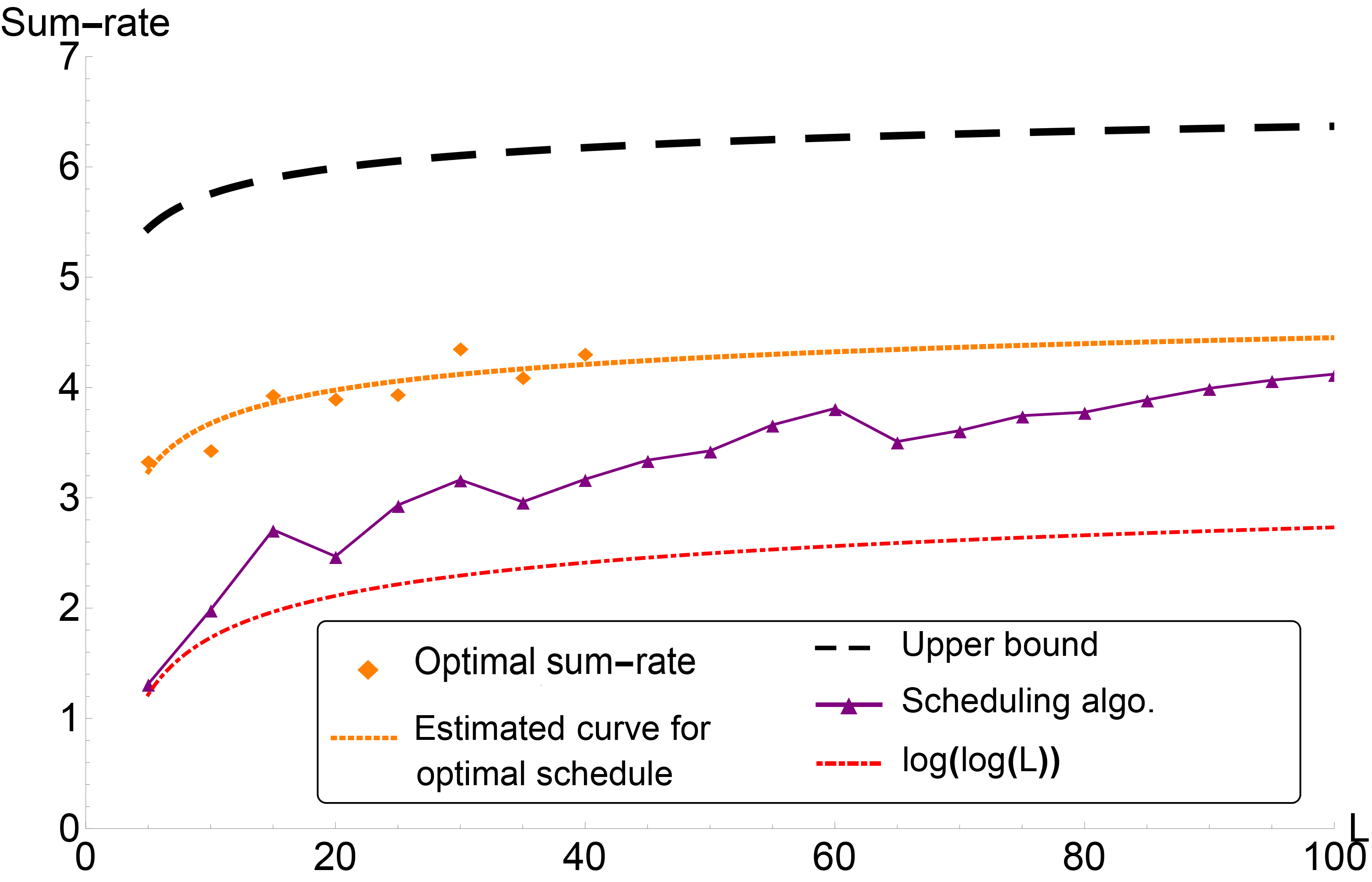}
    \caption{The system's expected sum-rate of the Algorithm  \ref{algo-scheduling algorithm for all transmission} compared with the expected sum-rate for the optimal schedule as a function of $L$ with $\text{P}=1000$. The asymptotic upper bound which is given in Theorem \ref{the-Expected sum-rate of scheduling algorithm upper bound} along with the function $\log{\log{L}}$, which is the scaling law Algorithm \ref{algo-scheduling algorithm for all transmission} achieves as given in Corollary \ref{cor-optimality of algorithm 1}, are also plotted.}      
     \label{fig-Optimal schedule CF comparisons} 
\end{figure}

\begin{algorithm}
\caption{Scheduling for CF (single relay)}
\label{algo-scheduling algorithm for all transmission}
\hspace*{\algorithmicindent} \textbf{Input:} $\b{h}_L$ \\
 \hspace*{\algorithmicindent} \textbf{Output:} $(\cS_k^N,\b{A})$ or $error$
\begin{algorithmic}[1]
\Algphase{Initialization:}
\State $\b{A} \gets \emptyset$ ; $L\gets length(\b{h}_L)$
\State $R_{min}\gets\infty; \ R^*\gets0; \ R\gets \text{The transmission rate}$
\State $k\gets \lceil\ln{L}\rceil+1$
\State $\cS_k^N \gets \emptyset$
\Algphase{Main:}
\For {$i =1; \ i\leq L-1; \ i++$}
	\State $(\cS_k^*,\b{a}^*,R^*) \gets  Algorithm 2(\b{h}_L,\b{1}^k)$
	\State Schedule users according to $\cS_k^*$ and $\cS_k^N \gets \cS_k^N \cup \cS_k^*$ 
	\State $\b{A} \gets  \b{A} \cup (\b{a}^*(\cS_k^*))_\Uparrow^L$ 
	\If {$R^* < R_{min}$}
		\State $R_{min}=\cR(\b{h}(\cS_k^*),\b{a}^*(\cS_k^*))$ 
	\EndIf
\EndFor
	\State Find $\b{e}_i$ that is linearly independent with $\b{A}$  
	\State $\b{A}  \gets \b{A} \cup \b{e}_i$ 
	\State Schedule user $i$
	\If {Rank$(\b{A})==L \ \&\& \ R<R_{min}$}
		\State Solve linear system According to $\b{A}$
		\State \Return $(\cS_k^N,\b{A})$
	\Else
		\State \Return $(error)$ 
	\EndIf
\end{algorithmic}
\end{algorithm}
%\footnotetext{Note that $\b{a}^*$ is of dimension $k$ whereas the decoding matrix $\b{A}$ has rows of length $L$. Thus we define the Expand() function to generate a vector of length $L$ with $L-k$ zeroes and the $k$ elements of $\b{a}^*$ in their corresponding positions according to the scheduled users.}
 
\begin{algorithm}
\caption{Finding optimal schedule for $\b{a}=\b{1}$ per slot}
\label{algo-scheduling algorithm per slot}
\hspace*{\algorithmicindent} \textbf{Input: $(\b{h}_L,\b{a})$} \\
 \hspace*{\algorithmicindent} \textbf{Output: $(\cS_k^*,\b{a}^*,R^*)$}
\begin{algorithmic}[1]
\Algphase{Initialization:}
\State $\b{h}_L^s \gets$ sort according to $Abs(\b{h}_L)$ 
\State $\b{h}_L^I \gets \text{ordering of } \b{h}_L^s \text{ in } \b{h}_L$ \Comment{indices vector}
\State $\b{h}_L^{sign} \gets \b{h}_L^s./Abs(\b{h}_L^s)$ \Comment{element-wise devision}
\State $k \gets length(\b{a})$
\State $R^* \gets  0$; %$i^* \gets -1$; $\cS_k^* \gets \emptyset$; $\b{a}^* \gets \emptyset$
\State $i^* \gets -1$
\State $\cS_k^* \gets \emptyset$
\State $\b{a}^* \gets \emptyset$
\Algphase{Main:}
\For {$i =1; \ i\leq L-k; \ i++$}
	\State $\b{h} \gets  \b{h}_L^s(i:i+k)$
	\If {$\cR(|\b{h}|,\b{a}) > R^*$}
		\State $i^* \gets i$
		\State $R^* \gets \cR(|\b{h}|,\b{a})$
	\EndIf
\EndFor
\If {$i^* = -1$} 
	\Return $(\{1:k\},\b{a})$ \Comment{zero rate}
\Else
	\State $\cS_k^* \gets \{\b{h}_L^I(i^*):\b{h}_L^I(i^*)+k-1\}$
	\State $\b{a}^* \gets \b{a}.*\b{h}_L^{sign}(i^*:i^*+k-1)$ 
\EndIf\\
\Return $(\cS_k^*,\b{a}^*,R^*)$
\end{algorithmic}
\end{algorithm}

\subsection{Successful Decoding of Algorithm \ref{algo-scheduling algorithm for all transmission}}\label{sec-Completion Time and the Value of k}

We first link the number of transmitters, the number of scheduled users, and the coefficient vectors to the rank of $\b{A}_L$. Recall that in each slot a coefficient vector $\b{a}(\cS_k)\in \b{a^{\{1,k\}}}$ is chosen by the relay for the scheduled subset of users $\cS_k$. This vector is then mapped to an $L$-length vector using the function $(\cdot)_\Uparrow^L$. Thus, is each slot, a vector from the set $\Ss_{L,k}^{\{-1,1\}}$ is added to $\b{A}_L$. Since the users' channel coefficients are $i.i.d.$, the scheduled subset $\cS_k$ in each slot is uniformly distributed among all subsets. In addition, so do the signs of the channel's coefficients. Accordingly, the distribution of the vectors from $\Ss_{L,k}^{\{-1,1\}}$ which are added to $\b{A}_L$ is also uniform.

Theorem \ref{the-completion time} below shows that with $O(\log{L})$ scheduled users in each slot, we can indeed achieve a full rank with high probability after $L$ slots.

\begin{theorem}\label{the-completion time}
Assume $k=O(\log{L})$ and the coefficient vectors for the scheduled subset $\cS_k$ are drawn uniformly from $\b{a^{\{1,k\}}}$ for $L-1$ slots. Then, there exists a unit vector such that selecting it at the $L$th slot results in $rank(\b{A}_L)=L$ with probability $1-o(1)$.
\end{theorem}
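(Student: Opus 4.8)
The plan is to reduce the claim to a pure linear-algebra statement about the $L-1$ randomly generated rows, and then to show those rows are linearly independent with high probability.

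\textbf{Reduction to independence of the $L-1$ sparse rows.} First I would observe that the existence of a completing unit vector is in fact \emph{equivalent} to the first $L-1$ rows of $\b{A}_L$ being linearly independent over $\F_p$. Each such row is, as noted just before the theorem, uniform over $\Ss_{L,k}^{\{-1,1\}}$ and independent across slots. If these $L-1$ rows span a subspace of dimension exactly $L-1$, that span is a hyperplane $H=\{\b{x}:\b{n}^T\b{x}=0\}$ with nonzero normal $\b{n}$; choosing any coordinate $i$ with $n_i\neq0$ gives $\b{n}^T\b{e}_i=n_i\neq0$, so $\b{e}_i\notin H$ and appending it yields $rank(\b{A}_L)=L$. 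Conversely, if the rows have rank at most $L-2$, appending a single unit vector raises the rank to at most $L-1<L$, so no completing unit vector can exist. This matches line~14 of Algorithm~\ref{algo-scheduling algorithm for all transmission}, and reduces Theorem~\ref{the-completion time} to proving $\Pr[\text{the }L-1\text{ sparse }\pm1\text{ rows are independent}]=1-o(1)$.

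\textbf{Independence via a step-by-step rank argument.} Let $\b{r}_1,\dots,\b{r}_{L-1}$ be the rows and $V_j=\mathrm{span}(\b{r}_1,\dots,\b{r}_j)$. Writing the failure event as a union over the first slot at which the rank fails to increase,
\[
\Pr[\text{dependent}]\le\sum_{j=2}^{L-1}\Pr\!\big(\b{r}_j\in V_{j-1}\,\big|\,\dim V_{j-1}=j-1\big),
\]
it suffices to bound the probability that a fresh sparse $\pm1$ vector lands in a fixed subspace $V$ of dimension $d=j-1\le L-2$, i.e.\ of codimension $c=L-d\ge2$. As a necessary preliminary I would also record a coupon-collector step: since $k=\Theta(\log L)$, the $L-1$ random supports jointly cover all $L$ coordinates with probability $1-o(1)$, so no column is identically zero. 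The core estimate writes $V=\ker\b{C}$ with $\b{C}\in\F_p^{c\times L}$ of rank $c$, and bounds $\Pr(\b{r}_j\in V)$ by combining (i) the random support $S$ of $\b{r}_j$ hitting the pivot coordinates that make $\b{C}$ nonsingular, and (ii) an anti-concentration (Littlewood--Offord type) bound on the random signs, which forces each realized linear constraint $\b{n}^T\b{r}_j=0$ to hold with probability bounded away from $1$. For a single constraint with a dense normal this already gives a factor $o(1)$; exploiting the full codimension $c$ yields the rapid decay needed so that the high-codimension early steps contribute a vanishing total, while the few late steps ($c=2,3,\dots$) each contribute $o(1)$.

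\textbf{Main obstacle.} The technical heart is precisely this uniform per-step estimate for a \emph{sparse} $\pm1$ vector lying in an arbitrary codimension-$c$ subspace, and its summability over $j$. The difficulty is the coupling between the two sources of randomness: whether a given constraint can be used depends on the random support meeting the subspace's pivot coordinates, while the gain from a usable constraint comes from the random signs and must survive an anti-concentration argument over $\F_p$. The most dangerous regime is that of nearly full rank (small $c$) together with the exponentially many candidate dependency vectors one meets if one instead union-bounds directly over left-null vectors $\b{v}$ grouped by weight $w=|\mathrm{supp}(\b{v})|$: there the count $\binom{L-1}{w}(p-1)^w$ must be dominated by the smallness of $\Pr(\b{v}^T\b{B}=0)$, which again hinges on a sparse Littlewood--Offord bound for the column sums $\sum_i v_iB_{ic}$. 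I would expect the cleanest execution to run the argument over $\R$ (showing the integer matrix has nonzero determinant, hence full rank) where the classical Erd\H{o}s--Littlewood--Offord bound $O(1/\sqrt{k})=o(1)$ applies directly, and then to transfer to $\F_p$ for the relevant field size; making the finite-field anti-concentration work for small fixed $p$ is the point that needs the most care.
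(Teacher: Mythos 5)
Your reduction step is correct and matches the paper: a completing unit vector exists iff the first $L-1$ rows are linearly independent (the paper argues this by contradiction — rank $L-1$ cannot contain all unit vectors), so the theorem does reduce to showing the $L-1$ random rows from $\Ss_{L,k}^{\{-1,1\}}$ are independent with probability $1-o(1)$. The genuine gap is in the core estimate, which you correctly identify as the "technical heart" but never actually establish — and the route you sketch would fail as stated. The step-by-step union bound requires a bound on $\Pr(\b{r}_j\in V)$ that is uniform over all subspaces $V$ of codimension $c$ that could arise as a span of earlier rows, and this worst-case bound is simply not small for small $c$: take $V$ to be (or be close to) a coordinate subspace supported on $L-c$ coordinates, which is spannable by weight-$k$ $\pm1$ vectors. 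Then membership of $\b{r}_j$ in $V$ is decided purely by whether its random support avoids the $c$ excluded coordinates, the sign/Littlewood--Offord gain contributes nothing, and the failure probability is about $(1-k/L)^c\approx 1$ whenever $c\ll L/k = L/\log L$. So the last $\Theta(L/\log L)$ steps each have worst-case failure probability near $1$, and no summability argument over "fixed" subspaces can give a total of $o(1)$. Fixing this requires showing that the \emph{random} span $V_{j-1}$ avoids such degenerate configurations with high probability (or union-bounding over left-kernel vectors grouped by weight, with sparse anti-concentration for each weight class) — which is precisely the research-level content of results on sparse random matrices; your proposal names the difficulty but does not resolve it, and additionally leaves the $\F_p$ anti-concentration transfer as an open point.

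The paper takes a much shorter route that avoids proving any random-matrix fact from scratch: it reduces the rows mod $2$, observes that the reduced rows are uniform on $\Ss_{L,k}^{\{1\}}$ (binary weight-$k$ vectors), invokes the known theorem of Calkin \cite[Theorem 3]{calkin1997dependent} that such vectors are independent over $\F_2$ with probability $1-o(1)$ whenever $n<\bigl(1-\frac{e^{-k}}{\ln 2}-o(e^{-k})\bigr)L$, uses $rank(\b{A}_n)\geq rank(\b{A}_n \bmod 2)$ (an odd minor determinant is a nonzero integer), and then checks that $n=L-1$ satisfies the threshold exactly when $k>\ln L+\ln 2$, which the choice $k=\lceil\ln L\rceil+1$ meets. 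If you want to salvage your from-scratch approach, the honest statement is that you would be re-proving a version of Calkin's theorem, and the step-by-step argument must be replaced or supplemented by control of the structure of the evolving span.
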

In other words, Theorem \ref{the-completion time} asserts that by uniformly selecting vectors from $\b{a^{\{1,k\}}}$ for $L-1$ slots, then completing the matrix with a single unit vector, one has a full rank with high probability. To prove Theorem \ref{the-completion time}, we first give the lemma bellow.

\begin{lemma}\label{lem-rank of i in i slots}
Let $\b{A}_n$ be the decoding matrix at slot $n$, where each row $i\leq n$ is a vector from $\Ss_{L,k}^{\{-1,1\}}$ corresponding to $(\b{a}(\cS_k(i)))_\Uparrow^L$, where $\b{a}(\cS_k(i))$ was drawn uniformly from $\b{a^{\{1,k\}}}$. If 
\begin{equation}
n<\left(1-\frac{e^{-k}}{\ln{2}}-o\left(e^{-k}\right) \right)L
\end{equation}
then $rank(\b{A}_n)=n$ with probability $1-o(1)$.
\end{lemma}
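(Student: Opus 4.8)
The plan is to reveal the rows of $\b{A}_n$ one at a time and track the growth of their span, reducing the claim to a bound on the probability that some freshly revealed row fails to raise the rank. Concretely, write $\b{r}_i = (\b{a}(\cS_k(i)))_\Uparrow^L \in \Ss_{L,k}^{\{-1,1\}}$ for the $i$-th row, let $V_{i-1} = \mathrm{span}\{\b{r}_1,\dots,\b{r}_{i-1}\}$, and let $U_{i-1} = \bigcup_{j<i}\mathrm{supp}(\b{r}_j)$ be the set of columns touched so far. Since $\mathrm{rank}(\b{A}_n)=n$ exactly when every row escapes the span of its predecessors, we have $P(\mathrm{rank}(\b{A}_n)<n)\le \sum_{i=1}^n P(\b{r}_i\in V_{i-1})$, and the whole problem is to control these terms.

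The first, clean ingredient is that every previous row is supported on $U_{i-1}$, so $V_{i-1}\subseteq \R^{U_{i-1}}$; hence a sufficient condition for $\b{r}_i\notin V_{i-1}$ is $\mathrm{supp}(\b{r}_i)\not\subseteq U_{i-1}$, i.e. that the $i$-th support hits a column never covered before. This gives the escape for free while uncovered columns are plentiful, and it converts the question into a covering (coupon-collector) problem. As each support is a uniform $k$-subset drawn independently across slots, the number $m_n = L - |U_n|$ of columns left uncovered after $n$ slots has $\EX[m_n] = L(1-k/L)^n \approx L\,e^{-nk/L}$, and the trivial bound $\mathrm{rank}(\b{A}_n)\le |U_n|$ already forces $n \le L - m_n$. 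Writing $n=(1-\epsilon)L$, this coverage requirement reads $e^{-(1-\epsilon)k}\lesssim \epsilon$, whose solution is $\epsilon \gtrsim e^{-k}$. The stated threshold $\epsilon = \tfrac{e^{-k}}{\ln 2}+o(e^{-k})$ is precisely this coupon-collector threshold, and I expect the $1/\ln 2$ slack to be exactly what a concentration step (Chernoff plus a union over columns) must spend to upgrade $\EX[m_n]\approx Le^{-nk/L}$ into the high-probability statement $m_n \le L-n$.

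Because the fresh-column escape degrades in the very last slots — once coverage is almost complete, a new support lands inside $U_{i-1}$ with probability near one — the crux is certifying independence in the regime where all rows live on the (nearly complete) covered set. The cleanest route is through the \emph{term rank}: $\mathrm{rank}(\b{A}_n)$ is at most the maximum matching in the bipartite graph linking each row to the columns of its support, and by K\"onig/Hall this matching saturates all $n$ rows iff every subset $A$ of rows satisfies $|N(A)|\ge |A|$. For $k=O(\log L)$ the degree sits comfortably above the matching threshold, so the only binding Hall constraint is the all-rows one, $|U_n|\ge n$, which is exactly the coverage event above; discharging the smaller subsets via a union over the $2^{n}$ choices of $A$ is the natural place for the factor $\ln 2$ to surface. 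This yields a row-saturating matching, hence term rank $n$, with probability $1-o(1)$.

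The main obstacle is the final step: upgrading term rank $n$ to actual rank $n$, i.e. showing the random $\pm1$ signs do not conspire to make the matched submatrix singular. Over the large prime field used for decoding this is an anti-concentration statement of Littlewood--Offord / random-singularity flavour. The tempting argument is to expose the sign at a matched entry last: $\det$ is then an affine function of that $\pm1$ sign whose leading coefficient is the corresponding cofactor, so $\det=0$ with probability at most $1/2$ \emph{provided that cofactor is nonzero} — but nonvanishing of the cofactor is itself a singularity question, so the recursion does not close for free. I would isolate this as the delicate part and, if a self-contained martingale bound (one sign at a time, using that along the matching each new sign lands in $V_{i-1}$ with probability at most $\tfrac12$) proves unwieldy, invoke a known non-singularity result for sparse random sign matrices of full term rank. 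The combinatorial heart of Lemma~\ref{lem-rank of i in i slots}, and the genuine source of the $e^{-k}/\ln 2$ threshold feeding Theorem~\ref{the-completion time}, remains the coupon-collector coverage computation of the middle paragraphs.
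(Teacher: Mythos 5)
There is a genuine gap, and you identify it yourself: your argument establishes (at best) that the bipartite row--column graph has a row-saturating matching, i.e.\ that the \emph{term rank} is $n$, but term rank only upper-bounds the rank, so this is a necessary condition, not a sufficient one. The upgrade from term rank to actual rank is the entire difficulty, and your two candidate routes both fail: the cofactor-exposure recursion is circular (as you note, nonvanishing of the cofactor is itself the singularity question), and the ``known non-singularity result for sparse random sign matrices of full term rank'' that you hope to invoke does not exist in citable form --- anti-concentration for sparse random $\pm1$ matrices with dependent support structure is a hard problem in its own right, not a black box. Everything before that step (the union bound $P(\mathrm{rank}<n)\le\sum_i P(\b{r}_i\in V_{i-1})$, the fresh-column escape, K\"onig/Hall) is fine but does not touch the real obstruction.

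The paper's proof avoids this problem entirely, in one stroke: reduce $\b{A}_n$ modulo $2$. Every $\pm1$ entry becomes a $1$, so the sign question vanishes; the rows of $\b{B}_n=\b{A}_n \bmod 2$ are uniform on $\Ss_{L,k}^{\{1\}}$ (each binary support vector has exactly $2^k$ signed preimages), and the theorem of Calkin \cite{calkin1997dependent} gives $\mathrm{rank}_{\F_2}(\b{B}_n)=n$ with probability $1-o(1)$ exactly under the stated condition on $n$; finally $\mathrm{rank}(\b{A}_n)\ge \mathrm{rank}_{\F_2}(\b{B}_n)$ (a primitive integer dependency among rows of $\b{A}_n$ reduces to a nontrivial $\F_2$ dependency) finishes the proof. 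This also shows that your identification of the threshold is wrong: the coupon-collector coverage heuristic gives $\epsilon\gtrsim e^{-k}$ with no $1/\ln 2$ factor, whereas the constant $e^{-k}/\ln 2$ --- which is strictly \emph{more} restrictive than coverage --- is the threshold at which $\F_2$-linear dependencies among the rows appear, an entropy-versus-probability counting phenomenon in Calkin's analysis, not a covering condition and not a Hall-condition union bound. In short, the constant you are trying to explain lives over $\F_2$, where the sign problem you could not resolve does not even arise; reproducing it over $\R$ via matchings and Littlewood--Offord is both harder than necessary and, as written, incomplete.
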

\begin{IEEEproof}
The proof follows from \cite[Theorem 3]{calkin1997dependent}, also appearing in \cite{mazumdar2014update} with a similar formulation. Specifically, the results in \cite{calkin1997dependent,mazumdar2014update} consider random vectors over the binary field $\F_2^L$ with exactly $k$ ones, i.e., $\Ss_{L,k}^{\{1\}}$.

Define the matrix $\b{B}_n=(\b{A}_n \mod 2)$, where the modulo operation is element-wise. The modulo operation acts as a mapping between vectors in $\Ss_{L,k}^{\{-1,1\}}$, which have non-zero elements at certain positions, to vectors in $\Ss_{L,k}^{\{1\}}$ which have only ones at the same positions. Every $-1$ becomes $1$. Since the rows of $\b{A}_n$ are uniformly distributed from $\Ss_{L,k}^{\{-1,1\}}$ and for every vector in $\Ss_{L,k}^{\{1\}}$ there are exactly $2^k$ corresponding vectors in $\Ss_{L,k}^{\{-1,1\}}$, the probability remains uniform on the elements of $\Ss_{L,k}^{\{1\}}$. Thus, the rows of $\b{B}_n$ are uniformly distributed from $\Ss_{L,k}^{\{1\}}$  and according to \cite{calkin1997dependent}, if $n<\left(1-\frac{e^{-k}}{\ln{2}}-o\left(e^{-k}\right) \right)L$ then $rank(\b{B}_n)=n$ with probability $1-o(1)$.\\ 
Thus, since $rank(\b{A}_n)\geq rank(\b{B}_n)$, the result follows. Finally, since we will need a more precise expression for the $o\left(e^{-k}\right)$ term we note that this is in fact \cite{calkin1997dependent}, \\ $\frac{1}{2 \log{2}} \left(k^2-2k+\frac{2 k}{\log{2}}-1\right)e^{-2k} - O\left(k^4\right)e^{-3k}=o\left(e^{-k}\right)$.
%Since $\Ss_{L,k}^{\{1\}}$ contains all binary vectors with exactly $k$ ones as a subset, the rank of $\b{A}_i$ is at least $i$ as well. 
%\begin{equation*}
%\begin{aligned}
%rank(\b{A}_i)&= rank(\b{A}_i \mod  3)\\
%&\geq rank(\b{A}_i \mod 2)
%\end{aligned}
%\end{equation*}
%This follows from the fact that the probability of picking a new vector, linearly independent of the ones collected thus far, is larger when the vectors are chosen from $(\b{a^{\{1,k\}}})_\Uparrow^L$ compared to $\F_2^{L,k}$. We show this rigorously. Given that $i$ independent vectors were already picked. Let $p_{1}(i)$ and $p_{2}(i)$ be the probability of picking a linearly independent vector from the sets $(\b{a^{\{1,k\}}})_\Uparrow^L$ and $\F_2^{L,k}$, respectively. The number of vectors in each set is ${L \choose k} 2^k$ and ${L \choose k}$, respectively. Where the multiplication in $2^k$ is due to a possible signs. Thus,
%\begin{equation*}
%p_{1}(i)=\frac{{L \choose k}2^k-x}{{L \choose k}2^k},\ \ p_{2}(i)=\frac{{L \choose k}-2^i}{{L \choose k}} 
%\end{equation*}
%where $p_{1}(i)$ \FMOR{Come back to this!} 
\end{IEEEproof}

\begin{IEEEproof}[Proof of Theorem \ref{the-completion time}]
According to Lemma \ref{lem-rank of i in i slots}, a rank of $L-1$ can be achieved in $L-1$ slots with probability $1-o(1)$ as long as
\begin{equation}
L-1<\left(1-\frac{e^{-k}}{\ln{2}}-o\left(e^{-k}\right) \right)L.
%&1-\frac{1}{L}<1-\frac{e^{-k}}{\ln{2}}-o\left(e^{-k}\right)\\
%&\frac{1}{L}>\frac{e^{-k}}{\ln{2}}+o\left(e^{-k}\right).
\end{equation}
The above inequality reduces to
\begin{equation}
\frac{1}{L}>\frac{e^{-k}}{\ln{2}}+o\left(e^{-k}\right), 
\end{equation}
which is satisfied as long as
\begin{equation}\label{equ-the k value}
k>\ln{L}+\ln{2}.
\end{equation}
Note that the term $o\left(e^{-k}\right)$ stated in Lemma \ref{lem-rank of i in i slots} decays very fast with $k$.
Letting $k=\lceil\ln{L}\rceil+1$ meets the requirement above. 
In the $L$-th slot, the scheduler may schedule a single user (a unit vector) which will be the missing degree of freedom for achieving rank $L$. Such vector always exists since otherwise, this means that all the unit vectors are spanned by $\b{A}_{L-1}$ which is a contradiction.   
\end{IEEEproof}

%\begin{remark}[The last slot]
%Lemma \ref{lem-completion time} guarantees a rank $L$ within $L$ slots if one searches for the necessary schedule in the last slot. We conjecture that keeping the same procedure similar to the first $L-1$ and choosing vectors only from $\b{a^{\{1,k\}}}$ will also ensure a completion time which is $O(L)$.
%\end{remark}

%The above subsections show that letting the relay select in each slot coefficient vector from $\b{a^{\{1,k\}}}$ is optimal in terms of the completion time and may be a very good option for maximizing $\cR(\b{h},\b{a})$ in each slot if there is an appropriate match to the channel vector of the scheduled subset of users. Thus, in the following subsection, we analyze the problem of finding the best schedule for the fixed coefficient vector $\b{1}$ as given in \eqref{equ-maximizing the rate for an all one vector}.

Theorem \ref{the-completion time} shows that indeed the rank of $\b{A}_L$ is $L$ with probability $1-o(1)$, which is asymptotically the shortest completion time for a single relay model.

We now show that the rate restriction \eqref{equ-rate restriction for successful decoding} is also satisfied. That is, if $R$ is the transmission rate of all the users, then the following must be satisfied, 
\begin{equation}
R<\min_{n:1,...,L-1} \cR(\b{h}(n),\b{a}(n)).
\end{equation}
Accordingly, we define an outage scenario when the above condition is not satisfied. Formally,
\begin{equation}\label{equ-outage definition}
P_{out}(R) = P_r\left(\min_{n:1,...,L-1}\{ \cR(\b{h}^*(n),\b{a}^*(n))\} < R \right),
\end{equation}
where $\b{h}^*(n)$ and $\b{a}^*(n)$ are the channel and coefficient vectors of the scheduled subset $\cS_k^*(n)$ in time slot $n$.
Note that we assume that in the last slot a single user $i$ is scheduled; thus it's rate is restricted to a SISO Gaussian channel capacity, i.e., $R<\frac{1}{2}\log(1+h_i^2\text{P})$. The following lemma shows that using the lower bound on the achievable rate, as given in Theorem \ref{the-Expected achievable rate of scheduling algorithm lower bound}, allows a zero outage probability at the limit of large $L$.

\begin{lemma}\label{lem-outage probability goes to zero} 
If the transmission rate is set to be $R=\left(\frac{1}{4}-\epsilon\right)\log{\log{L}}$ where $\epsilon$ can be chosen to be arbitrary small, the outage probability vanishes. That is,
\begin{equation}
\lim_{L \rightarrow \infty} P_{out}\left( R\right) =0.
\end{equation}
\end{lemma}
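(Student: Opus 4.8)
The plan is to reduce the multi-slot outage event of \eqref{equ-outage definition} to a single-slot failure event by a union bound, and then to control the single-slot failure probability through the probability that the channel realization in that slot fails to contain a sufficiently large cluster of strong, nearly-equal coefficients. Since the channel is drawn i.i.d.\ across slots, writing the outage as a union over the $L-1$ slots gives
\[
P_{out}(R) = P_r\!\left(\bigcup_{n=1}^{L-1}\{\cR(\b{h}^*(n),\b{a}^*(n)) < R\}\right) \leq (L-1)\,P_r\!\left(\cR(\b{h}^*,\b{a}^*) < R\right),
\]
so it suffices to show that the per-slot failure probability decays faster than $1/L$; in fact I expect it to decay like $e^{-c\sqrt{L}}$, which makes the factor $(L-1)$ harmless.

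For a single slot I would introduce the \emph{good event} $G$ that at least $k$ of the $L$ users have $|h_i|$ lying in the window $[u,u+\delta]$, with $u$ and $\delta$ as in Theorem~\ref{the-Expected achievable rate of scheduling algorithm lower bound}. On $G$ these $k$ users form a consecutive block of $\b{h}_L^s$, so the search of Algorithm~\ref{algo-scheduling algorithm per slot}, which by Lemma~\ref{lem-optimal schedule for all one vector} maximizes $\cR(|\b{h}|,\b{1})$ over consecutive blocks, achieves at least the rate of that block. For this block one has $\|\b{h}(\cS_k)\|^2 \geq ku^2$ and $\cos^2\theta \geq u^2/(u+\delta)^2$, and substituting these two monotone bounds into the exact rate expression \eqref{equ-Computation rate with MMSE} with $\b{a}=\b{1}$ yields precisely the lower bound underlying Theorem~\ref{the-Expected achievable rate of scheduling algorithm lower bound}, now holding \emph{pointwise} on $G$, namely $\cR(\b{h}^*,\b{a}^*) \geq \tfrac14\log\log L\,(1-o(1))$. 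Since $R = (\tfrac14-\epsilon)\log\log L$ and the $o(1)$ loss eventually falls below $\epsilon$, for all $L$ large enough we obtain $\{\cR(\b{h}^*,\b{a}^*) < R\} \subseteq G^c$, reducing the per-slot estimate to bounding $P_r(G^c)$.

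It then remains to show $P_r(G^c)$ is super-polynomially small. The number of users in the window, $N_{win}$, is $\mathrm{Binomial}(L,p)$ where $p$ is the probability that a standard normal has magnitude in $[u,u+\delta]$. Using the defining relation $(u+\delta)^2 = 2\ln(\delta\sqrt{L}/\sqrt{2\pi})$, the normal density at $u+\delta$ equals $(\delta\sqrt{L})^{-1}$, so $p \geq 2\delta\,(\delta\sqrt{L})^{-1} = 2/\sqrt{L}$ and hence $\EX[N_{win}] \geq 2\sqrt{L}$. Because $k = \lceil\ln L\rceil + 1 = o(\sqrt{L})$, a multiplicative Chernoff bound gives $P_r(G^c) = P_r(N_{win} < k) \leq P_r(N_{win} \leq \tfrac12\EX[N_{win}]) \leq e^{-\EX[N_{win}]/8} \leq e^{-\sqrt{L}/4}$ for $L$ large. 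Combining with the union bound yields $P_{out}(R) \leq (L-1)e^{-\sqrt{L}/4} \to 0$, which is the claim.

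The step I expect to be the main obstacle is the second one: certifying the deterministic rate guarantee on $G$ and verifying that the $(1-o(1))$ slack inherited from Theorem~\ref{the-Expected achievable rate of scheduling algorithm lower bound} remains strictly above $(\tfrac14-\epsilon)\log\log L$ for all sufficiently large $L$. One must justify using the norm bound and the angle bound simultaneously (legitimate by monotonicity of $Px/(1+Px)$ in $x$ and of the fraction in $\cos^2\theta$), and, crucially, that the window $[u,u+\delta]$ is at once wide enough to be populated by $\Theta(\sqrt{L})$ users yet narrow enough to force a small angle $\theta$. This balancing is exactly what the specific choices of $u$ and $\delta$ secure, so the argument hinges on carrying their asymptotics through cleanly; once that is settled, the union bound and Chernoff step are routine.
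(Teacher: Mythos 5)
Your proof is correct, but it takes a genuinely different route from the paper's own argument in Appendix \ref{AppendixH}. The paper keeps the minimum over slots inside the probability, lower-bounds $\log^+$ by $\log$, and applies Markov's inequality to the nonnegative variable $k-\min_n\bigl\{\text{P}({\b{h}^*(n)}^T\b{1})^2/(1+\text{P}\|\b{h}^*(n)\|^2)\bigr\}$; it then conditions on the \emph{all-slots} good event $\xi$ from the proof of Theorem \ref{the-Expected achievable rate of scheduling algorithm lower bound} and drives the resulting expression $k(\log L)^{\frac{1}{2}-2\epsilon}\bigl(1-\tfrac{u^2}{(u+\delta)^2}P_r(\xi)\bigr)$ to zero through a lengthy limit computation, in which the strict positivity of $\epsilon$ is what kills the residual $2u\delta/(u+\delta)^2$ term. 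You instead decompose the outage event by a union bound over the $L-1$ slots, observe that on the \emph{per-slot} good event $G$ the bounds $\|\b{h}'\|^2\geq ku^2$ and $\cos^2\theta'\geq u^2/(u+\delta)^2$ hold deterministically, so the rate formula of Theorem \ref{the-Expected achievable rate of scheduling algorithm lower bound} holds pointwise on $G$ (indeed without the $P_r(\xi)$ factor, which only weakens the bound), giving the inclusion $\{\cR(\b{h}^*,\b{a}^*)<R\}\subseteq G^c$ for all large $L$; you then control $P_r(G^c)$ with the same binomial Chernoff estimate and the same bound $p(u,\delta)\geq 2/\sqrt{L}$ that the paper uses in \eqref{equ-lower bound on probability xi}--\eqref{equ-p(u,delta) lower bound}. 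Both arguments thus rest on identical ingredients (window population probability, Chernoff tail, and the Appendix \ref{AppendixC} scaling of the deterministic rate expression), but your pointwise-inclusion step makes Markov's inequality and the long asymptotics unnecessary and yields the explicit, stronger estimate $P_{out}(R)\leq (L-1)e^{-\sqrt{L}/4}$ for $L$ large, with $\epsilon$ entering only to absorb the $(1-o(1))$ slack in the deterministic rate bound. The one step worth spelling out when writing this up is the bridge you state in passing: on $G$ the $k$ in-window users occupy consecutive positions in $\b{h}_L^s$, so Lemma \ref{lem-optimal schedule for all one vector} guarantees that Algorithm \ref{algo-scheduling algorithm per slot} returns at least that block's rate, which is what makes the scheduled rate (not merely the rate of some subset) exceed $R$ on $G$.
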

The proof is given in Appendix \ref{AppendixH}.

\subsection{Proofs for the Asymptotic Guarantees}

We are now ready to present the proofs for Theorems \ref{the-Expected achievable rate of scheduling algorithm lower bound} and \ref{the-Expected sum-rate of scheduling algorithm upper bound} which rely on the previous subsections. 

\begin{IEEEproof}[Proof of Theorem \ref{the-Expected achievable rate of scheduling algorithm lower bound}]
%\textcolor{red}{Considering our scheduling problem in \eqref{equ-The optimal schedule policy - single relay}, we have
%\small
%\begin{equation*}
%\begin{aligned}
%&\EX\left[\cR_{ach}^{sch}\right] 
%=\EX\left[\max_{\substack{\b{h}(n) \in \mathcal{H^S}(n), \\ \b{a}(n) \text{ minimize } N, \\ n=1,...,N}} \left\{ \cR(\b{h}(n),\b{a}(n)) \right\} \right] ,	
%&=\EX\left[ \max_{\substack{\b{h}(t) \in \mathcal{H^S}(t) \\ \b{a}(t) \text{ minimize } N \\ t=1,...,N}} \left\{\frac{1}{N}\sum_{t=1}^N \mathds{1}_{\{I\}} \cR(\b{h}(t),\b{a}(t))\right\} \right] ,		
%\end{aligned}
%\end{equation*}
%\normalsize
%where the above expression can be simplified if one considers Lemma \ref{lem-completion time}. Namely, if in the first $L-1$ slot $k=[\ln{L}]$ users are scheduled for transmission with coefficient vectors chosen from $\b{a^{\{1,k\}}}$ and in the $L-th$ slot a single user with an innovative unit vector is scheduled then at time slot $N=L$ (which is the minimal value of $N$) full rank of $\b{A}$ is obtained. Thus, the double optimization problem as defined in Definition \ref{def-The optimal schedule policy - single relay} reduces to the maximization of the achievable rate in each slot.  We have, }
The expected rate that can be achieved in each slot by Algorithm \ref{algo-scheduling algorithm for all transmission} is as follows. Note that the time index is omitted due to the independence between the slots.
\begin{align*}
\EX\left[\cR_{ach}^{sch}\right]&=\EX\left[\max_{\cS_k}\left\{\max_{\b{a}(\cS_k) \in\mathbb{Z}^k \backslash \{\textbf{0}\}}\left\{\cR(\b{h}(\cS_k),\b{a}(\cS_k))\right\}\right\}\right]\\
&\overset{(a)}{\geq}\EX\left[\max_{\cS_k}\left\{\max_{\b{a}(\cS_k) \in \b{a^{\{1,k\}}} }\left\{\cR(\b{h}(\cS_k),\b{a}(\cS_k))\right\}\right\}\right]\\
&\overset{(b)}{\geq}\EX\left[\max_{\cS_k}\left\{\cR(|\b{h}(\cS_k)|,\b{1})\right\}\right],
\end{align*}
$(a)$ and $(b)$ follow since we reduce the search domain as explained in \eqref{equ-maximization of sum rate per slot reduced} and \eqref{equ-maximizing the rate for an all one vector}, respectively. This enables the following steps,
\begin{align}\label{equ-expected sum-rate}
	\EX\left[\max_{\cS_k}\left\{\cR(|\b{h}(\cS_k)|,\b{1})\right\}\right] &=\EX\left[\max_{\cS_k}\left\{ \frac{1}{2} \log^+ \left( k- \frac{\text{P}\left(|\b{h}(\cS_k)|^T\b{1}\right)^2}{1+\text{P}\|\b{h}(\cS_k)\|^2} \right)^{-1} \right\}\right] \nonumber\\
	&=\EX\left[\frac{1}{2} \log^+ \left( k- \max_{\cS_k}\left\{ \frac{\text{P}\left(|\b{h}(\cS_k)|^T\b{1}\right)^2}{1+\text{P}\|\b{h}(\cS_k)\|^2} \right\} \right)^{-1}\right] \nonumber\\
	&\overset{(a)}{\geq} \EX\left[\frac{1}{2} \log^+ \left( k-  \frac{\text{P}\left(\b{h'}^T\b{1}\right)^2}{1+\text{P}\|\b{h'}\|^2}  \right)^{-1}\right] \nonumber\\
	&\overset{(b)}{\geq} \frac{1}{2} \log^+ \left( k- \EX\left[ \frac{\text{P}\left(\b{h'}^T\b{1}\right)^2}{1+\text{P}\|\b{h'}\|^2}\right]  \right)^{-1} \nonumber\\
	&= \frac{1}{2} \log^+ \left( k- \EX\left[ \frac{\text{P}k\sqn{h'}\cos^2(\theta')}{1+\text{P}\|\b{h'}\|^2} \right]  \right)^{-1}, 
\end{align}
where $(a)$ is by choosing some specific $\b{h}(\cS_k')$, denoted in short by $\b{h'}$ and $(b)$ follows from Jensen's inequality. 

As section \ref{subsec-Best channel for a fixed a} suggests, the optimal schedule should be a subset of users with a high norm channel vector and a small angle between its channel vector and the corresponding coefficient vector. Thus, define the values $u(L)$ and $\delta(L)$ such that $\b{h'}$ maintains
\begin{equation}\label{equ-the k good users}
u\leq | h_i' | \leq u+\delta, \ \forall i.
\end{equation}
With this definition, we are able to bound the parameters for a good schedule. The values of $u(L)$ and $\delta(L)$ can help tune the norm (by taking a high value of $u$) and the angle with $\b{1}$ (by taking a small value of $\delta$) to attain a high rate (see Section \ref{subsec-Best channel for a fixed a}). 

Let $\xi$ denote the event of having at least $k$ elements in $\b{h}_L(n)$ with values in the interval $[-(u+\delta),-u]\cup[u,u+\delta]$ for all $n$, i.e., \emph{in each of the $L$ slots}. 
%And let us define $P_r(\xi)$ as the probability of having at least $k$ elements in $\b{h}_L$ such that we can find a $\b{h'}$ satisfying the constraint above \emph{in each of the $L$ slots}. 
We thus write the last equation in \eqref{equ-expected sum-rate} as follows, 

\begin{equation}\label{equ-the rate split to probabilities}
\small
\begin{aligned}
		&= \frac{1}{2} \log^+ \left( k- \left(\EX\left[ \frac{\text{P}k\sqn{h'}\cos^2(\theta')}{1+\text{P}\|\b{h'}\|^2} \ \Big| \xi \right] P_r(\xi) +\EX\left[ \frac{\text{P}k\sqn{h'}\cos^2(\theta')}{1+\text{P}\|\b{h'}\|^2} \ \Big| \bar{\xi} \right] (1-P_r(\xi)) \right) \right)^{-1}\\	
		&\geq \frac{1}{2} \log^+ \left( k- \EX\left[ \frac{\text{P}k\sqn{h'}\cos^2(\theta')}{1+\text{P}\|\b{h'}\|^2} \ \Big| \xi \right] P_r(\xi) \right)^{-1}.
		%&= \frac{k}{2} \log^+ \left( k- \EX\left[ \frac{Pk\sqn{h'}\cos^2(\theta')}{1+P\|\b{h'}\|^2} \ \Big| \xi \right]  \right)^{-1}P_r(\xi) + \\
		%&\quad \quad \frac{k}{2} \log^+ \left( k- \EX\left[ \frac{Pk\sqn{h'}\cos^2(\theta')}{1+P\|\b{h'}\|^2} \ \Big| \bar{\xi} \right]  \right)^{-1}(1-P_r(\xi))\\	
		%&\geq \frac{k}{2} \log^+ \left( k- \EX\left[ \frac{Pk\sqn{h'}\cos^2(\theta')}{1+P\|\b{h'}\|^2} \ \Big| \xi \right]  \right)^{-1}P_r(\xi).
\end{aligned}
\end{equation}\normalsize
Under $\xi$, we can lower bound $\sqn{h'}$ and $\cos^2(\theta')$ as follows,
\begin{equation}\label{equ-lower bound on h' and cos}
\begin{aligned}
		&\sqn{h'} \geq ku^2;\\
		&\cos^2(\theta')=\frac{\left( \sum_{i=1}^{k} h_i' \right)^2}{k\sqn{h'}} \geq \frac{k^2u^2}{k^2(u+\delta)^2}=  \frac{u^2}{(u+\delta)^2}.%\frac{1}{1+\frac{2\delta}{u}+\frac{\delta^2}{u^2}}.
\end{aligned}
\end{equation}

The probability $P_r(\xi)$ can be computed and lower bounded as follows. Consider the probability to find in a certain slot $k$ users satisfying \eqref{equ-the k good users}. This probability follows a binomial distribution with probability of success $p(u,\delta)=2(\Phi(u+\delta)-\Phi(u))$ where $\Phi$ is the CDF of the normal distribution. Accordingly, since the $L$ time slots are independent we have,

\begin{equation}\label{equ-lower bound on probability xi}
\begin{aligned}
P_r(\xi)&=\left(\sum_{i=k}^{L} {L \choose i} p(u,\delta)^i(1-p(u,\delta))^{L-i}\right)^L\\
	&=\left(1-\sum_{i=0}^{k-1} {L \choose i} p(u,\delta)^i(1-p(u,\delta))^{L-i}\right)^L\\
	&\geq \left(1-e^{-\frac{1}{2p(u,\delta)}\frac{(Lp(u,\delta)-(k-1))^2}{L}}\right)^L,
\end{aligned}
\end{equation}
where the last row follows from the Chernoff's bound for the lower tail, which requires that $Lp(u,\delta) \geq k-1$.
By setting 
\begin{equation}\label{equ-u(L) and delta(L) values}
u(L) =\sqrt{2\ln\frac{\delta\sqrt{L}}{\sqrt{2 \pi}}}-\delta \quad \text{and} \quad  \delta(L)=\frac{1}{\ln{L}}
\end{equation}
the requirement is satisfied and the probability $P_r(\xi)$ goes to one with $L$. This is since for for $L\geq4$ we have

\begin{equation}\label{equ-p(u,delta) lower bound}
\begin{aligned}
p(u,\delta)&= 2(\Phi(u+\delta)-\Phi(u)) \\
&= \frac{2}{\sqrt{2\pi}} \int_{u}^{u+\delta} e^{-\frac{t^2}{2}}dt\\
&\geq \delta  \frac{2}{\sqrt{2\pi}} e^{-\frac{(u+\delta)^2}{2}}\\
&= \delta  \frac{2}{\sqrt{2\pi}} e^{-\frac{\left(\sqrt{2\ln\frac{\delta\sqrt{L}}{\sqrt{2 \pi}}}-\delta+\delta\right)^2}{2}}\\
&= \frac{2}{\sqrt{L}},
\end{aligned}
\end{equation}
and since, $ \frac{2}{\sqrt{L}} \geq \frac{\lceil\ln{L}\rceil}{L}$ the Chernoff's bound requirement is satisfied when choosing $k$ according to \eqref{equ-the k value}.  
On the other hand, 

\begin{equation}\label{equ-limit of P_r(xi) goes to one}
\begin{aligned}
\lim_{L \rightarrow \infty} P_r(\xi)&
 \geq \lim_{L \rightarrow \infty} \left(1-e^{-\frac{1}{2p(u,\delta)}\frac{(Lp(u,\delta)-(k-1))^2}{L}}\right)^L \\
& \overset{(a)}{\geq} \lim_{L \rightarrow \infty} 1-L e^{-\frac{\sqrt{L}}{4}\frac{(2\sqrt{L}-\lceil\ln{L}\rceil)^2}{L}} \\
%& \geq \lim_{L \rightarrow \infty} 1-Le^{-\frac{1}{4} \left(4 \sqrt{L} -4(\ln{(L)}+1)+\frac{\ln{(L)}+1}{\sqrt{L}}\right)} \\
%& \geq \lim_{L \rightarrow \infty} 1-Le^{-\sqrt{L} +(\ln{(L)}+1)} \\
%& \geq \lim_{L \rightarrow \infty} 1-Le^{-\sqrt{L} +(\ln{(L)}+1)} \\
&=1,
\end{aligned}
\end{equation}
where $(a)$ is due to \eqref{equ-p(u,delta) lower bound} and Bernoulli's inequality which state that, $(1-x)^L\geq 1-Lx$ for $L\geq 1$ and $0 \leq x \leq 1$. 

As $L$ grows, the values of $u(L)$ and $\delta(L)$ are increasing and decreasing, respectively. Specifically, the slow increase of $u(L)$ promises that the channel gains of the scheduled users will increase, while the decrease of $\delta(L)$ improves the match to the all-ones coefficient vector.

Substituting \eqref{equ-lower bound on h' and cos} and \eqref{equ-lower bound on probability xi} in \eqref{equ-the rate split to probabilities} we have, 
\begin{equation}\label{equ-lower bound on the sum-rate final expression} 
\small
\begin{aligned}	
		&\frac{1}{2} \log^+ \left( k- \EX\left[ \frac{\text{P}k\sqn{h'}\cos^2(\theta')}{1+\text{P}\|\b{h'}\|^2} \ \Big| \xi \right] P_r(\xi) \right)^{-1}\\
		&\quad\quad\geq \frac{1}{2}  \log^+ \left( k \left(1-\frac{\text{P}ku^4}{(u+\delta)^2(1+\text{P}ku^2)}\left(1-e^{-\frac{1}{2p(u,\delta)}\frac{(Lp(u,\delta)-(k-1))^2}{L}}\right)^L \right)\right)^{-1}\\
		%&\overset{(a)}{=} \frac{k}{2}  \log^+ \left( k \left(1-\frac{1}{1+\frac{2\delta}{u}+\frac{\delta^2}{u^2}}\frac{1}{\frac{1}{Pku^2}+1}\left(1-o(1)\right) \right)\right)^{-1}\\
		&\quad\quad\overset{(a)}{=} \frac{1}{2}  \log^+ \left( k \left(1-\frac{\text{P}ku^4}{(u+\delta)^2(1+\text{P}ku^2)}\left(1-o(1)\right) \right)\right)^{-1}.
\end{aligned}
\end{equation}
\normalsize

It can be verified (see Appendix \ref{AppendixC}) that the scaling law of \eqref{equ-lower bound on the sum-rate final expression} is indeed $\frac{1}{4}\log{\log{L}}$, which completes the proof. 
\end{IEEEproof}

\begin{IEEEproof}[Proof of Corollary \ref{cor-expected system sum-rate is lower bound using scheduling}]
The system expected sum-rate can be lower bounded as follows,
\begin{equation}
\begin{aligned}
\EX\left[C_{SR}\right] &=\E\left[  \frac{RL}{N} \right]  \\
&\geq \E\left[  \frac{RL}{N} \Big | N=L \right]P_r(N=L) \\
&\overset{(a)}{=}\left( \frac{R(L-1)}{L}+\frac{R^{SISO}}{L}  \right)(1-o(1)) \\
&\geq \frac{R(L-1)}{L} (1-o(1)) \\
&=(\frac{1}{4}-\epsilon)\log{\log{L}} (1-o(1)).
\end{aligned}
\end{equation}
$(a)$ follows due to Theorem \ref{the-completion time} and the fact that Algorithm \ref{algo-scheduling algorithm for all transmission} schedules in the last slot a single user with rate of the SISO channel. The other slots are with rate $R$. Finally, Lemma \ref{lem-outage probability goes to zero} guarantees that if one sets $R=(\frac{1}{4}-\epsilon)\log{\log{L}}$ for any small positive $\epsilon$, an innovative linear combination with rate $R$ would be successfully decoded in each of the first $L-1$ slots with probability that goes to 1 as $L$ grows.
  \end{IEEEproof}

\begin{IEEEproof}[Proof of Theorem \ref{the-Expected sum-rate of scheduling algorithm upper bound}]
In \cite{nazer2016diophantine}, the following universal upper bound on the achievable rate was given
\begin{equation}\label{equ-universal upper bound on the achievable rate}
 	\cR(\b{h},\b{a}^{opt})\leq \frac{1}{2}\log{(1+\text{P}\max_i\{h_i^2\})},
\end{equation}
where $\b{h}$ is any channel vector of dimension $k$ and $\b{a}^{opt}$ is the coefficient vector which maximizes the achievable rate. Considering \eqref{equ- system sum-rate definition}, we have

\begin{align*}
\EX\left[C_{SR}\right] &= \EX \left[ \frac{RL}{N}\right]\\
&\overset{(a)}{\leq} \EX\left[ \min_{n=1,...,N} \cR(\b{h}(n),\b{a}(n)) \right]\\
&\leq \EX\left[ \cR(\b{h},\b{a}^{opt}) \right]\\
&\overset{(b)}{\leq} \EX\left[\frac{1}{2}\log{(1+\text{P}\max_{i:1..k}\{h_{i}^2\})}\right]\\
&\overset{(c)}{\leq} \frac{1}{2}\log{(1+\text{P}\EX\left[\max_{i:1..L}\{h_{Li}^2\}\right])}\\
&\overset{(d)}{=} \frac{1}{2}\log{\left(1+\text{P}\left(2\ln{L}-\ln{\ln{L}}-2\ln{\Gamma\left(\frac{1}{2}\right)} +\frac{\gamma}{2}+o(1)\right)\right)}.
\end{align*}
Where in $(a)$ we set $N=L$ which is the minimal number of transmission slots and in $(b)$ we used the universal upper bound in \eqref{equ-universal upper bound on the achievable rate}. $(c)$ follows due to Jensen's inequality and the consideration of all values of $\b{h}_L$. $(d)$ follows from the asymptotic results for the expectation of the maximum value in a $\chi^2$ $i.i.d.$ random vector of dimension $L$ in the limit of large $L$ \cite[Table 3.4.4]{embrechts2013modelling}. 
It can be verified (see Appendix \ref{AppendixD}) that the scaling law is $O(\frac{1}{2}\log{\log{L}})$, which completes the proof.
\end{IEEEproof}

Corollary \ref{cor-expected system sum-rate is lower bound using scheduling} and Theorem \ref{the-Expected sum-rate of scheduling algorithm upper bound} show that the upper and lower bounds on the expected sum-rate scale as $O(\log{\log{L}})$, which results in Corollary \ref{cor-optimality of algorithm 1}.

\subsection{Distributed Scheduling}\label{sec-Distributed Scheduling}
The asymptotic guarantees presented in the previous subsection rely on the existence of $k$ users in a predefined interval of channel gain values. In fact, Algorithm \ref{algo-scheduling algorithm per slot} only presents an efficient way for the scheduler to find these channel gains and the corresponding users in each slot. Considering a single relay model, the obvious choice for the scheduler is the relay itself, since the relay has the channel vector from all users to it; in addition, the relay performs the decoding of the received messages. Nevertheless, one can devise a distributed threshold-based algorithm where only users with channel gain values that exist in a predefined interval can transmit in a certain slot. The interval upper and lower values should be computed in a way that promises, with very high probability, that there exist exactly $k$ users in it and that these users are a good choice for the all one coefficient vector. To analyze such a distributed scheme and compute the specific interval, one can emend the analysis from the proof of Theorem \ref{the-Expected achievable rate of scheduling algorithm lower bound} to fit these requirements. Alternately, one can use other statistical tools, such as extreme value theory and point process analysis, similar to other works that considered distributed threshold-based scheduling algorithms (e.g., \cite{kampeas2014capacity, shmuel2018performance}). We note that the last schedule where the scheduler picks the user which completes the full rank of the decoding matrix can be replaced by a few more random schedules as done in the first $L-1$ steps.

\section{Multiple Relays}\label{sec-Multiple Relays}

In the general model of $M\geq1$ relays and $L$ users, each relay $m$ sees a different channel vector $\b{h}_m$ between itself and all the users. Hence, a certain schedule of users which is good for a certain relay may not be the right choice for the other relays. However, we show that, as $L$ grows, there exist $k$ users that are simultaneously good for all relays.

Recall the scheduling problem given in \eqref{equ-Problem statment with scheduling}, where we wish to maximize $C_{SR}$. In each slot, $M$ coefficient vectors are added to the decoding matrix simultaneously. These vectors depend on the subset of users $\cS_k$ which was scheduled, making the scheduling problem very hard to solve and analyze. It requires taking into consideration all the $M$ channel vectors a specific schedule compels and find the appropriate coefficient vectors while ensuring that the maximal DoF from these vectors is obtained. In addition, the rates must also satisfy \eqref{equ-rate restriction for successful decoding}, so that the decoding matrix can be solved. Nevertheless, we show that if one employs a scheduling scheme which follows Algorithm \ref{algo-scheduling algorithm for all transmission} guidelines, good schedules can be found in each slot. Moreover, assuming the number of relays $M=O(1)$, and using heuristics solutions for a fast completion time, simulations show that indeed a pre-log gain of $M$ to the expected system sum-rate can be attained. We conclude the discussion by showing why this is possible for small $M$, yet fails when $M$ is too large.

The suggested scheme searches for $k$ users, all having channel gains satisfying 
\begin{equation}\label{equ-the k good users M relays}
u\leq | h_{m} | \leq u+\delta, \quad 1\leq m \leq M.
\end{equation}
We then use coefficient vectors from $\b{a^{\{1,k\}}}$. We do not present the actual scheme to find these simultaneously good $k$ users, however, Lemma \ref{lem-probability of finding $k$ users multiple relays} bellow asserts that such users can indeed be found\footnote{One can extend the polynomial-time algorithm presented in this work for a single relay, which scans the channel vector a single relay sees in ascending order, to a parallel scan of all channel vectors simultaneously, while searching the simultaneously good $k$ users. We omit the details.}.

We start with the following corollary, which states that one can always attain the performance of a single relay system.
\begin{corollary} \label{cor- lower bound on sum-rate for a general system}
The asymptotic expected system sum-rate for a general system consisting $M\geq1$ relays is lower bounded by the following,
\begin{equation}
\EX\left[C_{SR}^{M}\right] \geq \EX\left[C_{SR}\right].
\end{equation}
\end{corollary}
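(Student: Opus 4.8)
The plan is to prove the bound by \emph{emulation}: a system with $M\geq 1$ relays can always reproduce the behavior of the single-relay system by running the single-relay policy on one designated relay and discarding the decoded combinations of the remaining $M-1$ relays. Since the single-relay performance is then achievable as one particular admissible policy inside the richer $M$-relay design space, the $M$-relay sum-rate can only be at least as large, and no new machinery (in particular, not yet Lemma \ref{lem-probability of finding $k$ users multiple relays}) is required.

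First I would fix, without loss of generality, relay $1$ and run exactly the single-relay scheduling policy of Algorithm \ref{algo-scheduling algorithm for all transmission} on its channel vector $\b{h}_1$. In each slot this produces a subset $\cS_k$ together with a coefficient vector $\b{a}_1(\cS_k)$ drawn from $\b{a^{\{1,k\}}}$, exactly as in Section \ref{Sec-Scheduling_in_CF}, and the destination retains only the combinations decoded by relay $1$, ignoring relays $2,\ldots,M$. I would then observe that under this emulation the two quantities defining $C_{SR}$ collapse to their single-relay counterparts. The decoding matrix $\b{A}_N$ is assembled solely from relay $1$'s coefficient vectors, so the first slot at which $\mathrm{rank}(\b{A}_N)=L$ has exactly the distribution of $N$ analyzed in Theorem \ref{the-completion time}; and because only relay $1$ contributes, the restriction \eqref{equ-rate restriction for successful decoding} reduces to $R<\min_{n}\cR(\b{h}_1(n),\b{a}_1(n))$, which is precisely the single-relay rate constraint governed by Theorem \ref{the-Expected achievable rate of scheduling algorithm lower bound}. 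Hence this particular $M$-relay policy attains expected system sum-rate equal to the $\EX[C_{SR}]$ established in Corollary \ref{cor-expected system sum-rate is lower bound using scheduling}.

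Finally, since $\EX[C_{SR}^{M}]$ is achievable by any admissible $M$-relay policy, in particular by the emulation just described, it is lower bounded by the value that emulation delivers, which gives $\EX[C_{SR}^{M}]\geq \EX[C_{SR}]$.

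The only step that needs care — and the closest thing to an obstacle here — is the modeling justification that the destination is permitted to discard the outputs of the idle relays, so that the minimum in \eqref{equ-rate restriction for successful decoding} ranges only over the single contributing relay rather than over all $M$ of them; without this observation one might fear that extra relays tighten the rate constraint and \emph{lower} $R$. Once this is granted the inequality is immediate. I would emphasize that this corollary only furnishes a floor: the genuinely interesting question of whether the additional relays yield a strict pre-log gain of $M$ is a separate matter, addressed by the simultaneously-good-users scheme of \eqref{equ-the k good users M relays} and Lemma \ref{lem-probability of finding $k$ users multiple relays}, and is not needed for the present bound.
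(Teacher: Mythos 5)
Your proposal is correct and follows essentially the same route as the paper: the paper also proves this corollary by running Algorithm \ref{algo-scheduling algorithm for all transmission} according to a single (leading) relay, so that the $M$-relay system emulates the single-relay one. The only cosmetic difference is that the paper lets the remaining relays' decodable equations optionally join $\b{A}$ (which can only shorten the completion time), whereas you discard them outright; your explicit remark that discarding is what keeps the rate constraint in \eqref{equ-rate restriction for successful decoding} from ranging over all $M$ relays is exactly the right justification, and is in fact stated more carefully than in the paper.
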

The above follows immediately if one employs Algorithm \ref{algo-scheduling algorithm for all transmission} on the general system while the scheduling decision is taken according to a single leading relay. Any contribution by the other relays can only reduce the completion time by adding innovative coefficient vectors to $\b{A}$. Note that the choice of the leading relay can be optimized.
\begin{remark}[Scheduling according to a single relay]
Since the scheduled users are expected to transmit with a relatively high rate, which is tailored to the channel vector the leading relay sees, relays $2,...,M$ are not expected to contribute much. A possible adaptation that will enable the other relays to contribute is to employ superposition coding. In superposition coding, one can divide a message into several messages, each belonging to a different level. By doing so, the leading relay, which can decode equations with a high rate, will be able to decode all parts. while the other relays, which have a lower rate, would be able to decode only messages belonging to certain levels. In CF this may be done by superimposing lattice codes that are scaled according to the power constraints. Further explanation and results can be found in \cite{nazer2011compute,nazer2009structured}.   
\end{remark}

To show that, in fact, good schedules can be found for all relays simultaneously, return to the analysis of the scheme suggested in Section \ref{Sec-Scheduling_in_CF}. Since the channels between the relays and the users are $i.i.d.$, the probability of finding at least $k$ users which are good \emph{simultaneously for all relays} is $P_r(\xi)^M$. This probability goes to 1 as $L$ grows with the same choice of $u(L)$ and $\delta(L)$ given in Theorem \ref{the-Expected achievable rate of scheduling algorithm lower bound}. Specifically, we have the following. 

\begin{lemma}\label{lem-probability of finding $k$ users multiple relays}
The probability of finding $k$ users in each transmission slot, such that their channel gains \emph{for all relays} are in $[-(u+\delta),-u]\cup[u,u+\delta]$, tends to $1$ as $L$ grows. That is,
\begin{equation}
\lim_{L\rightarrow \infty} P_r(\xi)^M=1.
\end{equation}
\end{lemma}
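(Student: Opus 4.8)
The plan is to reduce the multiple-relay statement to the single-relay estimate already established inside the proof of Theorem \ref{the-Expected achievable rate of scheduling algorithm lower bound}. First I would recall that, with the choices $u(L)$ and $\delta(L)$ from \eqref{equ-u(L) and delta(L) values}, the single-relay event $\xi$ — having at least $k$ users whose channel gains lie in $[-(u+\delta),-u]\cup[u,u+\delta]$ in each of the slots — satisfies $\lim_{L\to\infty}P_r(\xi)=1$; this is exactly \eqref{equ-limit of P_r(xi) goes to one}, and nothing in that derivation is tied to which relay's channel vector is used, so it holds verbatim for the channel vector $\b{h}_m$ of any relay $m$.

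Next, for each relay $m\in\{1,\dots,M\}$ I would let $\xi_m$ denote the analogue of $\xi$ for $\b{h}_m$. By the system-model assumption that the channel vectors of the relays are independent of one another, the events $\xi_1,\dots,\xi_M$ are mutually independent, and since the entries of every $\b{h}_m$ are i.i.d. standard normal they are identically distributed with $P_r(\xi_m)=P_r(\xi)$. Hence the probability that every relay sees at least $k$ suitable users factorizes as $P_r\!\left(\bigcap_{m=1}^{M}\xi_m\right)=\prod_{m=1}^{M}P_r(\xi_m)=P_r(\xi)^M$, which is precisely the quantity in the lemma.

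Finally I would pass to the limit while keeping the elementary style of \eqref{equ-limit of P_r(xi) goes to one}, using a union bound rather than continuity of $x\mapsto x^M$: $1-P_r(\xi)^M=P_r\!\left(\bigcup_{m=1}^{M}\bar{\xi}_m\right)\le\sum_{m=1}^{M}P_r(\bar{\xi}_m)=M\bigl(1-P_r(\xi)\bigr)$, so that $P_r(\xi)^M\ge 1-M\bigl(1-P_r(\xi)\bigr)$. Because $M=O(1)$ is fixed while $1-P_r(\xi)\to 0$, the right-hand side tends to $1$, yielding $\lim_{L\to\infty}P_r(\xi)^M=1$.

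The step that carries the real weight is the factorization $P_r(\bigcap_m\xi_m)=P_r(\xi)^M$ together with the requirement that $M$ be constant. The genuine subtlety is that this bound counts, per relay, $k$ users that are good \emph{for that relay}, and only independence across relays makes the product clean; insisting instead on a single common set of $k$ users good for all relays simultaneously would replace the per-user success probability $p(u,\delta)$ by $p(u,\delta)^M\approx(2/\sqrt{L})^M$, whose expected count $L\,p(u,\delta)^M$ ceases to grow once $M\ge 2$. This is exactly why the argument is tight only for small (constant) $M$ and degrades when $M$ is allowed to scale with $L$, consistent with the concluding discussion of the section.
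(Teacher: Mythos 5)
Your proof of the displayed limit is correct and takes essentially the same route as the paper: the paper's own proof simply writes $P_r(\xi)^M$ as the single-relay binomial-tail expression of \eqref{equ-lower bound on probability xi} raised to the power $LM$ and asserts the limit is $1$, while you factorize $P_r(\xi)^M=P_r\left(\bigcap_{m=1}^{M}\xi_m\right)$ using independence of the relays' channel vectors and finish with a union bound — the same reduction to \eqref{equ-limit of P_r(xi) goes to one}, made quantitative. One remark: your restriction to fixed $M$ is more conservative than needed; since the bound underlying \eqref{equ-limit of P_r(xi) goes to one} gives $1-P_r(\xi)\le Le^{-\sqrt{L}+(\ln L+1)}$, your own inequality $1-P_r(\xi)^M\le M\left(1-P_r(\xi)\right)$ yields the limit even for $M=O(L)$, which is exactly what the paper emphasizes at the end of its proof.

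The subtlety you flag in your closing paragraph is genuine, and it is a defect of the paper's statement rather than of your argument. The lemma's verbal claim (and the scheme described around \eqref{equ-the k good users M relays}) asks for one \emph{common} set of $k$ users whose gains lie in the window for all $M$ relays, whereas $P_r(\xi)^M=P_r\left(\bigcap_m\xi_m\right)$ is only the probability that each relay \emph{separately} sees some $k$ good users, possibly different ones per relay; the common-set event is strictly smaller. Quantitatively, a given user is simultaneously good with probability $p(u,\delta)^M\approx(2/\sqrt{L})^M$, so the expected number of simultaneously good users is about $2^M L^{1-M/2}$, which is bounded or vanishing for \emph{every constant} $M\ge 2$ — not only when $M$ scales with $L$, as your last sentence suggests — while $k=\lceil\ln L\rceil+1\to\infty$; by Markov's inequality the common-set probability therefore tends to $0$, not $1$. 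So the displayed equation is true but does not by itself support the lemma's verbal claim; repairing that would require either widening the window (choosing $u,\delta$ so that $Lp(u,\delta)^M\gg k$) or reading the lemma, as you effectively do, as a statement about per-relay events. Note finally that the paper's concluding discussion of degradation for large $M$ concerns rank accumulation in $\b{A}_n$ (the coupon-collector effect), which is a separate phenomenon from the one you identified.
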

\begin{IEEEproof}
A simple extension of \eqref{equ-lower bound on probability xi} results in
\begin{equation}
\begin{aligned}
	\lim_{L\rightarrow \infty} P_r(\xi)^M
	=\lim_{L\rightarrow \infty} \left(\sum_{i=k}^{L} {L \choose i} p(u,\delta)^i(1-p(u,\delta))^{L-i}\right)^{LM}
	%&\overset{(a)}{\geq}\lim_{L\rightarrow \infty} \left(\sum_{i=k}^{L} {N \choose i} p(u,\delta)^i(1-p(u,\delta))^{L-i}\right)^{LM}\\
	%&\geq \lim_{L\rightarrow \infty} \left(1-e^{-\frac{1}{2p(u,\delta)}\frac{(Lp(u,\delta)-(k-1))^2}{L}}\right)^{LM}\\
	%&\overset{(b)}{\geq} \lim_{L \rightarrow \infty} 1-LMe^{-\sqrt{L} +(\ln{(L)}+1)}\\
	=1.
\end{aligned}
\end{equation}
%$(a)$ follows since $N\leq L$, as Corollary \ref{cor- lower bound on sum-rate for a general system} suggests, and $(b)$, follows from \eqref{equ-limit of P_r(xi) goes to one} with $p(u,\delta)$ as in \eqref{equ-p(u,delta) lower bound} and $k=\lceil\ln{L}\rceil+1$. 
We emphasize that the probability tends to one even for $M=O(L)$.
\end{IEEEproof}

Lemma \ref{lem-probability of finding $k$ users multiple relays} promises that asymptotically with $L$, a favourable group of users exists, and therefore the expected achievable rate, in each slot and at each relay, can be lower bounded by the result in Theorem \ref{the-Expected achievable rate of scheduling algorithm lower bound}. This implies that the rate of each linear combination scales as $O(\log{\log{L}})$ which is the optimal scaling law (Theorem \ref{the-Expected sum-rate of scheduling algorithm upper bound}).

\begin{figure}[t]
    \centering
        \includegraphics[width=0.44\textwidth]{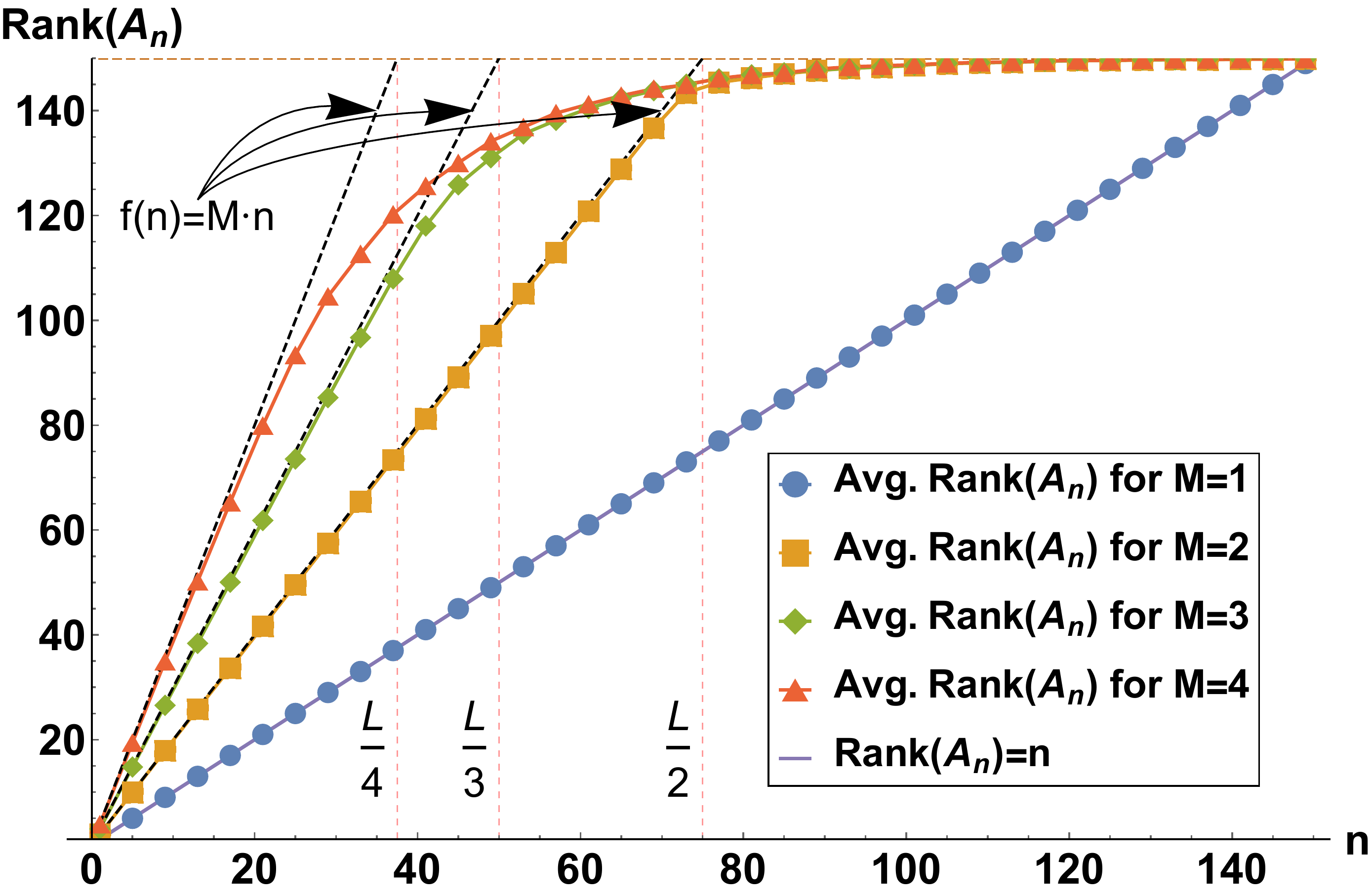}
    \caption{The average rank of $\b{A}_n$ as a function of the transmission slots $n$ for different numbers of relays $M=\{1,2,3,4\}$ and for $L=150$ transmitters.}
       \label{fig-rank_A_as_func_n_M_relays}
\end{figure}

On the other hand, the crux of the suggested scheme lies in the following. Since there are $M$ relays, in each slot $n$, $M$ coefficient vectors are added to $\b{A}_n$. Denote by $\b{A}'_n$ the sub-matrix added to  $\b{A}_n$ at slot $n$. The rows of $\b{A}'_n$ are $(\b{a}_m(\cS_k(n)))_\Uparrow^L$ for $m=1,...,M$, respectively. That is, $\b{A}_n=(\b{A}'_1,\b{A}'_2,...,\b{A}'_n)$. 

On average, the rank of each such $\b{A}'_n$ is at least $\min\{M,k\}-2$. This follows from a result given in \cite[Theorem 4]{lucani2009random} which states that the expected number of random vectors that are needed to be drawn uniformly from $\F_2^k$ to attain $k$ DoF is at most $k+2$. Since the rows of $\b{A}'_n$ were chosen uniformly from $\b{a^{\{1,k\}}}$ before the expansion $(\cdot)_\Uparrow^L$, this result applies. However, the average rank of $\b{A}_n$ does not grow linearly at rate $M$ with $n$ since, in each slot $n$, the relays choose the coefficient vectors with respect to $\cS_k(n)$. Thus, every $M$ consecutive rows in the matrix $\b{A}_n$ have the same $L-k$ columns with zeroes. That is, although at each slot at least $\min\{M,k\}-2$ independent vectors are added to $\b{A}_n$, at a certain slot, the vectors of $\b{A}'_n$ can be linearly dependent with those collected thus far in other $\b{A}'_j$ for $j<n$. This behavior is attributed to the "curse" of the coupon collector problem and is depicted in Figure \ref{fig-rank_A_as_func_n_M_relays}. Specifically, Figure \ref{fig-rank_A_as_func_n_M_relays} plots the average rank of $\b{A}_n$ as a function of $n$ for different values of $M$ and $L=150$. One can observe that the average rank of $\b{A}_n$ start growing linearly with the straight line $f(n)=Mn$. That is, at the beginning, in each slot $M$ DoF are added to $\b{A}_n$. However, at a certain point, it becomes harder to find new innovative coefficient vectors, and the curves start to flatten. Fortunately, for $M<<L$, the curves continue at the maximal rate almost up to a full rank. As $M$ grows, the curves flattens earlier.

Following the scheduling scheme suggested in Section \ref{Sec-Scheduling_in_CF}, the scheduler can thus randomly schedule users and collect the coefficient vectors up until recognizing a substantial decrease in the rate of the rank of $\b{A}_n$. At this point, the scheduler lets the users which their messages complete a full rank of $\b{A}_n$, transmit separately (similar to the final SISO step in Algorithm \ref{algo-scheduling algorithm for all transmission}). As Figure \ref{fig-rank_A_as_func_n_M_relays} depicts, the number of such slots is small when $M=O(1)$. Thus, the completion time of the transmission of all messages is roughly $N=\frac{L}{M-o(1)}=\frac{L}{M}+o(L)$ which results with a pre-log gain to the sum-rate. That is,
\begin{equation*}
\begin{aligned}
C_{SR}^{M}&=\frac{R(L-o(L))+R^{SISO}o(L)}{N}\\
&=M \frac{R(L-o(L))+R^{SISO}o(L)}{L+o(L)}\\
&\geq \frac{L-o(L)}{L+o(L)} M R= M R(1-o(1)),
\end{aligned}
\end{equation*}
Note that the transmission rate $R$ is set such that a relay will be able to decode successfully any linear combination when $L$ is large. That is, a rate that will satisfy zero outage probability at the limit of large $L$. Thus, the probability for outage as given in \eqref{equ-outage definition} should be modified to the following,

\small
\begin{equation}\label{equ-outage definition multiple relays}
P_{out}(R) = P_r\left(\min_{m:1,..,M}\left\{\min_{n:1,...,L}\{ \cR(\b{h}_m^*(n),\b{a}_m^*(n))\}\right\} < R \right).
\end{equation}
\normalsize
Where setting $R=(\frac{1}{4}-\epsilon)\log{\log{L}}$ for any small positive $\epsilon$ will satisfy 
\begin{equation}
\lim_{L \rightarrow \infty} P_{out}\left( R\right) =0.
\end{equation}

This can be proved if one follows the proof's steps of Lemma \ref{lem-outage probability goes to zero} while remembering that the values of $u$ and $\delta$ were chosen such that in each time slot a simultaneously good $k$ users for all the relays can be found with probability that goes to one with $L$. Specifically, one can use the lower bounds given in \eqref{equ-lower bound on h' and cos} in the achievable rate of each relay and thus remove the minimum on $m$.

\section{Conclusions}
The CF coding scheme provides a well understood framework for handling multiple transmissions, and decode them as linear combinations of messages using lattice codes. Accordingly, this enables a good utilization of the shared medium and can be employed in various communication systems. In this work, we have studied the impact of the number of transmitters in a CF system which was not addressed up until today. Specifically, when the receiver decodes linear combinations of \emph{all transmitted messages}, the number of transmitters heavily affects its ability to provide high computation rates. The analysis and results presented in the first part of this paper provide a good understanding on this effect and show that  as the number of transmitters grows, CF becomes degenerated, in the sense that the receiver prefers to decode a single message instead of a non-trivial linear combination. As a direct outcome, the computation rate and the system's sum-rate tend to zero since all other users are treated as noise. Thus, one is forced to restrict the number of transmitting users, i.e., use scheduling, in order to maintain the superior abilities CF provides.

In the second part of this work, we formulated the scheduling problem for a large scale CF system and presented a scheduling paradigm. The paradigm is based on the fact that one can always (with very high probability) find users that, if scheduled together, achieve the scaling law of the optimal computation rate. Moreover, the search of this simultaneously good users can be done in polynomial time considering the properties of what is considered as a good linear combination for CF. The analysis and results used in this work can be further applied to communication problems for which there is freedom of choosing the transmitters and thus the decoded linear combinations.

% if have a single appendix: %\appendix[Proof of the Zonklar Equations] or \appendix   for no appendix heading do not use \section anymore after \appendix, only \section* is possibly needed use appendices with more than one appendix then use \section to start each appendix you must declare a \section before using any \subsection or using \label (\appendices by itself starts a section numbered zero.)

\appendices

\section{Proof for Theorem \ref{the-Achievable is going to zero}} % Main appendix title

\label{AppendixA} % Change X to a consecutive letter; for referencing this appendix elsewhere, use \ref{AppendixX}

Define,
\begin{equation}\label{equ-Partition of channel vectors}
\begin{aligned}
& H_e=\{ \b{h}\in \R^L | \argmin_{\textbf{a}\in\mathbb{Z}^L \backslash \{\textbf{0}\}} f(\textbf{a})=\b{e}_i, \text{ for some } i \} \\
& H_{\overline{e}}=\{ \b{h}\in \R^L | \argmin_{\textbf{a}\in\mathbb{Z}^L \backslash \{\textbf{0}\}} f(\textbf{a}) \neq \b{e}_i, \text{ for all } i\}.
\end{aligned}
\end{equation}

That is, with probability $P_r(\b{e})$ a relay sees a channel vector $\b{h}\in H_e$ and with probability $P_r(\overline{\b{e}})$ a relay sees a channel vector $\b{h}\in H_{\overline{e}}$. Accordingly, the complementary CDF of the achievable rate can be expressed as, 
\begin{equation}\label{equ-The achievable rate partitioned to two terms}
P_r\left( \cR(\b{h},\b{a})>\epsilon \right)=P_r(\b{e}) P_r\left(  \cR(\b{h},\b{e}) >\epsilon |\b{h}\in H_{e}\right)+P_r(\overline{\b{e}}) P_r\left( \cR(\b{h},\b{a})>\epsilon | \b{h} \in H_{\overline{e}} \right).
\end{equation}

We treat the two terms above separately. The first term describes the case where the maximizing coefficient vector is some unit vector, while the second term describes the case where the maximizing coefficient vector may be any integer vector excluding the unit vectors. Note that we write $\b{e}$ to express the event that \emph{some} unit vector $\b{e}_i$ was chosen. We show that both terms tends to zero. Starting with the second term,

\small
\begin{equation}
\begin{aligned}
 P_r(\overline{\b{e}}) P_r\left( \cR(\b{h},\b{a})>\epsilon | \b{h} \in H_{\overline{e}} \right)&=P_r(\overline{\b{e}})P_r\left( \frac{1}{2}\log^+\left( \frac{1+\text{P}\|\b{h}\|^2}{ \|\b{a}\|^2+ \text{P}(\|\b{a}\|^2\|\b{h}\|^2 - (\b{h}^T\b{a})^2)}\right)>\epsilon \Big|\b{h} \in H_{\overline{e}} \right)\\
 &\leq P_r(\overline{\b{e}})P_r\left(  \frac{1}{2}\log^+\left( 1+\text{P}\|\b{h}\|^2 \right)>\epsilon \Big|\b{h} \in H_{\overline{e}}\right),
 \end{aligned}
\end{equation}
\normalsize
where the inequality is due to Cauchy-Schwarz and since $\sqn{a}>1$.

We upper bound the above using the Markov and Jensen's inequalities as follows, 
\begin{equation}\label{equ-Rate after Markov and Jensen's}
\begin{aligned}
P_r(\overline{\b{e}})P_r\left(  \frac{1}{2}\log^+\left( 1+\text{P}\|\b{h}\|^2 \right)>\epsilon \Big|\b{h} \in H_{\overline{e}}\right)&\leq P_r(\overline{\b{e}})\frac{1}{\epsilon} \mathop{\mathbb{E}} \left[\frac{1}{2}\log^+\left( 1+\text{P}\|\b{h}\|^2 \right) \Big|\b{h} \in H_{\overline{e}}  \right]\\
&\leq P_r(\overline{\b{e}})\frac{1}{2\epsilon}\log^+\left( 1+\text{P} \mathop{\mathbb{E}} \left[ \|\b{h}\|^2\big|\b{h} \in H_{\overline{e}} \right] \right).
\end{aligned}
\end{equation}
To further upper bound the above note that
\begin{equation}
\begin{aligned}
\E \left[ \|\b{h}\|^2\right]&=\E \left[ \|\b{h}\|^2\big|\b{h} \in H_{\overline{e}}\right]P_r(\overline{\b{e}})+\E \left[ \|\b{h}\|^2\big|\b{h} \in H_{e}\right]P_r(\b{e})\\
&\geq \E \left[ \|\b{h}\|^2\big|\b{h} \in H_{\overline{e}}\right]P_r(\overline{\b{e}}),
\end{aligned}
\end{equation}
which results with
\begin{equation}
\mathop{\mathbb{E}} \left[ \|\b{h}\|^2\big|\b{h} \in H_{\overline{e}}\right] \leq  \frac{L}{P_r(\overline{\b{e}})}
\end{equation}
since the channel vector $\b{h}$ is a Gaussian random vector and its squared norm follows the $\chi^2_L$ distribution.

Applying the expectation's upper bound in \eqref{equ-Rate after Markov and Jensen's} we have,
\begin{equation}\label{equ-Rate after Markov and Jensen's and upper bound on expectation}
\begin{aligned}
 P_r(\overline{\b{e}})\frac{1}{2\epsilon}\log^+\left( 1+\text{P} \mathop{\mathbb{E}} \left[ \|\b{h}^{\overline{e}}\|^2\big|\overline{\b{e}}\right] \right)&\leq P_r(\overline{\b{e}}) \frac{1}{2\epsilon}  \log^+\left( 1+\frac{\text{P}L}{P_r(\overline{\b{e}})}  \right)\\
&\overset{(a)}{\leq} P_r(\overline{\b{e}})\frac{1}{2\epsilon}  \sqrt{2 \frac{\text{P}L}{P_r(\overline{\b{e}})}}\\
&= \frac{1}{2\epsilon}  \sqrt{2\text{P}LP_r(\overline{\b{e}})}  \\
&\overset{(b)}{\leq} \frac{1}{2\epsilon}  \sqrt{8\text{P}^2L^3e^{-LE_2(L)}}\\
&= \frac{1}{\epsilon} \text{P}L\sqrt{2L}e^{-LE_3(L)},
\end{aligned}
\end{equation}  
where $(a)$ is due to $\log(1+x)\leq \sqrt{2x}$, $(b)$ follows from Theorem \ref{the-Probability for having a unit vector as the maximaizer over all other vectors} and $E_3(L)=\frac{1}{4}(1-\frac{1}{L})\log{2}$.
Considering the above, as $L$ grows, the second summand in \eqref{equ-The achievable rate partitioned to two terms} tends to zero for all $\epsilon>0$.

We are left with the first term in \eqref{equ-The achievable rate partitioned to two terms} which can be upper bounded as follows
\begin{equation}\label{equ-first term expression} 
\begin{aligned}
P_r(\b{e}) P_r\left(  \cR(\b{h},\b{e}) >\epsilon | \b{h}\in H_{e}\right) &\leq P_r\left(  \cR(\b{h},\b{e}) >\epsilon | \b{h}\in H_{e}\right)\\
& \overset{(a)}{=} P_r\left( \frac{1}{2} \log^+\left( \frac{1+\text{P}\|\b{h}\|^2}{1+\text{P}(\|\b{h}\|^2-h_i^2)} \right)>\epsilon \Big| \b{h}\in H_{e}\right)\\
& = P_r\left( \frac{1}{2} \log^+\left( \frac{1}{1- \frac{\text{P}h_i^2}{1+\text{P}\|\b{h}\|^2}} \right)>\epsilon  \Big| \b{h}\in H_{e}\right)\\
& \leq P_r\left( \frac{1}{2} \log^+\left( \frac{1}{1- \frac{h_i^2}{\|\b{h}\|^2}} \right)>\epsilon  \Big| \b{h}\in H_{e}\right)\\
& \overset{(b)}{=} P_r\left( \frac{h_i^2}{\|\b{h}\|^2} >1-\frac{1}{2^{2\epsilon}}  \Big| \b{h}\in H_{e}\right)\\
& \overset{(c)}{=} \frac{1}{\epsilon'} \E\left[\frac{h_i^2}{\|\b{h}\|^2} \Big| \b{h}\in H_{e}\right].
\end{aligned}
\end{equation}
where in $(a)$ we set the specific unit vector $\b{e}_i$ the relay has chosen for the channel vector $\b{h}\in H_{e}$. Note that the unit vector that maximizes the rate corresponds to the strongest transmitter, i.e., given $\b{h}$, the relay chooses $i=\argmax_i{\b{h}}$. In $(b)$, since the argument of the $\log$ is greater than one $\log^+(\cdot)= \log(\cdot)$. $(c)$ follows from Markov's inequality where we denote $\epsilon'=1-\frac{1}{2^{2\epsilon}}$. 

To further upper bound the above note that
\begin{equation}
\begin{aligned}
\E \left[ \frac{h_i^2}{\|\b{h}\|^2}\right]&=\E \left[ \frac{h_i^2}{\|\b{h}\|^2}\Big|\b{h} \in H_{\overline{e}}\right]P_r(\overline{\b{e}})\E \left[ \frac{h_i^2}{\|\b{h}\|^2}\Big|\b{h} \in H_{e}\right]P_r(\b{e})\\
&\geq \E \left[ \frac{h_i^2}{\|\b{h}\|^2}\Big|\b{h} \in H_{e}\right]P_r(\b{e}),
\end{aligned}
\end{equation}
which results with
\begin{equation}
\E \left[ \frac{h_i^2}{\|\b{h}\|^2}\Big|\b{h} \in H_{e}\right] \leq  \frac{1}{P_r(\b{e})}\E \left[ \frac{h_i^2}{\|\b{h}\|^2}\right].
\end{equation}
Theorem \ref{the-Probability for having a unit vector as the maximaizer over all other vectors} implies that $P_r(\b{e})\rightarrow 1$ as $L\rightarrow \infty$. In addition the numerator scales like $O(\ln{L})$ as was shown in the proof of Theorem \ref{the-Expected sum-rate of scheduling algorithm upper bound} and the denominator scales as $O(L)$ since $\|\b{h}\|^2$ follows the chi-squared distribution with $L$ DoF. Therefore, it is clear that as $L$ grows the first summand in \eqref{equ-The achievable rate partitioned to two terms} tends to zero for all $\epsilon'>0$. For an exact bound recall that 
\begin{equation*}
\E \left[ h_i^2\frac{1}{\|\b{h}\|^2}\right]=\E \left[ h_i^2\right]\E \left[\frac{1}{\|\b{h}\|^2}\right]+\text{cov}\left( h_i^2,\frac{1}{\|\b{h}\|^2}\right).
\end{equation*} 
Since the covariance is negative we have
\begin{equation}
\begin{aligned}
\E \left[ \frac{h_i^2}{\|\b{h}\|^2}\right] &\leq \EX  \left[(h_i)^2\right] \EX \left[ \frac{1}{\|\b{h}\|^2} \right]\\
&\leq \left(2\ln{L}+\frac{\gamma}{2}+o(1)\right)\frac{1}{L-2},
\end{aligned}
\end{equation}
where in the last line the expectations are for the maximum of a chi-squared r.v. \cite[Table 3.4.4]{embrechts2013modelling} and an inverse chi-squared with $L$ DoF r.v., respectively. As $L$ grows, the above tends to zero.

\section{Proof for Corollary \ref{the-Sum rate is going to zero}} % Main appendix title

\label{AppendixB} % Change X to a consecutive letter; for referencing this appendix elsewhere, use \ref{AppendixX}

Considering Equation \eqref{equ-The achievable rate partitioned to two terms} in Appendix \ref{AppendixA}, we can upper bound $P_r(C_{SR}>\epsilon)$ as follows,
\begin{equation}\label{equ-The sum rate partitioned to two terms}
\begin{aligned}
 P_r(C_{SR}>\epsilon)&\triangleq P_r\left(\frac{R  L}{N}>\epsilon\right)\\
 &\leq P_r\left(\frac{L}{N} \min_{n=1,...,N}\min_{m} \cR(\b{h}_m(n),\b{a}_m(n))>\epsilon\right)\\
 &\leq P_r\left(\frac{L}{N} \cR(\b{h}_m(n),\b{a}_m(n))>\epsilon\right)\\
 &= P_r(\b{e}) P_r\left( \frac{L}{N} \cR(\b{h}_m,\b{e}) >\epsilon | \b{h}\in H_{e}\right)+P_r(\overline{\b{e}}) P_r\left(\frac{L}{N} \cR(\b{h}_m,\b{a}_m)>\epsilon |\b{h} \in H_{\overline{e}} \right).
\end{aligned}
\end{equation}

Following the same steps as the proof for Theorem \ref{the-Achievable is going to zero} and remembering that $\frac{L}{N}$ can be upper bounded by $M$, we have 

\begin{equation}\label{equ-second term limit sum-rate}
\begin{aligned}
&\lim_{L \rightarrow \infty}  P_r(\overline{\b{e}}) P_r\left(\frac{L}{N} \cR(\b{h}_m,\b{a}_m)>\epsilon |\b{h} \in H_{\overline{e}} \right)\leq \lim_{L \rightarrow \infty} \frac{1}{\epsilon} \frac{\text{P}ML\sqrt{L}}{\sqrt{0.5}}e^{-LE_3(L)} =0
\end{aligned}
\end{equation}
for all $\epsilon>0$.
Similarly, the first term is upper bounded by

\begin{equation}
\begin{aligned}
 \lim_{L \rightarrow \infty} P_r(\b{e}) P_r\left( \frac{L}{N} \cR(\b{h}_m,\b{e}) >\epsilon | \b{h} \in H_{e}\right)&\leq \lim_{L \rightarrow \infty} P_r\left(\frac{1}{2} M \log^+\left( \frac{1}{1- \frac{(h_i)^2}{\|\b{h}\|^2}} \right)>\epsilon |\b{h} \in H_{e}\right)\\
 &=0,
 \end{aligned}
\end{equation}
provided that $M$ is fixed.

\section{Proof for lemma \ref{lem-outage probability goes to zero}}
\label{AppendixH}
\begin{align*}
\lim_{L \rightarrow \infty} P_{out}\left( R\right)
&=\lim_{L \rightarrow \infty} P_r\left(\min_{n}\{ \cR(\b{h}^*(n),\b{a}^*(n))\} < \left(\frac{1}{4}-\epsilon\right)\log{\log{L}} \right)\\
&=\lim_{L \rightarrow \infty} P_r\left(\min_{n}\left\{  \frac{1}{2} \log^+ \left( k- \frac{\text{P}({\b{h}^*(n)}^T\b{1})^2}{1+\text{P}\|\b{h}^*(n)\|^2} \right)^{-1} \right\}  < \left(\frac{1}{4}-\epsilon\right)\log{\log{L}} \right) \\
&\overset{(a)}{\leq} \lim_{L \rightarrow \infty} P_r\left(  \frac{1}{2} \log \left( k- \min_{n}\left\{ \frac{\text{P}({\b{h}^*(n)}^T\b{1})^2}{1+\text{P}\|\b{h}^*(n)\|^2}\right\} \right)^{-1}   < \left(\frac{1}{4}-\epsilon\right)\log{\log{L}} \right) \\
%&=\lim_{L \rightarrow \infty} P_r\left( \left( k- \min_{n}\left\{ \frac{P({\b{h}^*(n)}^T\b{1})^2}{1+P\|\b{h}^*(n)\|^2}\right\} \right)^{-1}  \right.\\
%&\left. \quad\quad\quad\quad\quad\quad\quad\quad\quad\quad\quad\quad\quad\quad\quad < (\log{L})^{\left(\frac{1}{2}-2\epsilon\right)} \right) \\
&=\lim_{L \rightarrow \infty} P_r\left( \left( k- \min_{n}\left\{ \frac{\text{P}({\b{h}^*(n)}^T\b{1})^2}{1+\text{P}\|\b{h}^*(n)\|^2}\right\} \right)  > (\log{L})^{-\left(\frac{1}{2}-2\epsilon\right)} \right) \\
& \overset{(b)}{\leq}  \lim_{L \rightarrow \infty} \frac{\E\left[ k- \min_{n}\left\{ \frac{\text{P}({\b{h}^*(n)}^T\b{1})^2}{1+\text{P}\|\b{h}^*(n)\|^2}\right\} \right]}{(\log{L})^{-\left(\frac{1}{2}-2\epsilon\right)}}\\
& = \lim_{L \rightarrow \infty} k (\log{L})^{\left(\frac{1}{2}-2\epsilon\right)}\left( 1- \E\left[\min_{n}\left\{ \frac{\text{P}\sqn{\b{h}^*(n)}\cos^2(\theta(n))}{1+\text{P}\|\b{h}^*(n)\|^2}\right\} \right]\right)\\
& \leq \lim_{L \rightarrow \infty} k (\log{L})^{\left(\frac{1}{2}-2\epsilon\right)}\left( 1- \E\left[\min_{n}\left\{ \frac{\cos^2(\theta(n))}{\frac{1}{\text{P}\sqn{\b{h}^*(n)}}+1}\right\} \Bigg| \xi \right]P_r(\xi)\right)\\
& \overset{(c)}{\leq}\lim_{L \rightarrow \infty}  k (\log{L})^{\left(\frac{1}{2}-2\epsilon\right)} \left( 1- \E\left[\min_{n}\left\{ \frac{\frac{u^2}{(u+\delta)^2}}{\frac{1}{\text{P}ku^2}+1}\right\} \Bigg| \xi \right]P_r(\xi)\right)\\
& = \lim_{L \rightarrow \infty} k (\log{L})^{\left(\frac{1}{2}-2\epsilon\right)} \left( 1- \frac{\frac{u^2}{(u+\delta)^2}}{\frac{1}{\text{P}ku^2}+1}P_r(\xi)\right)\\
%& = \lim_{L \rightarrow \infty} k (\log{L})^{\left(\frac{1}{2}-2\epsilon\right)} \left( \frac{1+ Pku^2 -\frac{u^2}{(u+\delta)^2}Pku^2P_r(\xi)}{Pku^2+1}\right)\\
& \leq \lim_{L \rightarrow \infty} k (\log{L})^{\left(\frac{1}{2}-2\epsilon\right)} \left( \frac{1+ \text{P}ku^2 -\frac{u^2P_r(\xi)}{(u+\delta)^2}\text{P}ku^2}{\text{P}ku^2}\right)\\
& = \lim_{L \rightarrow \infty}  k(\log{L})^{\left(\frac{1}{2}-2\epsilon\right)} \left( \frac{1} {\text{P}ku^2} +1 -\frac{u^2P_r(\xi)}{(u+\delta)^2}\right)\\
& =\lim_{L \rightarrow \infty}  \frac{(\log{L})^{\left(\frac{1}{2}-2\epsilon\right)}}{\text{P}u^2}+ \lim_{L \rightarrow \infty}  k(\log{L})^{\left(\frac{1}{2}-2\epsilon\right)} \left( 1-\frac{u^2P_r(\xi)}{(u+\delta)^2}\right)\\
& \overset{(d)}{=} \lim_{L \rightarrow \infty} k(\log{L})^{\left(\frac{1}{2}-2\epsilon\right)} \left( 1-\frac{u^2P_r(\xi)}{(u+\delta)^2}\right)\\
& = \lim_{L \rightarrow \infty}   \frac{k(\log{L})^{\left(\frac{1}{2}-2\epsilon\right)}}{(u+\delta)^2}\left( u^2(1-P_r(\xi))+2u\delta+\delta^2 \right)\\
& \overset{(e)}{=} \lim_{L \rightarrow \infty}   \frac{k(\log{L})^{\left(\frac{1}{2}-2\epsilon\right)}}{(u+\delta)^2} \cdot \lim_{L \rightarrow \infty}\left( u^2(1-P_r(\xi)) \right)\\
&  \overset{(f)}{\leq}  \lim_{L \rightarrow \infty}   \frac{k(\log{L})^{\left(\frac{1}{2}-2\epsilon\right)}}{(u+\delta)^2} \lim_{L \rightarrow \infty}\left( u^2\left(1-\left(1-Le^{-\sqrt{L} +(\ln{(L)}+1)} \right)\right) \right)\\
& \overset{(g)}{\leq} 0 \cdot \lim_{L \rightarrow \infty}\left( u^2Le^{-\sqrt{L} +(\ln{(L)}+1)}\right)\\
%& === \lim_{L \rightarrow \infty}  (\log{L})^{\left(\frac{1}{2}-2\epsilon\right)} \left( \frac{1} {Pu^2} +k \frac{2u\delta+\delta^2+u^2(1-P_r(\xi))}{(u+\delta)^2}\right)\\
%& \leq \lim_{L \rightarrow \infty}  (\log{L})^{\left(\frac{1}{2}-2\epsilon\right)} \left( \frac{1} {Pu^2} +k \frac{3u\delta}{u^2}\right)\\
%& \leq \lim_{L \rightarrow \infty}  (\log{L})^{\left(\frac{1}{2}-2\epsilon\right)} \left( \frac{1} {u} + \frac{3}{u}\right)\\
%& \leq \lim_{L \rightarrow \infty}  (\log{L})^{\left(\frac{1}{2}-2\epsilon\right)} \left( \frac{4} {u} \right)\\
%& \leq \lim_{L \rightarrow \infty}  (\log{L})^{\left(\frac{1}{2}-2\epsilon\right)} \left( \frac{4} {\sqrt{2\ln\frac{2\delta\sqrt{L}}{\sqrt{2 \pi}}}} \right)\\
%& \leq \lim_{L \rightarrow \infty}  (\log{L})^{\left(\frac{1}{2}-2\epsilon\right)} \left( \frac{4} {\sqrt{\ln{L}-\ln{2 \pi}}} \right)\\
&=0.
\end{align*}
$(a)$ follows since $\log^+(x)\geq\log(x)$. In $(b)$ we used the Markov inequality and in $(c)$ we bound the norm and the cosine as was done in the proof of Theorem \ref{the-Expected achievable rate of scheduling algorithm lower bound}. Note that the values of $u$ and $\delta$ were chosen to fit all slots and in particular to the slot with the minimal achievable rate. $(d)$, $(e)$ and $(g)$ follows since $u=O(\sqrt{\log{L}})$ and $\delta=O(\frac{1}{\log{L}})$. In $(f)$ we used \eqref{equ-limit of P_r(xi) goes to one}.

\section{Proof for the scaling laws of Theorem \ref{the-Expected achievable rate of scheduling algorithm lower bound}}

\label{AppendixC}
To prove that the scaling law is $\frac{1}{4}\log{\log{L}}$, we show that the limit of the division of the lower bound with $\frac{1}{4}\log{\log{L}}$ equals 1 as follows,
\small
\begin{equation}
\begin{aligned}
	\lim_{L\rightarrow \infty} \frac{\frac{1}{2}  \log^+ \left( k \left(1-\frac{\text{P}ku^4}{(u+\delta)^2(1+\text{P}ku^2)}\left(1-g(L)\right)^L \right)\right)^{-1}}{\frac{1}{4}\log{\log{L}}}
	= \lim_{L\rightarrow \infty} \frac{ -2 \log^+ \left( k \left(1-\frac{\text{P}ku^4\left(1-g(L)\right)^L}{(u+\delta)^2(1+\text{P}ku^2)} \right)\right)}{\log{\log{L}}}\\
\end{aligned}
\end{equation}
where $g(L)=e^{-\frac{1}{2p(u,\delta)}\frac{(Lp(u,\delta)-(k-1))^2}{L}}$ which we expressed as $o(1)$ in the theorem. We start with an upper bound on the ratio.
\begin{align*}
	\lim_{L\rightarrow \infty} \frac{ -2 \log^+ \left( k \left(1-\frac{\text{P}ku^4\left(1-g(L)\right)^L}{(u+\delta)^2(1+\text{P}ku^2)} \right)\right)}{\log{\log{L}}}
	%&\textcolor{red}{\leq \lim_{L\rightarrow \infty} \frac{ -2 \log^+ \left( k \left(1-\frac{\text{P}ku^4}{(u+\delta)^2(1+\text{P}ku^2)} \right)\right)}{\log{\log{L}}}}\\
	&\leq \lim_{L\rightarrow \infty} \frac{ -2 \log^+ \left( k \left(1-\frac{u^2}{(u+\delta)^2} \right)\right)}{\log{\log{L}}}\\
	%&\textcolor{red}{\leq \lim_{L\rightarrow \infty} \frac{ -2 \log^+ \left( k \left(\frac{2u\delta +\delta^2}{(u+\delta)^2} \right)\right)}{\log{\log{L}}}}\\
	%&\textcolor{red}{\leq \lim_{L\rightarrow \infty} \frac{ -2 \log^+ \left( k \left(\frac{\delta(u+\delta)}{(u+\delta)^2} \right)\right)}{\log{\log{L}}}}\\
	&\leq \lim_{L\rightarrow \infty} \frac{ -2 \log^+ \left( k \left(\frac{\delta}{(u+\delta)} \right)\right)}{\log{\log{L}}}\\
	&\overset{(a)}{=} \lim_{L\rightarrow \infty} \frac{ -2 \log^+ \left( \frac{3}{(u+\delta)}\right)}{\log{\log{L}}}\\
	&\overset{(b)}{=} \lim_{L\rightarrow \infty} \frac{ 2 \log^+ \left( \sqrt{2\ln{\frac{ \delta\sqrt{L}}{\sqrt{2\pi}}}} \right)}{\log{\log{L}}}\\
	%&\textcolor{red}{= \lim_{L\rightarrow \infty} \frac{ \log^+ \left( 2\ln{\frac{ \delta\sqrt{L}}{\sqrt{2\pi}}} \right)}{\log{\log{L}}}}\\
	&= \lim_{L\rightarrow \infty} \frac{ \log^+ \left(2\ln{\delta}+\ln{L}-\ln{2\pi}  \right)}{\log{\log{L}}}\\
	&\leq \lim_{L\rightarrow \infty} \frac{ \log^+ \left(\ln{L}  \right)}{\log{\log{L}}} =1.\\
\end{align*}
In $(a)$ we used our choice of $\delta=\frac{1}{\ln{L}}$ and $k=\lceil\ln{L}\rceil+1$ to upper bound $k\delta<3$. In $(b)$ we set $u+\delta=\sqrt{2\ln\frac{\delta\sqrt{L}}{\sqrt{2 \pi}}}$. The lower bound is as follows,
\begin{align*}
	 &\lim_{L\rightarrow \infty} \frac{ -2 \log^+ \left( k \left(1-\frac{\text{P}ku^4\left(1-g(L)\right)^L}{(u+\delta)^2(1+\text{P}ku^2)} \right)\right)}{\log{\log{L}}}\\
	& \geq \lim_{L\rightarrow \infty} \frac{ -2 \log^+ \left( k \left(1-\frac{ku^4\left(1-g(L)\right)^L}{(u+\delta)^2(1+ku^2)} \right)\right)}{\log{\log{L}}}\\
	&= \lim_{L\rightarrow \infty} \frac{ -2 \log^+ \left( k \left(1-\frac{ku^4\left(1-g(L)\right)^L}{u^2+2u\delta+\delta^2+ku^4+2ku^3\delta+ku^2\delta^2} \right)\right)}{\log{\log{L}}}\\
	&  \overset{(c)}{\geq} \lim_{L\rightarrow \infty} \frac{ -2 \log^+ \left( k \left(1-\frac{ku^4\left(1-g(L)\right)^L}{u^2+2u^2\delta k+u^2\delta^2k+ku^4+2ku^3\delta+ku^2\delta^2} \right)\right)}{\log{\log{L}}}\\
	& = \lim_{L\rightarrow \infty} \frac{ -2 \log^+ \left( k \left(1-\frac{ku^2\left(1-g(L)\right)^L}{1+2k\delta(1+\delta+u)+ku^2} \right)\right)}{\log{\log{L}}}\\
	& = \lim_{L\rightarrow \infty} \frac{ -2 \log^+ \left( k \left(\frac{1+2k\delta(1+\delta+u)+ku^2\left(1-\left(1-g(L)\right)^L\right)}{1+2k\delta(1+\delta+u)+ku^2} \right)\right)}{\log{\log{L}}}\\
	& \overset{(d)}{\geq} \lim_{L\rightarrow \infty} \frac{ -2 \log^+ \left( \frac{7(1+\delta+u)+ku^2\left(1-\left(1-g(L)\right)^L\right)}{u^2} \right)}{\log{\log{L}}}\\
	& = \lim_{L\rightarrow \infty} \frac{ 2 \log^+ \left(u^2 \right)}{\log{\log{L}}} -\lim_{L\rightarrow \infty} \frac{ 2 \log^+ \left(7(1+\delta+u)+ku^2\left(1-\left(1-g(L)\right)^L\right) \right)}{\log{\log{L}}}  \\
	& \overset{(e)}{=} 2 -\lim_{L\rightarrow \infty} \frac{ 2 \log^+ \left(7(1+\delta+u)+ku^2\left(1-\left(1-g(L)\right)^L\right) \right)}{\log{\log{L}}}  \\
	& = 2 -\lim_{L\rightarrow \infty} \frac{ 2 \log^+ \left(7\left(1+\delta+u \right)\right)}{\log{\log{L}}}  - \lim_{L\rightarrow \infty} \frac{ 2 \log^+ \left(1+\frac{ku^2}{7(1+\delta+u)}\left(1-\left(1-g(L)\right)^L\right) \right)}{\log{\log{L}}}\\
	&\overset{(f)}{\geq} 2-1 - \lim_{L\rightarrow \infty} \frac{ \frac{2ku^2}{7(1+\delta+u)}\left(1-\left(1-g(L)\right)^L \right)}{\log{\log{L}}}\\
	&\geq 1 - \lim_{L\rightarrow \infty} \frac{ \frac{2k(u+\delta)^2}{7(\delta+u)}\left(1-\left(1-g(L)\right)^L \right)}{\log{\log{L}}}\\
	&\geq 1 - \lim_{L\rightarrow \infty} \frac{ k(u+\delta)\left(1-\left(1-g(L)\right)^L \right)}{\log{\log{L}}}\\
	&\geq 1 - \lim_{L\rightarrow \infty} \frac{ 4(\ln{L})^2\left(1-\left(1-g(L)\right)^L \right)}{\log{\log{L}}}\\
	&\geq 1 - \lim_{L\rightarrow \infty}  4(\ln{L})^2\left(1-\left(1-g(L)\right)^L \right)\\
	&\overset{(g)}{\geq} 1 - \lim_{L\rightarrow \infty}  4(\ln{L})^2\left(1-\left(1-Le^{-\sqrt{L} +(\ln{(L)}+1)}\right) \right)\\
        &= 1 - \lim_{L\rightarrow \infty}  4(\ln{L})^2Le^{-\sqrt{L} +(\ln{(L)}+1)}\\
        &=1-0=1
	\end{align*}
\normalsize
$(c)$ follows since, for large enough $L$, $u^2k>1$ and $uk>1$. In $(d)$, we used again the bound $k\delta<3$, the fact that $\delta>0$ and that for large enough $L$, $u>0$. $(e)$ follows from the following,
\begin{align*}
 \lim_{L\rightarrow \infty} \frac{ 2 \log^+ \left(u^2 \right)}{\log{\log{L}}}
 &=\lim_{L\rightarrow \infty} \frac{ 4 \log^+ \left(\sqrt{2\ln\frac{\delta\sqrt{L}}{\sqrt{2 \pi}}} -\delta \right)}{\log{\log{L}}}\\
 &=\lim_{L\rightarrow \infty} \frac{ 4 \log^+ \left(\sqrt{ \ln L +2\ln \delta -\ln 2 \pi} -\delta \right)}{\log{\log{L}}}\\
 &=2.
\end{align*}
In $(f)$, the third term can be upper bounded using the relation of $\log{(1+x)\leq x}$ for $x>0$ and the second term follows from the following,
\begin{align*}
\lim_{L\rightarrow \infty} \frac{ 2 \log^+ \left(7\left(1+\delta+u \right)\right)}{\log{\log{L}}} 
&=\lim_{L\rightarrow \infty} \frac{ 2 \log^+ \left(1+\delta+u \right)}{\log{\log{L}}} \\
&=\lim_{L\rightarrow \infty} \frac{ 2 \log^+ \left(1+\sqrt{2\ln\frac{\delta\sqrt{L}}{\sqrt{2 \pi}}} \right)}{\log{\log{L}}} \\
&=1.
\end{align*}
Finally, in $(g)$ we used equation \eqref{equ-limit of P_r(xi) goes to one}. The correctness of these limits was verified also on Mathematica.

\section{Proof for the scaling laws of Theorem \ref{the-Expected sum-rate of scheduling algorithm upper bound} } 

\label{AppendixD}

To prove that the scaling law is $\frac{1}{2}\log{\log{L}}$, we show that the limit of the division of the lower bound with $\frac{1}{2}\log{\log{L}}$ equals 1 as follows,
\small
\begin{equation}
\begin{aligned}
	&\lim_{L\rightarrow \infty} \frac{\frac{1}{2}\log{\left(1+\text{P}\left(2\ln{L}-\ln{\ln{L}}-2\ln{\Gamma\left(\frac{1}{2}\right)}+\frac{\gamma}{2}\right)\right)}}{\frac{1}{2}\log{\log{L}}}.
	%&\lim_{L\rightarrow \infty} \frac{\log{\left(1+\text{P}\left(2\ln{L}-\ln{\ln{L}}-2\ln{\Gamma\left(\frac{1}{2}\right)}+\frac{\gamma}{2}\right)\right)}}{\log{\log{L}}}\\
\end{aligned}
\end{equation}
We start with an upper bound on the ratio.
\begin{equation}
\begin{aligned}
	\lim_{L\rightarrow \infty} \frac{\log{\left(1+\text{P}\left(2\ln{L}-\ln{\ln{L}}-2\ln{\Gamma\left(\frac{1}{2}\right)}+\frac{\gamma}{2}\right)\right)}}{\log{\log{L}}}
	%&\textcolor{red}{\leq \lim_{L\rightarrow \infty} \frac{\log{\left(\text{P}+\text{P}\left(2\ln{L}-\ln{\ln{L}}-2\ln{\Gamma\left(\frac{1}{2}\right)}+\frac{\gamma}{2}\right)\right)}}{\log{\log{L}}}}\\
	&\leq \lim_{L\rightarrow \infty} \frac{\log{\left(\text{P}\left(1+2\ln{L}-\ln{\ln{L}}-2\ln{\Gamma\left(\frac{1}{2}\right)}+\frac{\gamma}{2}\right)\right)}}{\log{\log{L}}}\\
	%&\textcolor{red}{\leq \lim_{L\rightarrow \infty} \frac{\log{\left(1+2\ln{L}-\ln{\ln{L}}-2\ln{\Gamma\left(\frac{1}{2}\right)}+\frac{\gamma}{2}\right)}}{\log{\log{L}}}}\\
	&\leq \lim_{L\rightarrow \infty} \frac{\log{\left(1+2\ln{L}+\frac{\gamma}{2}\right)}}{\log{\log{L}}}=1.
\end{aligned}
\end{equation}
The lower bound is as follows,
\begin{equation}
\begin{aligned}
	\lim_{L\rightarrow \infty} \frac{\log{\left(1+\text{P}\left(2\ln{L}-\ln{\ln{L}}-2\ln{\Gamma\left(\frac{1}{2}\right)}+\frac{\gamma}{2}\right)\right)}}{\log{\log{L}}}
	&\geq \lim_{L\rightarrow \infty} \frac{\log{\left(2\ln{L}-\ln{\ln{L}}-2\ln{\Gamma\left(\frac{1}{2}\right)}+\frac{\gamma}{2}\right)}}{\log{\log{L}}}\\
	%&\textcolor{red}{\geq \lim_{L\rightarrow \infty} \frac{\log{\left(2\ln{\ln{L}}-\ln{\ln{L}}-2\ln{\Gamma\left(\frac{1}{2}\right)}+\frac{\gamma}{2}\right)}}{\log{\log{L}}}}\\
	&\geq \lim_{L\rightarrow \infty} \frac{\log{\left(\ln{\ln{L}}-2\ln{\Gamma\left(\frac{1}{2}\right)}\right)}}{\log{\log{L}}}=1.
\end{aligned}
\end{equation}

% use section* for acknowledgment
\section*{Acknowledgment}

The authors would like to thank Or Ordentlich for his contribution in Lemma \ref{lem-optimal schedule for all one vector}.

\bibliographystyle{IEEEtran}
\bibliography{Bibliography}

% biography section
% 
% If you have an EPS/PDF photo (graphicx package needed) extra braces are needed around the contents of the optional argument to biography to prevent the LaTeX parser from getting confused when it sees the complicated \includegraphics command within an optional argument. (You could create your own custom macro containing the \includegraphics command to make things
% simpler here.) \begin{IEEEbiography}[{\includegraphics[width=1in,height=1.25in,clip,keepaspectratio]{mshell}}]{Michael Shell} or if you just want to reserve a space for a photo:

%\begin{IEEEbiography}{Michael Shell}
%Biography text here.
%\end{IEEEbiography}

% if you will not have a photo at all:
%\begin{IEEEbiographynophoto}{John Doe}
%Biography text here.
%\end{IEEEbiographynophoto}

% insert where needed to balance the two columns on the last page with
% biographies
%\newpage

%\begin{IEEEbiographynophoto}{Jane Doe}
%Biography text here.
%\end{IEEEbiographynophoto}

% You can push biographies down or up by placing
% a \vfill before or after them. The appropriate
% use of \vfill depends on what kind of text is
% on the last page and whether or not the columns
% are being equalized.

%\vfill

% Can be used to pull up biographies so that the bottom of the last one
% is flush with the other column.
%\enlargethispage{-5in}

\end{document}